\title{Quantum error-correcting codes and their geometries}
\authors[S. Ball, A. Centelles and F. Huber]{
Simeon Ball\thanks{The first author acknowledges the support of the Spanish Ministry of Science and Innovation grants MTM2017-82166-P and PID2020-113082GB-I00 funded by MCIN/AEI/10.13039/501100011033.}, Aina Centelles
and Felix Huber\thanks{The third author acknowledges the support of the
Spanish MINECO (Severo Ochoa SEV-2015-0522),  
Fundaci\'o Cellex and Mir-Puig, 
Generalitat de Catalunya (SGR 1381 and CERCA Programme), 
and the European Union under Horizon2020 (PROBIST 754510).}}
\address[simeon@ma4.upc.edu]  {Simeon Ball,
   Departament de Matem\`atiques, 
Universitat Polit\`ecnica de Catalunya, 
M\`odul C3, Campus Nord,
Carrer Jordi Girona 1-3,
08034 Barcelona, Spain}
\address[aina.centelles@estudiant.upc.edu]  {Aina Centelles,
Facultat de Matem\`atiques, 
Universitat Polit\`ecnica de Catalunya, 
Carrer de Pau Gargallo, 14,
08028 Barcelona, Spain}
\address[felix.huber@icfo.eu]{Felix Huber,
ICFO – The Institute of Photonic Sciences,
Mediterranean Technology Park, 
Avinguda Carl Friedrich Gauss, 3,
08860 Castelldefels (Barcelona), Spain}
\theoremstyle{plain}
\newtheorem{theorem}{Theorem}[section]
\newtheorem{lemma}[theorem]{Lemma}
\newtheorem{rprob}{Research Problem}
\theoremstyle{definition}
 \newtheorem{prop}[theorem]{Proposition}
\theoremstyle{remark}
\newtheorem{example}[theorem]{Example}
\newcommand{\s}{\sigma}
\newcommand{\ot}{\otimes}
\newcommand{\one}[0]{\mathds{1}}
\newcommand{\HH}{\mathcal{H}}
\newcommand{\QQ}{\mathcal{Q}}
\newcommand{\XX}{\mathcal{X}}
\newcommand{\YY}{\mathcal{Y}}
\newcommand{\overbar}[1]{\mkern 1.5mu\overline{\mkern-1.5mu#1\mkern-1.5mu}\mkern 1.5mu}
\DeclareMathOperator{\wt}{wt}
\DeclareMathOperator{\supp}{supp}
\begin{document}

\begin{abstract}
This is an expository article aiming to introduce the reader to the underlying mathematics and geometry of quantum error correction. Information stored on quantum particles is subject to noise and interference from the environment. Quantum error-correcting codes allow the negation of these effects in order to successfully restore the original quantum information. We briefly describe the necessary quantum mechanical background to be able to understand how quantum error-correction works. We go on to construct quantum codes: firstly qubit stabilizer codes, then qubit non-stabilizer codes, and finally codes with a higher local dimension. We will delve into the geometry of these codes. This allows one to deduce the parameters of the code efficiently, deduce the inequivalence between codes that have the same parameters, and presents a useful tool in deducing the feasibility of certain parameters. We also include sections on quantum maximum distance separable codes and the quantum MacWilliams identities.
\end{abstract}

\begin{classification}
 Primary: 81-08; Secondary: 94B60.
\end{classification}

\begin{keywords}
Quantum error-correcting codes.
\end{keywords}

\tableofcontents 

\

\

We have used various sources in the preparation of this article, principally Gottesman \cite{Gottesman2009, GottesmanThesis}, Glynn et al \cite{GGMG} and Ketkar et al \cite{KKKS2006}. The most original parts of these notes are Section~\ref{unionstabilizers} and Section~\ref{sectionMDS}. Section~\ref{sectionquDit} is based on Ketkar et al \cite{KKKS2006} but massaged so that appears as a straightforward generalisation of the qubit case of Section~\ref{sectionqubit}. Although the main results of Section~\ref{sectiongeometry} are from Glynn et al \cite{GGMG}, in a deviation from their approach we have chosen to prove these results without using the ${\mathbb F}_4$ trick, which we do not consider until later in Section~\ref{quDitconstructions}. The interested reader is referred to the books by Sakurai~\cite{Sakurai1994} and Nielsen \& Chuang \cite{NielsenChuang2000} for standard treatments of quantum mechanics and quantum information theory, to the book by Haroche \& Raimond~\cite{HarocheRaimond2006} for a thorough treatment of current experiments in quantum mechanics, and to the book by Aaronson~\cite{Aaronson2013} for further connections to mathematics, computer science, physics, and philosophy. For those uninitiated in quantum mechanics or quantum computing, we strongly recommend the delightful mnemotic essay on quantum computing by Matuschak and Nielsen at \url{https://quantum.country/qcvc}.

\section{Quantum codes}

\subsection{Introduction} \label{sect:intro}

A {\em qubit} is a two-state or two-level quantum-mechanical system. For example, 
the intrinsic angular momentum ({\em spin}) of an electron is such a system. 
It can only take two values when measured in arbitrary spatial direction, 
say by measuring the electrons deflection when passing by an inhomogeneous magnetic field. 
The two corresponding spin-states are commonly referred to as as 
``spin up'' and ``spin down'' states with respect to that direction.
Another example is the polarization of light. 
Here the two states can be taken to be vertically and horizontally polarized light; 
another choice is light that is left circularly and right circularly polarized.
In general, a continuum of different photon polarizations are possible.
Yet only two distinct states are observed when e.g. putting beamsplitters or polarization filters 
in the path of a light beam.

This raises the question: why are only ever two discrete values 
corresponding to two discrete states observed, 
if electrons and photons can take on a continuum of possible 
spin-directions or polarizations?
The answer lies with what measurements on quantum systems reveal. 
It turns out that for a two-state quantum-mechanical system, 
any individual measurements can only ever reveal the answer to a binary question.
In other words, the measurement indicates in which of two mutually exclusive states 
the qubit can be found after the measurement. 
Thus while qubits can take on a continuity of states and a continuity of measurements can be performed, 
only two-valued results can ever be obtained. Thus the notion of a qubit as a {\em quantum bit}.
We will not dwell on the strangeness of quantum mechanics further, the interested reader is referred 
to discussions of the Stern-Gerlach and double-slit experiments
such as found in the books by Sakurai~\cite{Sakurai1994} and Haroche \& Raimond~\cite{HarocheRaimond2006}~\footnote{For a visualisation of these experiments, see 
\url{http://toutestquantique.fr/en/spin/} and \url{http://toutestquantique.fr/en/duality/}}.

In mathematical terms a qubit is represented by a unit vector in 
${\mathbb C}^2$. The spin up and spin down (or any other choice of a pair of physically completely distinguishable states) 
are represented by an orthonormal basis $\ket{0}$ and $\ket{1}$.
The notation $\ket{0}$ is a shorthand for the vector 
$\begin{bmatrix}
1 \\ 
0                                                       
\end{bmatrix}
$
and $\ket{1}$ 
stands for 
$\begin{bmatrix}
0 \\ 1                                                       
\end{bmatrix}
$.
The two {\em kets} $\ket{0}$ and $\ket{1}$ are also known as the {\em computational basis} vectors.

Consider now the state 
\begin{equation}
 \ket{\psi} = \frac{1}{\sqrt{2}}(\ket{0} + \ket{1})=\frac{1}{\sqrt{2}}\begin{bmatrix}
1 \\ 
1                                                       
\end{bmatrix}.
\end{equation}
While $\ket{\psi} \in {\mathbb C}^2$ represents a physically unique state, 
it is, upon measurement in the
spin-up -- spin-down direction, found in either of these two directions
with equal probability. 
Sometimes this situation is referred to as the system being ``in two states simultaneously''.
A more accurate description is that the system is ``in {\em superposition} of spin-up and spin-down'',
or in other words, the system is correctly described as a {\em linear combination} of spin-up and spin-down.

A typical qubit reads
$$
\ket{\alpha}=\alpha_0\!\ket{0}+\alpha_1\!\ket{1}.
$$

As usual, $\overline{z}$ is the complex conjugate of the complex number $z$. When measured, the qubit is with probability $\overline{\alpha_0}\alpha_0$ found in state~$\ket{0}$ (``spin-up'') and with probability $\overline{\alpha_1}\alpha_1$ found in state~$\ket{1}$ (``spin down''). Since the sum of these two probabilities must be one, we have that for a qubit
\begin{equation}\label{eq:normalization}
\overline{\alpha_0}\alpha_0+\overline{\alpha_1}\alpha_1=1.
\end{equation}

The ``ket'' notation $\ket{\alpha}$ is used for a column vector, whilst the ``bra'' notation $\bra{\alpha}$ is used for a row vector whose coordinates are the complex conjugates of the coordinates of $\ket{\alpha}$. Thus, the ``bra'' $\bra{\alpha}$ is a linear form. The {\em inner product} or ``bra-ket'' on ${\mathbb C}^2$ is defined as
$$
\bra{\alpha}\ket{\beta}=\overline{\alpha_0}\beta_0+\overline{\alpha_1}\beta_1.
$$
The normalisation condition in Eq.~\eqref{eq:normalization} then reads as $\braket{\alpha}{\alpha} = 1$,
and qubits are represented by complex vectors in $\mathbb C^2$ of unit length.

A {\em unitary transformation} of ${\mathbb C}^2$ is given by a non-singular $2 \times 2$ matrix $U$ which preserves this inner product, so
$$
\bra{U\alpha}\ket{U\beta}=\bra{\alpha}\ket{\beta},
$$
for all $\bra{\alpha}$ and $\ket{\beta}$. The set of such unitaries forms the special unitary group SU(2).

In particular, 
$$
\bra{U\alpha}\ket{U\alpha}=\bra{\alpha}\ket{\alpha}=1.
$$

The matrix
$$
U=\left(\begin{array}{cc} 0 & -i \\ i & 0 \end{array} \right)
$$
is an example of a unitary transformation since
$$
\bra{U\alpha}\ket{U\beta}=(\overline{-i\alpha_1} \bra{0}+\overline{i\alpha_0} \bra{1})(-i\beta_1 \ket{0}+i\beta_0 \ket{1})
$$
$$
=\overline{i\alpha_0} (i\beta_0)+\overline{-i\alpha_1}(-i\beta_1)=\bra{\alpha}\ket{\beta}.
$$
Note that $\{\ket{0},\ket{1}\}$ is an orthonormal basis, so 
$$
\bra{0}\ket{0}=\bra{1}\ket{1}=1
$$ 
and 
$$
\bra{0}\ket{1}=\bra{1}\ket{0}=0.
$$

The {\em Hermitian conjugate} $M^{\dagger}$ of the linear operator $M$ is the operator which satisfies
$$
\bra{M\psi}\ket{\phi}=\langle \psi|M^{\dagger}\phi\rangle.
$$
An operator $M$ is {\em Hermitian} if $M=M^{\dagger}$. In matrix terms this is equivalent to the conjugate transpose being the same as the matrix itself. For example,
$$
\left(\begin{array}{cc} 1 & 2+i \\ 2-i & 2 \end{array} \right)
$$
defines a Hermitian operator on ${\mathbb C}^2$.

Let $M$ be a linear operator defined on a complex space with orthonormal basis $B$. The {\em trace} of $M$ is defined as
$$
\mathrm{tr}(M)=\sum_{\ket{\psi} \in B} \bra{\psi} M \ket {\psi}.
$$
We can easily prove that the trace of an operator does not depend on the basis chosen. Firstly, note that 
$$
\mathrm{tr}(MN)=\sum_{\ket{\psi} \in B} \bra{\psi} MN \ket {\psi}=\sum_{\ket{\psi},\ket{\phi} \in B} \bra{\psi} M\ket{\phi}\bra{\phi}N \ket {\psi}.
$$
$$
\sum_{\ket{\psi},\ket{\phi} \in B} \bra{\phi}N \ket {\psi}\bra{\psi} M\ket{\phi}=\sum_{\ket{\phi} \in B} \bra{\phi} NM \ket {\phi}=\mathrm{tr}(NM),
$$
hence 
$$
\mathrm{tr}(PMP^{-1})=\mathrm{tr}(P^{-1}PM)=\mathrm{tr}(M).
$$
In matrix terms, the trace is equal to the sum of the elements on the principal diagonal.

The {\em Pauli matrices},
$$
\sigma_0=\left(\begin{array}{cc} 1 & 0 \\ 0 & 1 \end{array} \right),
\sigma_x=\left(\begin{array}{cc} 0 & 1 \\ 1 & 0 \end{array} \right),
\sigma_z=\left(\begin{array}{cc} 1 & 0 \\ 0 & -1 \end{array} \right),
\sigma_y=\left(\begin{array}{cc} 0 & -i \\ i & 0 \end{array} \right),
$$
are unitary linear transformations of ${\mathbb C}^2$ which form a basis for the space of $2 \times 2$ matrices. 
In general, any error - also those which are not unitary - affecting a single qubit can be written as a linear combination of the Pauli matrices. 
We sometimes denote 
$\s_0, \s_x, \s_y, \s_z$ simply as $I,X,Y,Z$ respectively. Note that the Pauli matrices are both unitary and Hermitian.
They are also mutually orthogonal under the {\em Hilbert-Schmidt inner product} 
$$\langle A, B\rangle = \tr(A^\dag B).$$

A {\em measurement} or {\em observable} is represented by a hermitian operator. For example, the spin-up -- spin-down measurement $\hat{\sigma}_z$ is represented by the Pauli matrix $\sigma_z$~\footnote{This direction is commonly referred to 
as the ``z-direction'' in the x-y-z axis scheme.}.

The outcome of an individual measurement can only take two values. 
These correspond to the eigenvalues of $\sigma_z$ which are $+1$ and $-1$. After the measurement, 
the state is then found in the corresponding eigenstate: in $\ket{0}$ if the outcome $+1$ was obtained,
and in $\ket{1}$ if the outcome $-1$ was obtained. 
These occur with probabilities 
$$
p_0 = |\!\braket{\alpha}{0}\!|^2
$$ and
$$
p_1 = |\!\braket{\alpha}{1}\!|^2 ,
$$ 
respectively.

An {\em expectation value} is obtained by the repeated measurement 
of identically prepared spin particles.
Measuring the spin value of $\hat{\sigma}_z$ on a qubit 
$$
\ket{\alpha} = \alpha_o \ket{0} + \alpha_1 \ket{1}$$ 
yields the expectation value 
$$\langle \hat{\sigma}_z \rangle = \bra{\alpha} \sigma_z \ket{\alpha} = \tr(\sigma_z \dyad{\alpha}) = \alpha_0^2 - \alpha_1^2.$$

One can check that this leads to the correct expectation value
of 
$$
\langle \hat{\sigma}_z\rangle = p_0 \cdot (+1) + p_1 \cdot (-1) = \alpha_0^2 - \alpha_1^2 
= \bra{\alpha} \sigma_z \ket{\alpha}\,.$$

The above treatment can be generalised. Denote by $\hat{A}$ an observable which is represented by a Hermitian matrix $A$. Let $m_i$ and $\ket{m_i}$ be its eigenvalues and corresponding eigenvectors.
Measuring an observable $\hat{A}$ on a quantum state $\ket{\alpha}$
yields the values $m_i$ with probability $p_i = |\!\bra{\alpha}\ket{m_i}\!|^2$.
The state is found in the corresponding eigenstates afterwards.

This leads to the expectation value 
$$
\langle \hat{A} \rangle = \bra{\alpha} A \ket{\alpha} = \tr(A \dyad{\alpha}{\alpha}).
$$

The description of multiple quantum systems takes place in the tensor product space
of the individual Hilbert spaces.
Thus a system of $n$ qubits is described in the 
$n$-fold tensor product space of the one-qubit spaces.
One arrives at the $2^n$-dimensional Hilbert space 
$({\mathbb C}^2)^{\otimes n} = \mathbb C^2 \ot \cdots \ot \mathbb C^2$ ($n$ times).

A {\em density matrix} is used to describe a classical probability distribution 
(also called a statistical mixture or statistical ensemble) over quantum states.
Suppose that some source emits the quantum state $\ket{\phi_i}$ with probability $p_i$.
One requires that $p_i \geq 0$ and $\sum_i p_i = 1$.
From the discussion in the previous section, 
it is clear that the measurement of an observable $\hat A$ must yield an expectation value of 
$$
\langle \hat A \rangle = \sum_i p_i \bra{\phi_i}  A \ket{\phi_i} .
$$
By linearity, this can be rewritten as 
$$
\langle \hat A \rangle = \tr( A \sum_i p_i \dyad{\phi_i}).
$$
Indeed the operator 
$$
\rho = \sum_{i=1}^r p_i \dyad{\phi_i}
$$
captures all there is to know about a quantum system 
and $\rho$ is known as the density matrix describing it. 

For a complex matrix $\rho$ to represent a quantum state, one requires $\rho = \rho^\dag$, 
$\bra{\psi} \rho \ket{\psi} \geq 0$ for all $\ket{\psi}$ (positive-semidefinite) and $\tr(\rho) = 1$. 
Comparing with classical probability theory, 
this corresponds to a real valued, non-negative, and normalized probability distribution.
The density matrix formalism can indeed be seen as a generalization of classical probability theory 
and quantum mechanics can be taken to be the study of the cone formed by
complex positive-semidefinite matrices, and transformations thereof.
This is an analogy to the probability simplex encountered in classical probability theory.

Now we can state what we left out in preceding discussion about measurements:
consider the case when some eigenvalues of the measurement operator $A = \sum m_i \dyad{m_i}$ are equal, i.e. the spectrum of $A$ is degenerate. 
What is the probability for obtaining outcome~$i$ and what is the post-measurement state?
Let $P_j$ be the projector onto the eigenspace with eigenvalue $m_j$ of $A$.
Then a measurement yields outcome $m_j$ with probability $p_j = \tr(P_j \rho)$ and
the density operator immediately after the measurement reads

$$
\frac{P_j \rho P_j}{\tr(P_j \rho)}\,.
$$

The {\em time evolution} of an isolated qubit is given by a unitary operator in SU(2).
$$
\ket{\alpha} \mapsto U(t) \ket{\alpha}\,.
$$
On a closed quantum system of $n$ qubits, the time evolution is given by unitary operators on 
$\mathcal{H}_\text{system} = (\mathbb C^2)^{\ot n}$.
In case of a quantum system interacting with its environment 
such unitaries can also act on a larger system 
$$
\mathcal{H}_\text{system} \ot \mathcal{H}_\text{environment}.
$$
A unitary on such a larger system can, on $\mathcal{H}_\text{system}$, be represented in the 
(non-unique) operator-sum or Krauss decomposition as
$$
\ket{\alpha} \longmapsto \sum_i K_i \dyad{\alpha}{\alpha} K_i^\dag \quad \text{with the constraint} \quad \sum_i K_i^\dag K_i = \one.
$$
Throughout $\one$ will denote the identity map. The operators $K_i$ are also known as {\em Krauss operators}.

More generally, this reads for a density matrix as
$$
\rho \longmapsto \sum_i K_i \rho K_i^\dag \quad \text{with the constraint} \quad \sum_i K_i^\dag K_i = \one .
$$
The above map is also known as a {\em quantum channel} or {\em completely positive map} and represents the most general form of physical change a quantum state can undergo.
In the case of a classical (conventional) bit, an error is represented by the bit-flip $0 \leftrightarrows 1$. 
For qubits, we regard any non-identity unitary transformation or non-identity quantum channel as an {\em error}. 
We can decompose any unitary or quantum channel in terms of a matrix basis. 

A good choice is the {\em Pauli group}:
it is generated by all possible tensor products of the $4$ Pauli matrices, 
together with phases $\pm 1$ or $\pm i$. 
Observe that $\sigma_x$, $\sigma_z$ and $\sigma_y$ anti-commute. That is,
$$
\sigma_x \sigma_y=-\sigma_y \sigma_x\,, \quad 
\sigma_x \sigma_z=-\sigma_z \sigma_x\,, \quad
\sigma_y \sigma_z=-\sigma_z \sigma_y
$$
and that
$$
\sigma_x \sigma_y = i\sigma_z\,, \quad\quad
\sigma_y \sigma_z = i\sigma_x\,, \quad\quad
\sigma_z \sigma_x = i\sigma_y\,.
$$
Thus, the Pauli group $\mathcal{P}_n$ is a non-abelian group consisting of the $4^n$ tensor products of $\sigma_0$, $\sigma_x$, $\sigma_z$ and $\sigma_y$, which together with the four phases is a group of size $4^{n+1}$.

A {\em quantum error-correcting code} is a linear subspace $Q$ of ${(\mathbb C^2)}^{\otimes n}$ 
into which a number of logical qubits can be encoded such that all errors of 
a certain type can be detected and/or corrected.
The question we ask is thus: given a noisy channel $\mathcal{E}$, does there exist a recovery channel $\mathcal{R}$,
such that every density matrix $\rho$, for which the image of $\rho$ is contained in $Q$, can be recovered? In other words, for all density matrices
$\rho$ with spectral decomposition
$$
\rho= \sum_i p_i \ket{\phi_i}\bra{\phi_i},
$$ 
where $\ket{\phi_i} \in Q$, we require that 
$$
\mathcal{R} \circ \mathcal{E}(\rho) = \rho.
$$

\subsection{A $1$-qubit error-correcting quantum code}

A classical {\em code} is a subset of $A^n$, where $A$ is a finite set called the {\em alphabet} and $n$ is the {\em length} of the code. The repetition code is the simplest type of code in which each element $a\in A$ is encoded as $(a,a,\ldots,a)$, an $n$-tuple of $a$'s. For example, the binary repetition code of length $3$ is $\{(000),(111)\}$ and we encode 
$$
0 \mapsto 000
$$ and
$$
1 \mapsto 111.
$$
This encoding allows us to correct up to one error by taking a majority decision. 
In other words we decode the codewords
$$
000,001,010,100 \quad \mathrm{as} \quad 0
$$
and
$$
111,011,110,101 \quad \mathrm{as} \quad 1.
$$

Can we apply the same strategy to obtain a quantum code? Not quite. 
A quantum repetition code (on three qubits for example) does not exist, since we cannot map
$$
\ket{\alpha} \mapsto \ket{\alpha} \otimes \ket{\alpha}\otimes \ket{\alpha}\,.
$$
It would contradict the following (no-cloning) theorem.

\begin{theorem} (no-cloning)
There is no linear map which takes $\ket{\alpha}$ to $\ket{\alpha}\otimes \ket{\alpha}$ for all $\ket{\alpha} \in {(\mathbb C^2)}^{\otimes n}$.
\end{theorem}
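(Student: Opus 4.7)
The plan is to argue by contradiction using the non-linearity of the map $\ket{\alpha} \mapsto \ket{\alpha} \otimes \ket{\alpha}$. Suppose some linear map $T \colon (\C^2)^{\otimes n} \to (\C^2)^{\otimes n} \otimes (\C^2)^{\otimes n}$ satisfies $T(\ket{\alpha}) = \ket{\alpha} \otimes \ket{\alpha}$ for every $\ket{\alpha}$. I would pick two distinct computational basis vectors $\ket{u}, \ket{v} \in (\C^2)^{\otimes n}$ (for instance, $\ket{u} = \ket{0 \cdots 0}$ and $\ket{v} = \ket{0 \cdots 0 1}$) and evaluate $T$ on the vector $\ket{u} + \ket{v}$ in two different ways.

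On the one hand, linearity gives
$$
T(\ket{u} + \ket{v}) = T(\ket{u}) + T(\ket{v}) = \ket{u} \otimes \ket{u} + \ket{v} \otimes \ket{v}.
$$
On the other hand, the cloning hypothesis applied to $\ket{\alpha} = \ket{u} + \ket{v}$ yields
$$
T(\ket{u} + \ket{v}) = (\ket{u} + \ket{v}) \otimes (\ket{u} + \ket{v}) = \ket{u}\otimes\ket{u} + \ket{u}\otimes\ket{v} + \ket{v}\otimes\ket{u} + \ket{v}\otimes\ket{v}.
$$
Subtracting the two expressions forces $\ket{u}\otimes\ket{v} + \ket{v}\otimes\ket{u} = 0$.

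The final step is to observe that this equality is impossible. Since $\ket{u}$ and $\ket{v}$ are distinct computational basis vectors of $(\C^2)^{\otimes n}$, the vectors $\ket{u}\otimes\ket{v}$ and $\ket{v}\otimes\ket{u}$ are two \emph{different} elements of the computational basis of $(\C^2)^{\otimes 2n}$, hence linearly independent, so their sum cannot vanish. This contradiction rules out the existence of $T$. I do not foresee a real obstacle here; the only subtle point is making sure to choose $\ket{u} \neq \ket{v}$ so that the cross terms are genuinely independent basis vectors, which is why working in a fixed computational basis is the cleanest route.
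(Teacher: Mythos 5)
Your proof is correct and follows essentially the same route as the paper: assume a linear cloning map, apply it to a sum of two states, and derive a contradiction from the surviving cross terms. The only difference is that you justify the final inequality explicitly (by choosing distinct computational basis vectors so that $\ket{u}\otimes\ket{v}$ and $\ket{v}\otimes\ket{u}$ are linearly independent), a step the paper's proof leaves implicit.
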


\begin{proof}
Suppose there was such a map. Then
$$ \ket{\alpha} \mapsto \ket{\alpha} \ot \ket{\alpha}\,,$$
$$ \ket{\beta} \mapsto \ket{\beta} \ot \ket{\beta}$$
Such a map however is not linear, as
$$
\ket{\alpha} +\ket{\beta}  \mapsto (\ket{\alpha}+\ket{\beta}) \otimes
(\ket{\alpha}+\ket{\beta})
$$
$$
 \neq 
 \ket{\alpha} \otimes \ket{\alpha} 
 +\ket{\beta} \otimes \ket{\beta}.
$$
\end{proof}

However, we could try the following repetition-type code
$$
\alpha_0\ket{0}+\alpha_1 \ket{1} \mapsto \alpha_0\ket{000}+\alpha_1 \ket{111}.
$$
Above and from now on, we simplify notation
$
\ket{0} \otimes \ket{0}\   \mathrm{as} \ \ket{00},
$
etc.

Suppose now a ``bit-flip'' $\sigma_x$ happens on the second position. This gives
$$
\sigma_0 \ot \sigma_x \ot \sigma_0 \big(\alpha_0\ket{000}+\alpha_1 \ket{111} \big) 
= \alpha_0\ket{010}+\alpha_1 \ket{101}.
$$

One can correct such an error by majority decision,
$$
 \alpha_0\ket{010}+\alpha_1 \ket{101} \quad \mathrm{decodes} \ \mathrm{as}\quad   \alpha_0\ket{000}+\alpha_1 \ket{111}.
$$
One needs a measurement that indicates exactly where the bit-flip has occurred. This can be done, as will be explained in Example~\ref{spread1}.

However, we cannot correct a single $\sigma_z$ error since
$$
 \alpha_0\ket{000}-\alpha_1 \ket{111}
$$
is also a possible state of our code.

Shor \cite{Shor1995} was the first to introduce a quantum code which can correct any single-qubit error. 
He circumvented this apparent problem by introducing a majority decision on the signs to correct a $\sigma_z$ error.

\begin{example} \label{shorcode1} (Shor code)

The coding space for the Shor code is $({\mathbb C}^2)^{\otimes 9}$ and a qubit is encoded as
$$
\ket{\alpha} \mapsto \ket{\alpha_L}
$$
according to 
$$
\ket{0_L}=(\ket{000}+\ket{111})\otimes   (\ket{000}+\ket{111})\otimes (\ket{000}+\ket{111})
$$
and
$$
\ket{1_L}=(\ket{000}-\ket{111})\otimes (\ket{000}-\ket{111})\otimes (\ket{000}-\ket{111}).
$$
Hence, by linearity,
$$
\begin{array}{rl}
\alpha_0\ket{0}+\alpha_1 \ket{1}  \mapsto & \alpha_0 (\ket{000}+\ket{111})\otimes   (\ket{000}+\ket{111})\otimes (\ket{000}+\ket{111}) \\
 + & \alpha_1 (\ket{000}-\ket{111})\otimes (\ket{000}-\ket{111})\otimes (\ket{000}-\ket{111}).
 \end{array}
$$

Suppose that we have a $\sigma_x$ error (bit-flip) occuring on the $4$-th bit. Then the $\alpha_0$ term would change to
$$
(\ket{000}+\ket{111})\otimes   (\ket{100}+\ket{011})\otimes (\ket{000}+\ket{111})
$$
which we would detect and correct by taking the majority decision as with the classical error-correcting code, so we decode
$$
\ket{100}+\ket{011} \ \mathrm{as} \ \ket{000}+\ket{111}.
$$

Now suppose we have $\sigma_z$ error (phase error) occuring on the $7$-th bit. Then the $\alpha_0$ term would be
$$
(\ket{000}+\ket{111})\otimes   (\ket{000}+\ket{111})\otimes (\ket{000}-\ket{111})
$$
which we would detect and correct by taking the majority decision on the signs.

Since $\sigma_y=i\sigma_x \sigma_z$, we can also correct $\sigma_y$ errors since the two decisions we made above are independent of each other. Note that the scalar $i$ does not play a role in the decoding.

\end{example}

\subsection{The orthogonal projection onto a subspace}

Let $Q$ be a subspace of ${(\mathbb C^2)}^{\otimes n}$ and let $Q^{\perp}$ be its orthogonal subspace with respect to the standard inner product defined on ${(\mathbb C^2)}^{\otimes n} \cong {\mathbb C}^{2^n}$. Any vector $\ket{\psi}$ can be written (uniquely) as the sum of a vector $P\ket{\psi} \in Q$ and $P^{\perp}\ket{\psi}\in Q^{\perp}$. The map 
$$
\ket{\psi} \rightarrow P\ket{\psi}
$$
is a linear map, called the {\em orthogonal projection} onto $Q$. 

\begin{lemma} \label{projectiondecomp}
If $\{\ket{\psi_1},\ket{\psi_2},\ldots,\ket{\psi_k} \}$ is an orthonormal basis for $Q$ then
$$
P=\sum_{i=1}^k \ket{\psi_i}\bra{\psi_i}.
$$
\end{lemma}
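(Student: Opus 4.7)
The plan is to show directly that the operator $\tilde{P} := \sum_{i=1}^k \ket{\psi_i}\bra{\psi_i}$ agrees with the orthogonal projection $P$ on every vector of ${(\C^2)}^{\otimes n}$. Since both maps are linear, it suffices to check equality on an orthonormal basis that respects the decomposition ${(\C^2)}^{\otimes n} = Q \oplus Q^\perp$.

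First I would extend $\{\ket{\psi_1},\ldots,\ket{\psi_k}\}$ to an orthonormal basis $\{\ket{\psi_1},\ldots,\ket{\psi_{2^n}}\}$ of the whole space by appending an orthonormal basis $\{\ket{\psi_{k+1}},\ldots,\ket{\psi_{2^n}}\}$ of $Q^\perp$ (this is possible by Gram--Schmidt). For any $\ket{\psi}$ one then has the unique expansion $\ket{\psi} = \sum_{j=1}^{2^n} c_j \ket{\psi_j}$ with $c_j = \braket{\psi_j}{\psi}$; the first $k$ terms lie in $Q$ and the remaining ones lie in $Q^\perp$, so by uniqueness of the decomposition
$$
P\ket{\psi} = \sum_{j=1}^{k} c_j \ket{\psi_j}.
$$

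Next I would compute $\tilde{P}\ket{\psi}$ using orthonormality: $\bra{\psi_i}\ket{\psi} = \sum_j c_j \bra{\psi_i}\ket{\psi_j} = c_i$, so
$$
\tilde{P}\ket{\psi} = \sum_{i=1}^{k} \ket{\psi_i}\braket{\psi_i}{\psi} = \sum_{i=1}^{k} c_i \ket{\psi_i} = P\ket{\psi}.
$$
Since this holds for arbitrary $\ket{\psi}$, the two operators coincide.

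There is no real obstacle here; the only subtlety worth mentioning is that one must genuinely invoke the uniqueness of the orthogonal decomposition $\ket{\psi} = P\ket{\psi} + P^\perp\ket{\psi}$ in order to identify $\sum_{i\le k} c_i \ket{\psi_i}$ as $P\ket{\psi}$ rather than merely as \emph{some} vector in $Q$. Everything else is a routine calculation with orthonormal bases.
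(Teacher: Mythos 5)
Your proof is correct and is essentially the same argument as the paper's: both verify that the candidate operator $\sum_{i=1}^k \ket{\psi_i}\bra{\psi_i}$ acts as the identity on $Q$ and as zero on $Q^\perp$, which characterises the orthogonal projection via the decomposition $\ket{\psi}=P\ket{\psi}+P^\perp\ket{\psi}$. Your version merely makes the extension to a full orthonormal basis explicit, which the paper leaves implicit.
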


\begin{proof}
For any $j \leqslant k$,
$$
P\ket{\psi_j}=\sum_{i=1}^k \ket{\psi_i}\bra{\psi_i}\ket{\psi_j}=\ket{\psi_j},
$$
so  
$
P\ket{\psi}=\ket{\psi}
$ 
for all $\ket{\psi} \in Q$.

Furthermore,
$$
P\ket{\psi}=\sum_{i=1}^k \ket{\psi_i}\bra{\psi_i}\ket{\psi}=0
$$
for all $\ket{\psi}\in Q^{\perp}$.
\end{proof}

Clearly, by definition, $P^2=P$. By Lemma~\ref{projectiondecomp}, $P$ is Hermitian, since it is the sum of Hermitian operators. The following lemma implies that this is enough to characterise $P$.

\begin{lemma} \label{squareisitself}
If $P$ is a linear Hermitian operator for which $P^2=P$ and whose image is $Q$ then $P$ is the orthogonal projection onto $Q$.
\end{lemma}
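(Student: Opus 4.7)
The plan is to verify directly from the definition that $P$ carries out the orthogonal decomposition $\ket{\psi} = P\ket{\psi} + (I-P)\ket{\psi}$ with the first summand in $Q$ and the second in $Q^\perp$. Since the orthogonal projection onto $Q$ is characterised uniquely by this decomposition, this will be enough.

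First I would note that $P\ket{\psi}$ lies in $Q$ for every $\ket{\psi}$, simply because $Q$ is by hypothesis the image of $P$. So the content of the lemma is that the complementary piece $(I-P)\ket{\psi}$ lies in $Q^\perp$, i.e. that it is orthogonal to every vector of $Q$. Pick any $\ket{\phi} \in Q$; by the image condition there exists $\ket{\chi}$ with $\ket{\phi} = P\ket{\chi}$. Then I would compute
$$
\bra{\phi}(I-P)\ket{\psi} = \bra{P\chi}(I-P)\ket{\psi} = \bra{\chi} P^\dagger (I-P)\ket{\psi} = \bra{\chi}(P-P^2)\ket{\psi} = 0,
$$
using the hypotheses $P^\dagger = P$ and $P^2 = P$ in succession. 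This shows $(I-P)\ket{\psi} \in Q^\perp$.

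Having produced the decomposition $\ket{\psi} = P\ket{\psi} + (I-P)\ket{\psi}$ with $P\ket{\psi} \in Q$ and $(I-P)\ket{\psi} \in Q^\perp$, I would finish by invoking uniqueness of the orthogonal decomposition: if $\ket{\psi} = q + q^\perp$ with $q\in Q$ and $q^\perp\in Q^\perp$, then $q$ is by definition the image of $\ket{\psi}$ under the orthogonal projection onto $Q$. Hence $P\ket{\psi}$ agrees with the orthogonal projection of $\ket{\psi}$ onto $Q$ for every $\ket{\psi}$, which is exactly the claim.

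The only potential subtlety is that the argument tacitly uses $Q \oplus Q^\perp = (\C^2)^{\otimes n}$, which is standard for finite-dimensional inner product spaces and hence available here; beyond that, the proof is a one-line manipulation driven by the two hypotheses $P^2=P$ and $P^\dagger=P$, so I do not anticipate any real obstacle.
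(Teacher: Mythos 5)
Your proof is correct, but it takes a genuinely different route from the paper's. The paper argues via the spectral theorem: $P$ is Hermitian hence diagonalisable, $P^2=P$ forces the eigenvalues to be $0$ and $1$, the eigenvalue-$1$ eigenspace is identified with $\mathrm{im}(P)$, and the eigenvalue-$0$ eigenspace with $\mathrm{im}(P)^{\perp}$, so that the spectral decomposition $P=\sum_i \ket{\psi_i}\bra{\psi_i}$ exhibits $P$ as the orthogonal projection. You instead verify the defining property directly: you split $\ket{\psi}=P\ket{\psi}+(I-P)\ket{\psi}$, observe the first summand lies in $Q=\mathrm{im}(P)$, and use $P^{\dagger}=P$ together with $P^2=P$ to show $\bra{P\chi}(I-P)\ket{\psi}=\bra{\chi}(P-P^2)\ket{\psi}=0$, so the second summand lies in $Q^{\perp}$; uniqueness of the orthogonal decomposition then finishes the argument. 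Your version is more elementary --- it needs no diagonalisation and matches exactly the definition of orthogonal projection the paper gives just before Lemma~\ref{projectiondecomp} --- and it also sidesteps the paper's slightly terse assertion that the eigenvalue-$0$ eigenspace equals $\mathrm{im}(P)^{\perp}$. What the paper's spectral route buys in exchange is the explicit rank-one decomposition $P=\sum_i\ket{\psi_i}\bra{\psi_i}$, which ties in with Lemma~\ref{projectiondecomp} and with the projector formalism used for measurements throughout the article. Your caveat about $Q\oplus Q^{\perp}$ being the whole space is correctly dispatched: the ambient space is finite dimensional.
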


\begin{proof}
The operator $P$ is Hermitian, so it is diagonalisable with real eigenvalues. 
Since $P^2=P$, its eigenvalues are $0$ and $1$. By the spectral decomposition theorem,
$$
P=\sum_{i=1}^k \ket{\psi_i}\bra{\psi_i},
$$
where $\{\ket{\psi_1},\ket{\psi_2},\ldots,\ket{\psi_k} \}$ is an orthonormal basis for its eigenspace with eigenvalue $1$. Since 
 $$
 P\ket{\psi_j}=\ket{\psi_j}
 $$
 for all $j=1,\ldots,k$, the eigenspace with eigenvalue $1$ contains $\mathrm{im}(P)$, the image of $P$.

The eigenspace with eigenvalue $0$ is $\mathrm{im}(P)^{\perp}$. Thus, $P$ is the orthogonal projection onto $\mathrm{im}(P)$.
\end{proof}

\subsection{Error-detection and correction} \label{errorsection}

For the reliable transmission of an (unknown) quantum system over a noisy channel,
we are now faced with three major challenges.

\begin{enumerate}
 \item Measurement disturbance.
 As explained in Section~\ref{sect:intro}, 
measurements induce an ``update'' of the state that is measured. 
Thus, when obtaining error syndromes in order to understand
what error has occurred, the underlying quantum state may be altered.

 \item  Continuous set of errors.
The set of errors is continuous and not discrete. How can we distinguish and correct for 
an error set this large?

\item  No-cloning.
Unknown quantum states cannot be copied. 
Thus an approach of adding redundancy as done for a classical repetition code is bound to fail.
\end{enumerate}

How can these challenges be overcome?
Firstly, the syndrome measurements are chosen such that they stabilise the set of quantum states 
that consist of the code. In this way, all code states remain unchanged when extracting the syndromes, 
while erroneous states are changed in reversible fashion.
Second, the linearity of quantum mechanics implies that when some discrete set 
of errors can be corrected, then one can correct all errors which lie in their span. 
We shall not show a proof of this here, but one can be found in \cite[Theorem 2]{Gottesman2009} and \cite{BrunLidar2013}.
Lastly, the encoded quantum information is distributed amongst many systems and thus 
``hidden'' from any noisy channel. In this way the state does not have to be copied and no
redundancy is added. This not only gives rise to the below Knill-Laflamme conditions on error correction, but also provides an information theoretic interpretation of quantum error-correction.

In quantum error-correction one is faced with the following task.
Let 
$$
\mathcal{N}(\cdot) = \sum_\mu E_\mu (\cdot) E_\mu^\dag, \quad \text{where}
\quad
\sum_\mu E_\mu ^\dag E_\mu= \one,
$$ 
be a quantum channel.
Given the channel $\mathcal{N}$, for which codes $Q$ does there exist a recovery
channel $\mathcal{R}$ such that $\mathcal{R} \circ \mathcal{N}(\rho) = \rho$ for all 
$$
\rho= \sum_i p_i \ket{\phi_i}\bra{\phi_i},
$$ 
where $\ket{\phi_i} \in Q$?

It turns out that the set of correctable states form subspaces.
The following theorem gives a necessary and sufficient condition for a recovery channel to exist.

\begin{theorem}[Knill-Laflamme conditions] \label{KLtheorem}
Let $\mathcal{Q}$ be a subspace of $(\mathbb C^d)^{\ot n}$.
The channel $\mathcal{N}(\cdot) = \sum_\mu E_\mu (\cdot) E_\mu^\dag$ can be corrected 
by a code $\QQ$
if and only if for all $\ket{\phi}, \ket{\psi}$ in $\QQ$ and errors $E_\mu, E_\nu$
$$
\bra{\phi} E_\mu^\dag E_\nu \ket{\psi} = c_{\mu\nu} \bra{\phi}\ket{\psi}, 
$$
for some $c_{\mu \nu} \in \mathbb C$.
\end{theorem}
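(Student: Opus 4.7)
The plan is to prove both directions of the equivalence separately, beginning with the easier necessity direction. Assume a recovery channel $\mathcal{R}$ with Kraus operators $\{R_k\}$ (so $\sum_k R_k^\dag R_k = \one$) exists. For any code state $\ket{\psi}\in\mathcal{Q}$, the identity $\mathcal{R}\circ\mathcal{N}(\dyad{\psi}) = \dyad{\psi}$ expands as
$$\sum_{k,\mu}(R_k E_\mu\ket{\psi})(R_k E_\mu\ket{\psi})^\dag = \dyad{\psi}.$$
Since the right-hand side has rank one and the left-hand side is a sum of positive rank-one operators, each $R_k E_\mu\ket{\psi}$ must be a scalar multiple of $\ket{\psi}$, say $\lambda_{k\mu}(\psi)\ket{\psi}$. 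The key intermediate observation is that $\lambda_{k\mu}$ does not actually depend on $\ket{\psi}$: applying the same reasoning to $\ket{\phi}+\ket{\psi}\in\mathcal{Q}$ and comparing coefficients of the linearly independent vectors $\ket{\phi},\ket{\psi}$ forces $\lambda_{k\mu}(\phi)=\lambda_{k\mu}(\psi)$. Using $\sum_k R_k^\dag R_k = \one$, I would then compute
$$\bra{\phi}E_\mu^\dag E_\nu\ket{\psi} = \sum_k(R_k E_\mu\ket{\phi})^\dag(R_k E_\nu\ket{\psi}) = \Bigl(\sum_k\overline{\lambda_{k\mu}}\lambda_{k\nu}\Bigr)\bra{\phi}\ket{\psi},$$
which is the desired identity with $c_{\mu\nu}=\sum_k\overline{\lambda_{k\mu}}\lambda_{k\nu}$.

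For the sufficiency direction, let $P$ denote the orthogonal projection onto $\mathcal{Q}$ given by Lemma~\ref{projectiondecomp}, so that the assumption rephrases as $PE_\mu^\dag E_\nu P = c_{\mu\nu}P$. I would first note that the matrix $C=(c_{\mu\nu})$ is Hermitian and positive semi-definite, since it is the Gram matrix of the vectors $\{E_\mu\ket{\psi}\}$ for any fixed code state $\ket{\psi}$. Diagonalise $C=UDU^\dag$ with $D$ diagonal of non-negative entries $d_\mu$, and pass to the equivalent Kraus representation $F_\mu=\sum_\nu\overline{U}_{\nu\mu}E_\nu$, under which the conditions become the clean relations $PF_\mu^\dag F_\nu P = d_\mu\delta_{\mu\nu}P$. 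This is the structural heart of the theorem: each $F_\mu$ maps $\mathcal{Q}$ isometrically, up to the scaling $\sqrt{d_\mu}$, onto a syndrome subspace $\mathcal{Q}_\mu$, and for distinct $\mu,\nu$ with $d_\mu,d_\nu>0$ the subspaces $\mathcal{Q}_\mu,\mathcal{Q}_\nu$ are mutually orthogonal.

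With this decomposition in hand, I would construct $\mathcal{R}$ by polar-decomposing $F_\mu P = \sqrt{d_\mu}\,V_\mu P$, where $V_\mu$ is a partial isometry from $\mathcal{Q}$ onto $\mathcal{Q}_\mu$, and taking Kraus operators $R_\mu=PV_\mu^\dag$ together with one additional operator supported on the orthogonal complement of $\bigoplus_\mu\mathcal{Q}_\mu$ so that $\sum R_\mu^\dag R_\mu=\one$. A direct calculation gives $R_\mu F_\nu P = \sqrt{d_\mu}\,\delta_{\mu\nu}P$, and combined with $\sum_\mu d_\mu=1$ (which follows from trace preservation of $\mathcal{N}$) this yields $\mathcal{R}\circ\mathcal{N}(\rho)=\rho$ for every $\rho$ supported on $\mathcal{Q}$.

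The main obstacle I expect is the sufficiency direction, specifically the unitary diagonalisation step: without rotating the Kraus operators so that the different error images $\mathcal{Q}_\mu$ are mutually orthogonal, there is no clean syndrome measurement, and recovery cannot be disentangled error-by-error. Once that step is done, the remainder is essentially the conceptually simple recipe of measuring which $\mathcal{Q}_\mu$ the corrupted state lies in and then undoing the corresponding isometry $V_\mu$, with some routine bookkeeping to ensure $\mathcal{R}$ is trace-preserving.
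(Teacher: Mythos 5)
The paper does not actually prove Theorem~\ref{KLtheorem}: it is stated without proof, with the reader referred to Gottesman and to Brun--Lidar, so there is no in-paper argument to compare yours against. Judged on its own, your proof is the standard Knill--Laflamme/Nielsen--Chuang argument and it is correct in both directions. In the necessity direction, the rank-one argument forcing $R_kE_\mu\ket{\psi}\propto\ket{\psi}$, the observation that $\lambda_{k\mu}$ is independent of $\ket{\psi}$ (via applying the same reasoning to $\ket{\phi}+\ket{\psi}$), and the insertion of $\sum_k R_k^\dag R_k=\one$ are exactly the right steps. In the sufficiency direction, recognising $C=(c_{\mu\nu})$ as a Hermitian positive semi-definite Gram matrix, rotating the Kraus operators to diagonalise it, and building the recovery map from the partial isometries of the polar decompositions $F_\mu P=\sqrt{d_\mu}\,V_\mu P$ is the structural heart of the theorem, and your completeness bookkeeping ($\sum_\mu V_\mu P V_\mu^\dag$ is the projector onto $\bigoplus_\mu \mathcal{Q}_\mu$, padded by one extra Kraus operator on the complement) and the use of $\sum_\mu d_\mu=\tr C=1$ from trace preservation are both right. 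Two small points worth making explicit if you write this up: first, the theorem as stated quantifies over an orthogonal basis of $\mathcal{Q}$, and you should note that by sesquilinearity this is equivalent to the operator identity $PE_\mu^\dag E_\nu P=c_{\mu\nu}P$ on which your sufficiency argument actually runs; second, the precise placement of conjugates in $F_\mu=\sum_\nu \overline{U}_{\nu\mu}E_\nu$ should be checked against your convention for $C=UDU^\dag$, though since $C$ is Hermitian some unitary change of Kraus operators always achieves $PF_\mu^\dag F_\nu P=d_\mu\delta_{\mu\nu}P$, so nothing essential hinges on it.
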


This condition implies the following two essential properties.\\
\noindent 1. Orthogonal code states remain orthogonal under the action of errors,
$$
\mathrm{if} \ \  \braket{\phi}{\psi} = 0  \ \ \mathrm{then} \ \  \bra{\phi} E_\mu^\dag E_\nu \ket{\psi} = 0,
$$
and thus orthogonal codewords remain orthogonal under the noise.\\
\noindent 2. 
The expectation value of $E_\mu^\dag E_\nu$ is constant when 
$\ket{\phi}$ ranges over the set of code states. In other words, 
for all quantum states $\ket{\phi}, \ket{\psi} \in Q$,
$$
\tr[\dyad{\phi} E_\mu^\dag E_\nu] = \bra{\phi} E_\mu^\dag E_\nu \ket{\phi} = \bra{\psi} E_\mu^\dag E_\nu \ket{\psi} = c_{\mu \nu},
$$
In this way, the encoded quantum information is ``hidden'' from the noisy channel.

Lastly, a set of errors $\mathcal{E}$ is said to be {\em detectable} if and only if 
all errors $E_\mu^\dag E_\nu$ with $E_\mu, E_\nu \in \mathcal{E}$ are correctable.

\subsection{Error weights}
We define the {\em weight} $\operatorname{wt}(M)$ of an operator $M$ in the Pauli group $\mathcal{P}_n$ to be 
the number of tensor factors which are not equal to $\sigma_0$. For example,
$$
M=\sigma_x \otimes \sigma_z \otimes \sigma_0 \otimes \sigma_y \otimes \sigma_0
$$
has weight three.

In classical codes the distance between any two elements of $A^n$ is the number of coordinates in which they differ. If the minimum distance of a code $C$ is at least $2t+1$ then $C$ is a $t$-error correcting code (i.e. we can correct errors if up to $t$ coordinates of a codeword change). In quantum codes the same holds, if a quantum code can detect all errors of weight less than $2t+1$ then it is a $t$-error correcting code.

\section{Qubit stabilizer codes} \label{sectionqubit}

\subsection{Definition and examples}

Most quantum codes presently known are stabilizer codes, and their usefulness lies partially 
in the fact that their connection with classical codes allows for them to be described in an efficient way. Here, we will mainly deal with stabilizer codes, although we will also see examples of quantum codes in Section~\ref{unionstabilizers} which are not stabilizer codes.

A {\em qubit stabilizer code} $Q(S)$ is the joint eigenspace with eigenvalue $1$ of the elements 
of an abelian subgroup $S$ of $\mathcal{P}_n$ not containing $-\one$. 
The subgroup $S$ is also known as the {\em stabilizer}.

We will often define $S$ as being generated by a set of $n-k$ commuting 
independent generators $M_1,\ldots,M_{n-k}$ of $\mathcal{P}_n$.
By independent, we mean that $M_1, \dots, M_{n-k}$ generate $S$, 
 $$
 \langle M_1, \dots, M_{n-k} \rangle = \big\{\prod M_1^{\alpha_1} \cdots M_{n-k}^{\alpha_{n-k}} \,\big|\, \alpha_1, \dots, \alpha_{n-k} \in \{0,1\} \big\} = S
 $$
while any smaller subset does not.
Thus, the set of $M_i$'s are called {\em generators}.

It is important to note that we require $-\one \not\in S$, since otherwise $Q(S)=\{ 0 \}$. We also assume that there is no coordinate in which every element of $S$ has a $\sigma_0$ in that coordinate, as we could simply delete this coordinate and this would not affect the error correcting capabilities of the code.

Note that the phase of any element in $S$ is $\pm 1$, since if
$$
M=\pm i \sigma_1 \otimes \cdots \otimes \sigma_n
$$
then
$$
M^2=- \one \in S,
$$
which, as mentioned above, implies that $Q(S)=\{ 0 \}$.

\begin{example}
Suppose $n=2$ and $S$ is generated by a single Pauli operator $M=\sigma_x \otimes \sigma_z$. 

Let $\ket{\alpha} \in ({\mathbb C}^2)^{\otimes 2}$. Then $\ket{\alpha}$ can be written as
$$
\ket{\alpha}=\alpha_{00} \ket{00}+\alpha_{01} \ket{01}+\alpha_{10} \ket{10}+\alpha_{11} \ket{11}
$$
for some $\alpha_{ij} \in {\mathbb C}$.
Now,
$$
M\ket{\alpha}=\alpha_{00} \ket{10}-\alpha_{01} \ket{11}+\alpha_{10} \ket{00}-\alpha_{11} \ket{01}
$$
Thus, $\ket{\alpha}$ is in the eigenspace of $M$ with eigenvalue $1$ if and only if
$$
\alpha_{00}=\alpha_{10}, \ \ \alpha_{01}=-\alpha_{11}.
$$
We note that the dimension of $Q(S)$ is $2$.
\end{example}

We often use the short-hand notation $\sigma_0=I$, $\sigma_x=X$, $\sigma_y=Y$  and $\sigma_z=Z$, so in the previous example we might write $M=XZ$.

\begin{example} \label{nis3}
Suppose $n=3$ and $S$ is generated by $M_1,M_2,M_3$, where
\begin{align}
M_1&=\sigma_{0} \otimes \sigma_x \otimes \sigma_z\nonumber\\
M_2&=\sigma_{0} \otimes \sigma_y \otimes \sigma_x\nonumber\\
M_3&=\sigma_{x} \otimes \sigma_z \otimes \sigma_y\nonumber\,. 
\end{align}
In the shorthand notation we would write that $S$ is defined by 
$$
\begin{array}{rccc} M_1= & I & X & Z \\ M_2= & I & Y & X\\ M_3= & X & Z & Y\end{array}.
$$
Observe that $M_iM_j=M_jM_i$ for all $i$ and $j \in \{1,2,3\}$. For example
$$
M_2M_1=(\sigma_{0} \otimes \sigma_y \otimes \sigma_x)(\sigma_{0} \otimes \sigma_x \otimes \sigma_z)
=\sigma_{0} \otimes (-i\sigma_z) \otimes(- i\sigma_y)=-
\sigma_{0} \otimes \sigma_z \otimes \sigma_y
$$
and
$$
M_1M_2=(\sigma_{0} \otimes \sigma_x \otimes \sigma_z)(\sigma_{0} \otimes \sigma_y \otimes \sigma_x)
=\sigma_{0} \otimes i\sigma_z \otimes i\sigma_y=-
\sigma_{0} \otimes \sigma_z \otimes \sigma_y.
$$
This can be checked quickly by verifying that 
different Pauli matrices $\{\sigma_x,\sigma_y,\sigma_z\}$ 
coincide in the same position in $M_i$ and $M_j$ ($i\neq j$) an even number of times.

To find a basis for the stabilizer code, suppose that
$$
\ket{\alpha}=\sum_{ijk} \alpha_{ijk} \ket{ijk}.
$$
is in the code space, i.e. that $\alpha$ is in the $+1$-eigenspace of all $M_i$.

Since
$$
M_1\ket{\alpha}=\sum_{j=0}^{1}(\alpha_{j00} \ket{j10}-\alpha_{j01} \ket{j11}+\alpha_{j10} \ket{j00}-\alpha_{j11} \ket{j01})
$$

We have that $\ket{\alpha}$ is in the $+1$-eigenspace $\tilde M_1 = \operatorname{Im}(I + M_1)$ of $M_1$ if and only if
$$
\alpha_{j00}=\alpha_{j10}\quad \text{and}\quad \alpha_{j01}=-\alpha_{j11}.
$$
Similarly,
$$
M_2\ket{\alpha}= i \sum_{j=0}^{1}
(\alpha_{j00} \ket{j11} + \alpha_{j01} \ket{j10} - \alpha_{j10} \ket{j01} - \alpha_{j11} \ket{j00})
$$
Thus, $\ket{\alpha}$ is in the $+1$-eigenspace $\tilde M_2$ if and only if
$$
i\alpha_{j00}=\alpha_{j11} \quad \text{and}\quad  \alpha_{j01}=-i\alpha_{j10}.
$$

Finally,
\begin{align}
M_3\ket{\alpha}= 
i(&\alpha_{000} \ket{101}-\alpha_{001} \ket{100}-\alpha_{010}\ket{111}+\alpha_{011} \ket{110} \nonumber\\
+\,&\alpha_{100} \ket{001}-\alpha_{101} \ket{000}-\alpha_{110}\ket{011}+\alpha_{111} \ket{010})\,, \nonumber 
\end{align}
so
$\ket{\alpha}$ is in the $+1$-eigenspace $\tilde M_3$ if and only if
$$
i\alpha_{000}=\alpha_{101}, \ \ \alpha_{100}=-i\alpha_{001},\ \ \alpha_{111}=-i\alpha_{010},\ \ \alpha_{110}=i\alpha_{011}.
$$
Thus,
$$
Q(S)= \tilde M_1 \cap \tilde M_2 \cap \tilde M_3
$$
is the one-dimensional subspace spanned by
$$
\ket{000}-i \ket{001}+ \ket{010}+i \ket{011}-\ket{100}+i \ket{101}- \ket{110}-i \ket{111}.
$$

\end{example}

In fact, we seldom actually calculate a basis as for $Q(S)$ as it is not necessary in practice. We have only calculated this previous example so one gets a feel of how laborious this is even for small parameters. From a practical point of view it is enough to know the orthogonal projection $P$ for the subspace $Q$. 

\subsection{The dimension and minimum distance of a stabilizer code}

Let $S$ be an abelian subgroup of $\mathcal{P}_n$. Let $Q(S)$ be the subspace defined as the joint eigenspace of eigenvalue $1$ of the elements of $S$. Let $P = P(S)$ be the orthogonal projection onto the subspace $Q(S)$.

\begin{lemma} \label{sumthemall}
The orthogonal projection is
$$
P = \frac{1}{|S|} \sum_{E\in S} E.
$$
\end{lemma}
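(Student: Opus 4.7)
The plan is to invoke Lemma~\ref{squareisitself}: it suffices to show that the operator $P := \frac{1}{|S|}\sum_{E \in S} E$ is Hermitian, idempotent, and has image exactly $Q(S)$.

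First I would check Hermiticity. Every $E \in S$ is a tensor product of Pauli matrices with phase $\pm 1$ (phases $\pm i$ are excluded, as noted earlier, because they would force $E^2 = -\one \in S$). Since each Pauli matrix is Hermitian, so is each such tensor product, and hence $E^\dag = E$ for every $E \in S$. Therefore $P^\dag = \frac{1}{|S|}\sum_{E \in S} E^\dag = P$.

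Next I would verify idempotency by exploiting the group structure of $S$:
$$
P^2 = \frac{1}{|S|^2} \sum_{E,F \in S} EF.
$$
For each fixed $E \in S$, the map $F \mapsto EF$ is a bijection of $S$ onto itself, so $\sum_{F \in S} EF = \sum_{G \in S} G$. Summing over $E$ yields $\sum_{E,F} EF = |S| \sum_{G \in S} G$, hence $P^2 = \frac{1}{|S|}\sum_{G \in S} G = P$.

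Finally I would identify $\operatorname{im}(P)$ with $Q(S)$ by a double inclusion. If $\ket{\psi} \in Q(S)$ then $E\ket{\psi} = \ket{\psi}$ for every $E \in S$, so $P\ket{\psi} = \ket{\psi}$, giving $Q(S) \subseteq \operatorname{im}(P)$. Conversely, for any $\ket{\phi}$ and any $F \in S$, the same bijection argument gives $F P \ket{\phi} = \frac{1}{|S|}\sum_{E \in S} FE \ket{\phi} = P\ket{\phi}$, so $P\ket{\phi}$ is fixed by every element of $S$ and therefore lies in $Q(S)$. Thus $\operatorname{im}(P) = Q(S)$, and Lemma~\ref{squareisitself} finishes the proof. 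The only subtle point is the phase observation ensuring $E^\dag = E$; the rest is routine use of the fact that $S$ is a finite group.
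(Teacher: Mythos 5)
Your proof is correct and follows essentially the same route as the paper: establish that $P$ is Hermitian and idempotent with image $Q(S)$, then invoke Lemma~\ref{squareisitself}. If anything, you are slightly more careful than the paper on the Hermiticity step (correctly restricting to elements of $S$, whose phases are $\pm 1$, rather than all of $\mathcal{P}_n$) and you make explicit the group-rearrangement argument behind $MP=PM=P$, which the paper merely asserts.
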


\begin{proof}
Since $S$ is an abelian subgroup, one has $$MP=PM=P$$ for all $M \in S$.

Suppose that $\ket{\psi} \in Q(S)$. Then,
$
P\ket{\psi}=\ket{\psi}
$ and therefore $\ket{\psi} \in \mathrm{im}(P)$.

Vice versa, if $\ket{\psi} \in \mathrm{im}(P)$ then, for all $M \in S$,
$$
M \ket{\psi}=MP \ket{\phi}=P \ket{\phi}=\ket{\psi},
$$
so $\ket{\psi} \in Q(S)$.
Thus, $Q(S)=\mathrm{im}(P)$.

Since $E^{\dagger}=E$ for all $E \in \mathcal{P}_n$, we have that $P^{\dagger}=P$. Moreover,
$$
P^2=P\frac{1}{|S|} \sum_{M\in S} M=\frac{1}{|S|} \sum_{M\in S} PM=\frac{1}{|S|} \sum_{M\in S}M=P.
$$

By Lemma~\ref{squareisitself}, $P=P(S)$. 

\end{proof}

For the proof of the next theorem, it is worth noting that
$$
\mathrm{tr}( \sigma_1 \otimes \cdots \otimes \sigma_n)=\mathrm{tr}( \sigma_1)\cdots\mathrm{tr}( \sigma_n).
$$
Thus, for all $E\in \mathcal{P}_n$ with phase $\pm 1$, where $E\neq \pm \one$, $\mathrm{tr}(E)=0$ and that $\mathrm{tr}(\one)=2^n$.

\begin{theorem} \label{qubitdim}
The stabilizer code $Q(S)$ which is the joint $+1$-eigenspace of an abelian subgroup $S$ generated by $n-k$ independent elements has dimension $2^{k}$.
\end{theorem}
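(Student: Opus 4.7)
The plan is to compute the dimension of $Q(S)$ by evaluating the trace of the orthogonal projection $P$ onto it, using that $\dim Q(S) = \tr(P)$ for any orthogonal projection.

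First I would invoke Lemma~\ref{sumthemall} to write $P = \frac{1}{|S|}\sum_{E\in S} E$, and then take the trace to obtain
$$
\dim Q(S) = \tr(P) = \frac{1}{|S|}\sum_{E\in S}\tr(E).
$$
Next I would apply the trace facts recorded just before the theorem: every $E \in \mathcal{P}_n$ with phase $\pm 1$ has $\tr(E)=0$ except for $E = \pm\one$, and $\tr(\one)=2^n$. Since elements of $S$ have phase $\pm 1$ (observed before Example~\ref{nis3}) and $-\one \notin S$ by assumption, the only surviving term in the sum is $\tr(\one) = 2^n$. This reduces the computation to
$$
\dim Q(S) = \frac{2^n}{|S|}.
$$

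The remaining task, and the main obstacle, is to show that $|S| = 2^{n-k}$. For this I would argue that $S$ is an elementary abelian $2$-group: each generator $M_i$ has phase $\pm 1$ and is a tensor product of Pauli matrices, each of which squares to $\sigma_0$, so $M_i^2 = \one$ (the alternative $M_i^2 = -\one$ is excluded because $-\one \notin S$). Combined with commutativity, every element of $S$ can be written uniquely as $M_1^{\alpha_1}\cdots M_{n-k}^{\alpha_{n-k}}$ with $\alpha_i \in \{0,1\}$, giving at most $2^{n-k}$ elements. Independence of the generators, as defined in the text via $\langle M_1,\dots,M_{n-k}\rangle = S$ with no smaller subset sufficing, ensures that all $2^{n-k}$ such products are distinct (otherwise a nontrivial relation would express some $M_i$ in terms of the others, contradicting independence). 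Hence $|S| = 2^{n-k}$.

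Putting this together yields $\dim Q(S) = 2^n/2^{n-k} = 2^k$, as required. The delicate point worth emphasising in a full write-up is the equivalence between the stated notion of independence and linear independence over $\mathbb{F}_2$ in the quotient group $\mathcal{P}_n/\langle \pm\one, \pm i\one\rangle$, which is what secures the count $|S| = 2^{n-k}$.
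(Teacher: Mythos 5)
Your proposal is correct and follows essentially the same route as the paper: compute $\dim Q(S)=\tr(P)$ with $P=\frac{1}{|S|}\sum_{E\in S}E$ from Lemma~\ref{sumthemall}, kill all traces except $\tr(\one)=2^n$, and conclude $\dim Q(S)=2^n/|S|$. The only difference is that you spell out why $|S|=2^{n-k}$ (each $M_i^2=\one$ since $-\one\notin S$, plus independence of the generators), a step the paper leaves implicit; this is a welcome clarification rather than a deviation.
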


\begin{proof}
By Lemma~\ref{sumthemall}, the orthogonal projection onto $Q(S)$ is
$$
P=\frac{1}{|S|} \sum_{M\in S} M.
$$
The image of $P$ is its eigenspace of eigenvalue one and also $Q(S)$. 

The operator $P$ is Hermitian and thus diagonalisable.  Since $P^2=P$ its eigenvalues are $0$ and $1$. 
The trace of $P$ is equal to the sum of its eigenvalues, which in the case of $P$ is the dimension of the eigenspace of eigenvalue one. 
Therefore, the dimension of $Q(S)$ is equal to the trace of $P(S)$. 

It only remains to note that 

$$\mathrm{tr}(M)=0$$

for all $M \in \mathcal{P}_n$ with the exception of $M=\one$, in which case $\mathrm{tr}(\one)=2^n$.
Thus, $ \dim Q=2^n/|S|=2^k$.

\end{proof}

Having ascertained the dimension of a stabilizer code, we go on to determine its minimum distance. 

Let $\mathrm{Centraliser}(S)$ denote the set of elements of $\mathcal{P}_n$ that commute with all elements of $S$, i.e. the centraliser of $S$ in the group $\mathcal{P}_n$.

\begin{lemma} \label{undetectstab}
$E$ is an undetectable error for $Q(S)$ if and only if $E \in \mathrm{Centraliser}(S)~\setminus~S$.
\end{lemma}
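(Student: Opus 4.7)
The plan is to exploit the fact that any two elements of $\mathcal{P}_n$ either commute or anticommute, a direct consequence of the single-qubit relations $\sigma_x\sigma_y=-\sigma_y\sigma_x$ together with $\sigma_i^2=\sigma_0$. So for $E\in\mathcal{P}_n$ and every $M\in S$, one has $EM=\pm ME$, and ``$E\in\mathrm{Centraliser}(S)$'' is precisely the statement that this sign is $+1$ for every $M\in S$. This turns the lemma into a syndrome-measurement argument.

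For the ($\Rightarrow$) direction, suppose $E$ is undetectable. If $E\notin\mathrm{Centraliser}(S)$, pick $M\in S$ with $EM=-ME$. For any $\ket{\psi}\in Q$ we compute $M(E\ket{\psi})=-EM\ket{\psi}=-E\ket{\psi}$, so $E\ket{\psi}$ lies in the $(-1)$-eigenspace of $M$, which is orthogonal to $Q$. Measuring $M$ then yields outcome $-1$ with certainty, detecting the error and contradicting undetectability. If instead $E\in S$, then by the very definition of $Q(S)$ we have $E\ket{\psi}=\ket{\psi}$ for every $\ket{\psi}\in Q$, so $E$ acts as the identity on the code and is not a genuine error. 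Hence $E\in\mathrm{Centraliser}(S)\setminus S$.

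For the ($\Leftarrow$) direction, suppose $E\in\mathrm{Centraliser}(S)\setminus S$. For every $\ket{\psi}\in Q$ and every $M\in S$ commutativity gives $M(E\ket{\psi})=EM\ket{\psi}=E\ket{\psi}$, so $E\ket{\psi}$ is still a $+1$-eigenvector of every stabilizer generator and hence $E\ket{\psi}\in Q$. Thus $E$ preserves the code, every syndrome remains $+1$, and no stabilizer measurement can witness that $E$ occurred. Since $E\notin S$, the operator $E$ does not act as $\one$ on $Q$, so it genuinely moves some codeword to a distinct codeword, which is exactly what undetectability asserts.

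The main obstacle is pinning down precisely what ``undetectable'' means for a single Pauli: the argument above uses the convention that $E$ is undetectable iff it preserves $Q$ without acting as a scalar on it. The only edge case is $-E\in S$ (so $E$ acts as $-\one$ on $Q$, a global phase), which is traditionally folded into the ``$E\in S$'' side by identifying $E$ and $-E$ as the same physical error. Pushed to a fully rigorous statement, one would verify this via the Pauli-basis expansion $P=\tfrac{1}{|S|}\sum_{M\in S}M$ and the orthogonality of Pauli tensors, showing that $EP=cP$ forces $\pm E\in S$; with this convention in place, the two implications above give the lemma.
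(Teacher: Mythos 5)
Your proof is correct in substance and its forward direction is essentially the paper's: anticommutation with some $M\in S$ throws $E\ket{\psi}$ into the $(-1)$-eigenspace of $M$, hence $\bra{\phi}E\ket{\psi}=0$ for all code states and $E$ is detectable with $c_E=0$ (the paper phrases this as a matrix-element computation rather than a syndrome measurement, but it is the same calculation), and the case $E\in S$ is disposed of trivially in both. Where you genuinely diverge is the backward direction, and that is exactly where the content of the lemma lives. You reduce ``undetectable'' to ``preserves $Q$ but is not a scalar on $Q$'' --- a correct restatement of the Knill--Laflamme condition for an error preserving $Q$, since that condition reads $PEP=c_EP$ --- but your assertion ``since $E\notin S$, the operator $E$ does not act as $\one$ on $Q$'' is, as written, a non sequitur: nothing a priori prevents an operator outside $S$ from acting as a scalar on the code, and ``not $\one$'' must in any case be strengthened to ``not a scalar.'' The fix you defer to your last paragraph (expand $P=\tfrac{1}{|S|}\sum_{M\in S}M$ and use Hilbert--Schmidt orthogonality of non-proportional Pauli tensors to show $EP=cP$ forces $E$ to be a phase times an element of $S$) is valid and is a different mechanism from the paper's: the paper instead shows detectability forces $E\ket{\phi}=\lambda_E\ket{\phi}$ on all of $Q$ and then derives a contradiction from the fact that the enlarged stabilizer $\langle S,\lambda_E^{-1}E\rangle$ cuts out a strictly smaller code via $\dim Q(S)=2^n/|S|$. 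Your route is arguably cleaner (no worry about whether the enlarged group contains $-\one$), but you should carry it out rather than sketch it, since without it the implication is unproved. Finally, both arguments share the same phase caveat --- $E=-M$ with $M\in S$ lies in $\mathrm{Centraliser}(S)\setminus S$ yet acts as $-\one$ on $Q$ and so is detectable; you flag this explicitly, whereas the paper's step ``since $E\notin S$, $\lambda_E\neq 1$'' quietly breaks at the same point.
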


\begin{proof}
We proceed by contradiction.

($\Rightarrow$) Suppose that $E$ is undetectable but that $E \not\in \mathrm{Centraliser}(S) \setminus S$. 
 
Since any two elements of $\mathcal{P}_n$ either commute or anti-commute, $E \not\in \mathrm{Centraliser}(S)$ implies there is a $M \in S$ such that
$$
EM=-ME.
$$
Take any $\ket{\psi},\ket{\phi} \in Q(S)$ with $\bra{\psi}\ket{\phi}=0$. Then
$$
\bra{\psi}E\ket{\phi}=\bra{\psi}ME\ket{\phi}=-\bra{\psi}EM\ket{\phi}=-\bra{\psi}E\ket{\phi},
$$
which implies $\bra{\psi}E\ket{\phi}=0$. 

If $E \in S$ then
$$
\bra{\psi}E\ket{\phi}=\bra{\psi}\ket{\phi},
$$

Hence, by Theorem~\ref{KLtheorem}, $E$ is detectable, a contradiction.

($\Leftarrow$) Suppose that $E$ is detectable with $E \in \mathrm{Centraliser}(S) \setminus S$. 
Let $\ket{\psi} \in Q(S)$. Since $E \in \mathrm{Centraliser}(S)$, 
$$
ME\ket{\psi}=EM\ket{\psi}=E\ket{\psi}
$$
holds for all $M\in S$, which implies that $E\ket{\psi} \in Q$. 

Extend $\{\ket{\psi}\}$ to an orthonormal basis $B$ for $Q$. 
Since $E$ is detectable, 
$$
\bra{\phi}E\ket{\psi}=0
$$
for all $\ket{\phi} \in B \setminus \{\ket{\psi}\}$. This implies that 
$E\ket{\psi}$ is in the subspace $(B \setminus \{\ket{\psi}\})^{\perp}$.
Since this subspace has as a basis $\{ \ket{\psi}\}$,
$$
E\ket{\psi}=\lambda_{\psi} \ket{\psi},
$$
for some $\lambda_{\psi} \in {\mathbb C}$. Hence, $\ket{\psi}$ is an eigenvector of $E$.

By Theorem~\ref{KLtheorem},
$$
\bra{\phi}E\ket{\phi}=\lambda_E,
$$
for all $\ket{\phi} \in B$.
Since $\bra{\psi}\ket{\psi}=1$, this implies that $\lambda_{\psi}=\lambda_E$. 

The same argument as made above for $\ket{\psi}$ holds for all $\ket{\phi}\in Q(S)$. Thus, for all $\ket{\phi}\in Q(S)$,
$$
E\ket{\phi}=\lambda_E \ket{\phi}.
$$

Since $E \not\in S$,  $\lambda_E \neq 1$. 

The subgroup generated by $S$ and $\lambda_E^{-1}E$ defines a smaller stabilizer code, so there is a $\ket{\psi}\in Q$ such that
$$
\lambda_E^{-1}E \ket{\psi} \neq \ket{\psi},
$$
contradicting the above.
Hence, $E$ is not detectable.
\end{proof}

In the case that $k=0$, we have that $Q(S)$ is a $1$-dimensional subspace so cannot be used to store quantum information and all errors are correctable according to the definition. However, we do not rule out considering such codes since for any proper subgroup $S'$ of $S$, the code $Q(S')$ will be of interest. Since the elements of $S \setminus S'$ will be in $\mathrm{Centraliser}(S') \setminus S'$, Theorem{~\ref{stabdistance} indicates that it makes sense to define the minimum distance of $Q(S)$ to be equal to the minimum weight of the non-identity elements of $S$. These codes are called {\em self-dual}, for reasons that will become clear in Theorem~\ref{stabtheorem}.

\begin{theorem} \label{stabdistance}
If $k \geqslant 1$ then the minimum distance of the $2^k$-dimensional stabilizer code $Q(S)$ with stabilizer group $S$ is equal to the minimum weight of the errors in $\mathrm{Centraliser}(S) \setminus S$.
\end{theorem}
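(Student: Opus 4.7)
The plan is to apply Lemma~\ref{undetectstab} directly, reducing the theorem to bookkeeping about Pauli weights. Set $d^{*} := \min\{\operatorname{wt}(E) : E \in \mathrm{Centraliser}(S)\setminus S\}$. I will show that $d^{*}$ coincides with the minimum weight of an undetectable Pauli error on $Q(S)$, which is the operative notion of minimum distance coming from Section~\ref{errorsection} together with the remark there that detectability for a general error reduces to detectability of its Pauli components.

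The upper bound $d\leq d^{*}$ is immediate: by definition of $d^{*}$ there exists some $E\in\mathrm{Centraliser}(S)\setminus S$ of weight $d^{*}$, and Lemma~\ref{undetectstab} says that this $E$ is undetectable.

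For the lower bound, I would argue that every Pauli error $E$ with $\operatorname{wt}(E)<d^{*}$ is detectable, by splitting on whether $E$ commutes with all of $S$. Since any two elements of $\mathcal{P}_n$ commute or anti-commute, there are two cases. If $E\in\mathrm{Centraliser}(S)$, then $\operatorname{wt}(E)<d^{*}$ forces $E\in S$, and hence $E\ket{\phi}=\ket{\phi}$ for every $\ket{\phi}\in Q(S)$, so $\bra{\phi}E\ket{\psi}=\bra{\phi}\ket{\psi}$, fulfilling Theorem~\ref{KLtheorem} with constant $1$. Otherwise some $M\in S$ satisfies $EM=-ME$, and using $M\ket{\phi}=\ket{\phi}$ and $M^{\dagger}=M$ one obtains
$$
\bra{\phi}E\ket{\psi} \;=\; \bra{\phi}ME\ket{\psi} \;=\; -\bra{\phi}EM\ket{\psi} \;=\; -\bra{\phi}E\ket{\psi},
$$
so $\bra{\phi}E\ket{\psi}=0$ for all $\ket{\phi},\ket{\psi}\in Q(S)$, again satisfying Theorem~\ref{KLtheorem} with constant $0$. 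In either case $E$ is detectable, so no Pauli error of weight less than $d^{*}$ is undetectable and therefore $d\geq d^{*}$.

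There is no serious obstacle, since Lemma~\ref{undetectstab} carries the analytical content. The one point that requires care, and which explains the shape of the statement, is that the minimum is taken over $\mathrm{Centraliser}(S)\setminus S$ rather than over all of $\mathrm{Centraliser}(S)$: elements of $S$ act as the identity on $Q(S)$ and thus change no encoded state, so although they are ``undetectable'' in a vacuous sense they should not be counted towards the minimum distance, which is exactly what Lemma~\ref{undetectstab} enforces. The hypothesis $k\geq 1$ simply ensures that $Q(S)$ has room for distinct orthogonal codewords so that the notion of minimum distance is not degenerate; the $k=0$ case is precisely the self-dual convention flagged immediately before the theorem.
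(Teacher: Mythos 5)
Your proof is correct and follows essentially the same route as the paper: the theorem is reduced to Lemma~\ref{undetectstab}, which already characterises the undetectable errors as exactly $\mathrm{Centraliser}(S)\setminus S$. The only difference is that you re-derive the detectability half of that lemma explicitly (the case split on whether $E$ commutes with all of $S$), whereas the paper simply cites the lemma; the content is the same.
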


\begin{proof}
By Lemma~\ref{undetectstab}, $Q(S)$ can detect all errors which are not elements of $\mathrm{Centraliser}(S) \setminus S$. 
In particular, it can also detect all errors of weight less than the minimum weight of an error in $\mathrm{Centraliser}(S) \setminus S$. 
\end{proof}

If there are elements of $S$ whose weight is less than the minimum distance of $Q(S)$ then the code is called {\em impure}. If this is not the case then the code is called {\em pure}.

We should mention that there is also the concept of a degenerate code. According to Calderbank et al. \cite{CRSS1998}, a nondegenerate code is one for which different errors produce linearly independent results when applied to elements of the code. Whereas a code is pure if distinct errors produce orthogonal results. It is
straightforward to verify that, for additive codes, ‘pure’ and
‘nondegenerate’ coincide. In general, however, a pure code is
nondegenerate but the converse need not be true.

We use the shorthand notation $(\!(n,K,d)\!)$ to denote a quantum code of ${(\mathbb C^2)}^{\otimes n}$ of dimension $K$ and minimum distance $d$. 
The notation  $[\![n,k,d]\!]$ denotes a quantum code of dimension $2^k$. 
If it is a stabilizer code $Q(S)$ then $d$ is equal to the minimum weight of the elements in $\mathrm{Centraliser}(S) \setminus S$.

We now rewrite the Shor code from Example~\ref{shorcode1} as a stabilizer code. 

\begin{example} \label{shorcode2} (A $[\![9,1,3]\!]$ code)
Let $S$ be the subgroup generated by the following elements of $\mathcal{P}_9$.
$$
\begin{array}{cccc}
 \vspace{.1 cm}
M_1 & = & \sigma_z \otimes \sigma_z \otimes \sigma_{0} \otimes \sigma_{0} \otimes \sigma_{0} \otimes \sigma_{0} \otimes \sigma_{0} \otimes \sigma_{0} \otimes \sigma_{0} \\  \vspace{.1 cm}
M_2 & = & \sigma_{0} \otimes \sigma_z \otimes \sigma_z \otimes \sigma_{0} \otimes \sigma_{0} \otimes \sigma_{0} \otimes \sigma_{0} \otimes \sigma_{0} \otimes \sigma_{0}\\ \vspace{.1 cm}
M_3 & = &  \sigma_{0} \otimes \sigma_{0} \otimes \sigma_{0} \otimes\sigma_z \otimes \sigma_z \otimes \sigma_{0} \otimes \sigma_{0} \otimes \sigma_{0} \otimes \sigma_{0}\\ \vspace{.1 cm}
M_4 & = & \sigma_{0} \otimes \sigma_{0} \otimes \sigma_{0} \otimes\sigma_{0} \otimes \sigma_z \otimes \sigma_z \otimes  \sigma_{0} \otimes \sigma_{0} \otimes \sigma_{0}\\ \vspace{.1 cm}
M_5 & = &  \sigma_{0} \otimes \sigma_{0} \otimes \sigma_{0} \otimes  \sigma_{0} \otimes \sigma_{0} \otimes \sigma_{0}\otimes  \sigma_z \otimes \sigma_z \otimes \sigma_{0}\\ \vspace{.1 cm}
M_6 & = & \sigma_{0} \otimes \sigma_{0} \otimes \sigma_{0} \otimes  \sigma_{0} \otimes \sigma_{0} \otimes \sigma_{0} \otimes\sigma_{0} \otimes \sigma_z \otimes \sigma_z \\ \vspace{.1 cm}
M_7  & = &  \sigma_{x} \otimes \sigma_{x} \otimes \sigma_{x} \otimes  \sigma_{x} \otimes \sigma_{x} \otimes \sigma_{x}\otimes  \sigma_{0} \otimes \sigma_{0}  \otimes \sigma_{0}\\ \vspace{.1 cm}
M_8  & = & \sigma_{0} \otimes \sigma_{0} \otimes \sigma_{0} \otimes  \sigma_{x} \otimes \sigma_{x} \otimes \sigma_{x} \otimes \sigma_{x} \otimes \sigma_x \otimes \sigma_x
\end{array} 
$$

In shorthand notation this would be written in the following way.
$$
\begin{array}{rccccccccc} 
M_I= & Z & Z & I & I & I & I & I & I &I \\ 
M_2 =& I & Z & Z & I & I & I & I & I & I \\ 
M_3= & I &I & I & Z & Z & I & I &  I & I \\
M_4= &  I & I &  I & I & Z & Z & I & I & I \\
M_5 =&   I & I & I & I & I & I & Z & Z & I \\ 
 M_6 =&  I & I & I & I & I & I & I & Z & Z \\
M_7=&    X & X & X & X & X & X & I & I & I \\
 M_8 =&    I & I& I & X & X & X & X &  X & X \end{array}
$$

One can check that $M_i$ and $M_j$ commute for any $i$ and $j$. 

Suppose that $E$ is an error of weight at most $2$. We want to prove that $E \in S$ or $E$ does not commute with some $M_i$. 

We proceed with a case-by-case analysis.

If $E$ has weight one and a single $X$ or $Y$ then it does not commute with one of $M_1,\ldots,M_6$. 
If $E$ has weight one and a single $Z$ then it does not commute with one of $M_7,M_8$. 

If $E$ has weight two which are both $X$ then, without loss of generality, suppose there is a $X$ in the first system. Then $E$ must have a $X$ or $Y$ in the second system so that it commutes with $M_1$. But then it must also have a $X$ or $Z$ in the third system so that it commutes with $M_2$, contradicting the fact that it has weight two.

We leave the case-by-case analysis as an exercise but conclude that the only errors of weight two which commute with all the $M_i$ are precisely those which are in $S$, i.e. $M_1,\ldots,M_6,M_1M_2,M_3M_4,M_5M_6$. 

We will prove that the minimum distance of this code is $3$ in a very simple manner once we have determined its geometry.

An important observation here is that the Shor code is {\em impure} since $S$ contains errors of weight $2$, whereas the minimum distance is $3$.
\end{example}

We can store the same amount of information on fewer qubits with the following code.

\begin{example}
\label{spread1} (A $[\![5,1,3]\!]$ code)
Let $S$ be the subgroup generated by the following elements of $\mathcal{P}_5$.
$$
\begin{array}{rccccc} 
M_I= & X & Z & Z & I & X  \\ 
M_2= & Z & X & I & Z & X  \\ 
M_3 = & I & Z & X & Z & Y  \\
M_4 = & Z & I &  Z & X & Y  \\
\end{array}
$$
This matrix makes the task of checking that $M_iM_j=M_jM_i$ fairly quick.  We will prove that the minimum distance is $3$ by considering its geometry in Example~\ref{5zero3code2}.

Let us see how we can use this example to correct errors of weight one. We perform measurements $\hat{M_i}$ on $E\ket{\phi}$. This will return a value $\pm 1$ (the eigenvalues of $M_i$). This gives us a ``syndrome'', a $4$-tuple of signs for each error $E$. These are given in the following tables. 

$$\begin{array}{c|cccc} 
 & M_1 & M_2 & M_3 & M_4 \\ \hline
XIIII & + & - & + & - \\ 
IXIII & - & + & - & + \\ 
IIXII & - & + & + & - \\ 
IIIXI & + & - & - & + \\ 
IIIIX & + & + & - & - \\ 
\end{array}
\begin{array}{c|cccc} 
 & M_1 & M_2 & M_3 & M_4 \\ \hline
ZIIII & - & + & + & + \\ 
IZIII & + & - & + & + \\ 
IIZII & + & + & - & + \\ 
IIIZI & + & + & + & - \\ 
IIIIZ & - & - & - & - \\ 
\end{array}
$$
$$
\begin{array}{c|cccc} 
 & M_1 & M_2 & M_3 & M_4 \\ \hline
YIIII & - & - & + & - \\ 
IYIII & - & - & - & + \\ 
IIYII & - & + & - & - \\ 
IIIYI & + & - & - & - \\ 
IIIIY & - & - & + & + \\ 
\end{array}
$$
Since each syndrome is distinct we can use this look-up table to identify the error and correct it. An important observation here is that when we perform the measurement $\hat{M_i}$, only the sign of the state can possibly change. Since 
$$
M_iE\ket{\phi}=\pm EM_i \ket{\phi}=\pm E\ket{\phi},
$$
$E\ket{\phi}$ is an eigenvector of $M_i$, so after measuring we will be in the state $\pm E\ket{\phi}$.
Thus, we can measure consecutively each measurement $\hat{M_i}$, for $i=1,\ldots,n-k$.
\end{example}

\subsection{Qubit stabilizer codes as binary linear codes}

In this section we introduce a connection between qubit stabilizer codes and classical binary linear codes. We will go on to exploit this connection to construct qubit quantum codes and then to realise a more general connection between stabilizer codes and classical codes.

Let ${\mathbb F}_q$ denote the finite field with $q$ elements.
Consider the map 
$$
\tau : \{\sigma_0,\sigma_x,\sigma_y,\sigma_z\} \rightarrow {\mathbb F}_2^2
$$ 
defined by the following table.
$$
\tau : \begin{cases}
\begin{array}{rcl} \vspace{0.1 cm}
\sigma_0 & \mapsto & (0 | 0) \\ \vspace{0.1 cm}
\sigma_x & \mapsto & (1 | 0) \\ \vspace{0.1 cm}
\sigma_z & \mapsto & (0 | 1) \\ \vspace{0.1 cm}
\sigma_y & \mapsto & (1 | 1) \\ 
\end{array}
\end{cases}
$$
We extend the map $\tau$ to $\mathcal{P}_n$ by applying $\tau$ to an element of $\mathcal{P}_n$ coordinatewise, where the image of the $j$-th position of $M$ is the $j$ and $(j+n)$-th coordinate in $\tau(M)$. For example,
$$
\tau(\sigma_x \otimes \sigma_y \otimes \sigma_0 \otimes \sigma_x \otimes \sigma_z)=(1 1 0 1 0 \ |\ 0 1 0 0 1).
$$
We draw the line between the $n$ and $(n+1)$-st coordinate, for readability sake.
We ignore the phase, so $\tau(\lambda M)=\tau(M)$ for all $\lambda \in \{ \pm 1, \pm i\}$. Effectively, this defines the domain of the map $\tau$ as $\mathcal{P}_n/ \{ \pm 1, \pm i\}$.

\begin{lemma} \label{tauadd}
For all $M,N \in \mathcal{P}_n/ \{ \pm 1, \pm i\}$,
$$
\tau(MN)=\tau(M)+\tau(N).
$$
\end{lemma}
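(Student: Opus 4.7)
The plan is to reduce the statement to a one-qubit check and then verify that finite check directly.

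First I would observe that $\tau$ is defined coordinatewise: if $M = A_1 \otimes \cdots \otimes A_n$ and $N = B_1 \otimes \cdots \otimes B_n$ with each $A_i, B_i \in \{\sigma_0, \sigma_x, \sigma_y, \sigma_z\}$ (absorbing all phases into the quotient), then
$$
MN = \lambda\, (A_1 B_1) \otimes \cdots \otimes (A_n B_n)
$$
for some $\lambda \in \{\pm 1, \pm i\}$, since multiplication in a tensor product acts factorwise. Because $\tau$ ignores the phase $\lambda$ and writes the $j$-th and $(j{+}n)$-th coordinates of $\tau(\cdot)$ from the $j$-th tensor factor, the identity $\tau(MN) = \tau(M) + \tau(N)$ in $\mathbb{F}_2^{2n}$ splits into $n$ independent identities, one per tensor position. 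Hence it suffices to prove the claim for $n = 1$.

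For $n = 1$ the statement reduces to checking that for any two single-qubit Paulis $A, B \in \{\sigma_0, \sigma_x, \sigma_y, \sigma_z\}$, the product $AB$ reduces, modulo a phase in $\{\pm 1, \pm i\}$, to the single Pauli whose image under $\tau$ is $\tau(A) + \tau(B) \in \mathbb{F}_2^2$. This is a finite verification. The cases involving $\sigma_0$ and the squares $\sigma_x^2 = \sigma_y^2 = \sigma_z^2 = \sigma_0$ are immediate from the definition of $\tau$, since $\tau(\sigma_0) = (0|0)$ and each of $(1|0), (0|1), (1|1)$ squares to $(0|0)$ in $\mathbb{F}_2^2$. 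For the remaining cross terms one uses the identities $\sigma_x \sigma_y = i\sigma_z$, $\sigma_y \sigma_z = i\sigma_x$, $\sigma_z \sigma_x = i\sigma_y$ (and their reverses, which differ only by a sign), and checks that in each case $\tau(A) + \tau(B) = \tau(AB)$; for instance $\tau(\sigma_x) + \tau(\sigma_z) = (1|0) + (0|1) = (1|1) = \tau(\sigma_y)$, matching $\sigma_x \sigma_z = -i\sigma_y \equiv \sigma_y \pmod{\{\pm 1, \pm i\}}$.

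The main (minor) obstacle is purely bookkeeping: making sure all the $\pm i$ phases that arise in the Pauli multiplication table are correctly quotiented out so that the check truly lives in $\mathcal{P}_1/\{\pm 1, \pm i\}$. Once that is handled, the one-qubit table above confirms the homomorphism property, and the factorwise reduction lifts it to general $n$. In fact, a cleaner phrasing is to note that the map $\tau$ is, by construction, the composition of the quotient $\mathcal{P}_n \to \mathcal{P}_n/\{\pm 1, \pm i\}$ with an isomorphism of the latter onto $(\mathbb{F}_2^2)^n = \mathbb{F}_2^{2n}$ viewed as an additive group; the lemma is then the content of this isomorphism, and the finite one-qubit check is exactly the verification that this map respects the group operation.
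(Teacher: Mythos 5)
Your proposal is correct and follows essentially the same route as the paper, which simply observes that the multiplicative structure of the single-qubit Paulis modulo phase is isomorphic to the additive group $\mathbb{F}_2^2$ and that this extends coordinatewise; you have merely spelled out the finite one-qubit verification that the paper leaves implicit. No gaps.
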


\begin{proof}
Observe that the multiplicative structure up to a phase factor (for example we ignore the $i$ in $\sigma_y=i\sigma_x \sigma_z$) is isomorphic to the additive structure of 
${\mathbb F}_2^2$.
\end{proof}

We have established a bijection between the elements of $\mathcal{P}_n/ \{ \pm 1, \pm i\}$ and ${\mathbb F}_2^{2n}$. 
The above lemma implies that a subgroup $S$ of $\mathcal{P}_n$ is in bijective correspondence with a subspace of ${\mathbb F}_2^{2n}$. 
We now wish to ascertain what property this subspace has if $S$ is a subgroup generated by commuting elements of $\mathcal{P}_n$.

To this end, we define an alternating form for $u,w \in {\mathbb F}_2^{2n}$,
$$
(u,w)_a=\sum_{j=1}^n (u_{j}w_{j+n}-u_{j+n}w_{j}).
$$

\begin{lemma} \label{taucommutes}
For $M,N \in \mathcal{P}_n/ \{ \pm 1, \pm i\}$,
$$
MN = NM \qquad \mathrm{if}\ \mathrm{and} \ \mathrm{only} \ \mathrm{if} \ \qquad (\tau(M),\tau(N))_a=0.
$$
\end{lemma}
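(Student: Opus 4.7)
The plan is to reduce to the single-qubit case and then extend to tensor products by tracking sign contributions coordinate by coordinate. The key observation is that for single Pauli matrices, commutativity is binary: any two $\s_a, \s_b$ either commute (when one is $\s_0$ or when they are equal) or anticommute (when they are distinct non-identity Paulis). So the claim reduces to checking that the single-coordinate alternating form vanishes in exactly the commuting cases.

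First, I would verify the base case $n=1$ by a short case analysis. For $u=\tau(A)=(a_1 | a_2)$ and $w=\tau(B)=(b_1 | b_2)$, the single-position form reads $a_1 b_2 + a_2 b_1 \pmod 2$ (signs do not matter over $\F_2$). A glance at the four cases of distinct non-identity Paulis (e.g.\ $\s_x,\s_z$ gives $(1|0),(0|1)$, form $=1$; $\s_x,\s_y$ gives $(1|0),(1|1)$, form $=1$; etc.) shows the form is $1$ precisely when $A$ and $B$ anticommute, and $0$ when they commute. This is the heart of the matter.

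Next, I lift to $\mathcal{P}_n$. Write $M=M_1\ot\cdots\ot M_n$ and $N=N_1\ot\cdots\ot N_n$ with each $M_j,N_j$ a single-qubit Pauli. In each coordinate $M_jN_j=\epsilon_jN_jM_j$ with $\epsilon_j\in\{\pm 1\}$, and the tensor product structure gives $MN=(\prod_j\epsilon_j)NM$. Hence $M$ and $N$ commute (in $\mathcal{P}_n/\{\pm 1,\pm i\}$, equivalently in $\mathcal{P}_n$ since the phase of a product of two Pauli tensors is $\pm 1$) if and only if the number of coordinates $j$ in which $M_j,N_j$ anticommute is even.

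Finally, I combine: by definition,
$$
(\tau(M),\tau(N))_a \ =\ \sum_{j=1}^n \bigl(u_jw_{j+n}+u_{j+n}w_j\bigr) \pmod{2},
$$
where $u=\tau(M)$, $w=\tau(N)$, and by the base case the $j$-th summand is $1$ exactly when $M_j,N_j$ anticommute. Thus $(\tau(M),\tau(N))_a=0$ in $\F_2$ iff an even number of positions anticommute, which by the previous paragraph is iff $MN=NM$. The main (and essentially only) obstacle is the base case verification; once that is done the tensor step is a bookkeeping argument using the multiplicativity of signs in a tensor product against the additivity of the alternating form across coordinates.
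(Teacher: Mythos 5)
Your proof is correct and takes essentially the same approach as the paper: a coordinate-wise check that the single-position alternating form is nonzero exactly when the corresponding Pauli matrices anticommute, followed by the observation that $M$ and $N$ commute iff the number of anticommuting positions is even, which matches the form vanishing mod $2$. Your write-up is somewhat more explicit about the sign bookkeeping ($MN=(\prod_j\epsilon_j)NM$) than the paper, but the substance is identical.
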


\begin{proof} 
Suppose $u=\tau(M)$ and $w=\tau(N)$. One can check directly that 
$$
u_{j}w_{j+n}-w_{j}u_{j+n}=0
$$ 
if and only if the Pauli matrices in the $j$-th position of $M$ and $N$ commute and is $\pm 1$ otherwise. 

The operators $M$ and $N$ commute if and only if there are an even number of positions where the Pauli-matrices do not commute. This is the case if and only if there are an even number of coordinates $j$ for which 
$$
u_{j}w_{j+n}-w_{j}u_{j+n}=1,
$$ 
a condition equivalent to $(\tau(M),\tau(N))_a=0$. 
\end{proof}

The symplectic weight of a vector $v \in {\mathbb F}_2^{2n}$ is defined as
$$
|\{ i \in \{1,\ldots,n\} \ | \ (v_i,v_{i+n}) \neq (0,0) \}|.
$$

\begin{lemma}
The weight of $M \in \mathcal{P}_n$ is equal to the symplectic weight of $\tau(M)$.
\end{lemma}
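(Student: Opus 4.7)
The plan is to unpack both definitions and observe that they count the same positions. Write $M = \s_1 \ot \cdots \ot \s_n$ with each $\s_j \in \{\sigma_0,\sigma_x,\sigma_y,\sigma_z\}$ (absorbing the global phase, which $\tau$ ignores anyway). By definition, $\wt(M)$ is the cardinality of the set $J(M) = \{j \in \{1,\dots,n\} : \s_j \neq \sigma_0\}$.

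Next I would inspect the four values of $\tau$ on single-qubit Pauli matrices. The table defining $\tau$ sends $\sigma_0 \mapsto (0|0)$, while $\sigma_x \mapsto (1|0)$, $\sigma_z \mapsto (0|1)$, and $\sigma_y \mapsto (1|1)$ are all sent to nonzero pairs in ${\mathbb F}_2^2$. Consequently, if $v = \tau(M)$, then for each position $j$ we have $(v_j, v_{j+n}) = \tau(\s_j)$, and so $(v_j, v_{j+n}) \neq (0,0)$ if and only if $\s_j \neq \sigma_0$, i.e. if and only if $j \in J(M)$.

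Therefore the set of positions contributing to the symplectic weight of $\tau(M)$ coincides exactly with $J(M)$, and hence the two weights are equal. There is no real obstacle here — the statement is essentially a direct check that the zero pair $(0,0)$ is the unique preimage of $\sigma_0$ under $\tau$, which is read off from the defining table.
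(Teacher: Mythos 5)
Your proof is correct and takes essentially the same approach as the paper: both arguments reduce to the observation that $\sigma_0$ is the unique Pauli matrix mapped to $(0,0)$ under $\tau$, so the non-identity positions of $M$ are exactly the positions contributing to the symplectic weight of $\tau(M)$. The paper phrases this via the complement (counting $\sigma_0$'s equals $n$ minus either weight), while you count the contributing positions directly, but the content is identical.
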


\begin{proof}
We have that $n-\mathrm{wt}(M)$ is equal to the number of $\sigma_0$'s in $M$ which is equal to $n$ minus the symplectic weight of $\tau(M)$.
\end{proof}

For a subspace $C \leqslant {\mathbb F}_2^{2n}$, we define $\perp_a$ as
$$
C^{\perp_a}=\{ u \in  {\mathbb F}_2^{2n} \ | \ (u,w)_a=0, \ \mathrm{for} \ \mathrm{all} \ w \in C\}.
$$

\begin{theorem} \label{stabtheorem}
$S$ is a subgroup of $\mathcal{P}_n$ generated by $n-k$ independent mutually commuting elements if and only if $C=\tau(S)$ is a $(n-k)$-dimensional subspace of ${\mathbb F}_2^{2n}$ for which $C \leqslant C^{\perp_a}$. If $k\neq 0$ then the minimum distance of $Q(S)$ is equal to the minimum symplectic weight of the elements of $C^{\perp_a} \setminus C$. If $k=0$ then the minimum distance of $Q(S)$ is equal to the minimum symplectic weight of the non-zero elements of $C=C^{\perp_a}$.
\end{theorem}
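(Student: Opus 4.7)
The plan is to execute the theorem in two stages: first, transfer the group-theoretic data of $S$ to the linear-algebraic data of $C$ via $\tau$; second, transport the minimum distance characterization of Theorem~\ref{stabdistance} across this transfer.

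For the correspondence, recall that $\tau$ descends to a bijection $\mathcal{P}_n/\{\pm 1,\pm i\}\to\mathbb{F}_2^{2n}$, and by Lemma~\ref{tauadd} it is a group homomorphism onto $(\mathbb{F}_2^{2n},+)$. In the forward direction, given independent commuting generators $M_1,\dots,M_{n-k}$ of $S$, I would first argue that $\tau(M_1),\dots,\tau(M_{n-k})$ are linearly independent in $\mathbb{F}_2^{2n}$. Indeed, a nontrivial relation $\sum_i \alpha_i\tau(M_i)=0$ would force $\prod_i M_i^{\alpha_i}\in\{\pm 1,\pm i\}\cdot\one$; the phases $\pm i$ are excluded since every element of $S$ has phase $\pm 1$, and $-\one$ is excluded by hypothesis, leaving $\prod_i M_i^{\alpha_i}=\one$, which contradicts the independence of the $M_i$. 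Hence $\dim C=n-k$, and Lemma~\ref{taucommutes} converts the pairwise commutativity of the $M_i$ into $C\leq C^{\perp_a}$. Conversely, starting from such a $C$ with chosen basis $v_1,\dots,v_{n-k}$, I would lift each $v_i$ to a Pauli operator $M_i$ with phase $+1$. These $M_i$ pairwise commute by Lemma~\ref{taucommutes}, and running the $\tau$-image argument backwards shows that the subgroup $S$ they generate avoids $-\one$ and that the $M_i$ remain independent.

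For the minimum distance when $k\geq 1$, I would invoke Theorem~\ref{stabdistance} to reduce the task to computing the minimum weight of an $E\in\mathrm{Centraliser}(S)\setminus S$. By Lemma~\ref{taucommutes}, $E$ commutes with every element of $S$ exactly when $\tau(E)\in C^{\perp_a}$, so $\tau$ maps $\mathrm{Centraliser}(S)$ onto $C^{\perp_a}$. Since weight and the predicate ``commutes with $S$'' are invariant under an overall phase, we may work on the level of $\tau$-classes, where $S$ is in bijection with $C$ and $\mathrm{Centraliser}(S)\setminus S$ is in bijection with $C^{\perp_a}\setminus C$. Combined with the equality between weight of $E$ and symplectic weight of $\tau(E)$, this shows the minimum distance equals the minimum symplectic weight of a vector in $C^{\perp_a}\setminus C$.

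The case $k=0$ is a brief corollary: $\dim C=n$ combined with $C\leq C^{\perp_a}$ and the dimension count $\dim C+\dim C^{\perp_a}=2n$ forces $C=C^{\perp_a}$; by the convention for self-dual codes set just before the theorem, the minimum distance of $Q(S)$ is the minimum weight of a nonidentity element of $S$, which $\tau$ transports to the minimum symplectic weight of a nonzero vector of $C$. The main obstacle throughout is the careful phase bookkeeping in $\mathcal{P}_n$: elements of $S$ must have phase $\pm 1$ while $\tau$ forgets all four phases, so the identifications $S\leftrightarrow C$ and $\mathrm{Centraliser}(S)\setminus S\leftrightarrow C^{\perp_a}\setminus C$ are only bijective after passing to classes modulo an overall phase, and one has to use $-\one\notin S$ decisively at each step to rule out spurious low-weight inhabitants of $\mathrm{Centraliser}(S)\setminus S$.
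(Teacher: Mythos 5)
Your proposal is correct and follows essentially the same route as the paper: Lemmas~\ref{tauadd} and \ref{taucommutes} give the correspondence $S\leftrightarrow C$ with $C\leqslant C^{\perp_a}$, Theorem~\ref{stabdistance} transports the minimum distance to $C^{\perp_a}\setminus C$ via the weight/symplectic-weight lemma, and the $k=0$ case is handled by the stated convention. The extra details you supply (the rank argument for $\dim C = n-k$, the lifting in the converse, and the phase bookkeeping) are exactly the content the paper delegates to Lemma~\ref{fullrankus} and to the surrounding discussion, so nothing is missing or different in substance.
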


\begin{proof}
The fact that $C=\tau(S)$ is contained in $C^{\perp_a}$ follows from Lemma~\ref{tauadd} and Lemma~\ref{taucommutes}.

By Theorem~\ref{stabdistance}, for $k \neq 0$, the minimum distance is equal to the minimum weight of the images of the elements of 
$\mathrm{Centraliser}(S)$ under $\tau$, which are not elements of the image of $S$. 
Since $C=\tau(S)$ and $C^{\perp_a}=\tau(\mathrm{Centraliser}(S))$, the theorem follows for $k \neq 0$. 

For $k=0$, by definition, the minimum distance is equal to the minimum weight of the images of the elements of $S$ under $\tau$, which are the non-zero elements of $C$.
\end{proof}

We can construct a generator matrix $\mathrm{G}(S)$ for $C=\tau(S)$ by taking the $(n-k) \times 2n$ matrix 
whose $i$-th row is $\tau(M_i)$. 

\begin{lemma} \label{fullrankus}
$S$ is a subgroup of $\mathcal{P}_n$ generated by $n-k$ independent elements if and only if the matrix $\mathrm{G}(S)$ has rank $n-k$.
\end{lemma}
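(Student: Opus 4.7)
The plan is to translate the multiplicative independence condition in $S$ into $\mathbb{F}_2$-linear independence of the rows of $\mathrm{G}(S)$, using the homomorphism property of $\tau$ established in Lemma~\ref{tauadd}. The rows of $\mathrm{G}(S)$ are $\tau(M_1),\ldots,\tau(M_{n-k})$, so the statement amounts to showing that $\{M_1,\ldots,M_{n-k}\}$ is an independent generating set of $S$ if and only if $\{\tau(M_1),\ldots,\tau(M_{n-k})\}$ is an $\mathbb{F}_2$-linearly independent set in $\mathbb{F}_2^{2n}$.

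First I would verify that $\tau$ is injective when restricted to $S$. If $M,M'\in S$ satisfy $\tau(M)=\tau(M')$, then $M=\lambda M'$ for some $\lambda\in\{\pm 1,\pm i\}$, so $M(M')^{-1}=\lambda\one\in S$. The hypotheses on $S$ exclude $\lambda=-1$ directly, and also exclude $\lambda=\pm i$ since $(\pm i \one)^2=-\one$, which was ruled out in the remarks preceding the lemma. Hence $\lambda=1$ and $M=M'$. Combined with Lemma~\ref{tauadd}, this shows that $\tau$ induces a group isomorphism from $S$ onto the additive subgroup $C=\tau(S)\leqslant\mathbb{F}_2^{2n}$. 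Since every non-identity element of $S$ has order $2$ (each $M_i$ has phase $\pm1$ and $M_i^2=\one$), $C$ is in fact an $\mathbb{F}_2$-vector subspace.

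Next, because $\tau$ is a group homomorphism and each $M_i$ has order $2$, an arbitrary product $M_1^{\alpha_1}\cdots M_{n-k}^{\alpha_{n-k}}$ with $\alpha_i\in\{0,1\}$ maps under $\tau$ to $\alpha_1\tau(M_1)+\cdots+\alpha_{n-k}\tau(M_{n-k})$. By the injectivity established above, such a product equals $\one$ in $S$ if and only if the corresponding $\mathbb{F}_2$-linear combination of the $\tau(M_i)$ equals $0$. Therefore the $M_i$ fail to be independent generators (i.e.\ some proper subset already generates $S$, equivalently some $M_j$ lies in the subgroup generated by the others) precisely when some $\tau(M_j)$ is an $\mathbb{F}_2$-linear combination of the remaining $\tau(M_i)$'s, i.e.\ when $\mathrm{G}(S)$ has rank strictly less than $n-k$. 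Equivalently, the $M_i$ are independent if and only if $\mathrm{rank}(\mathrm{G}(S))=n-k$.

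The only real subtlety is keeping careful track of phases to ensure $\tau|_S$ is injective; once that is in place, everything else is formal bookkeeping with Lemma~\ref{tauadd}.
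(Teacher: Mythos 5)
Your proof is correct and follows essentially the same route as the paper's: both translate multiplicative dependence among the $M_i$ into $\mathbb{F}_2$-linear dependence of the rows of $\mathrm{G}(S)$ via the homomorphism property of $\tau$ from Lemma~\ref{tauadd}. You are in fact slightly more careful than the paper, which passes directly from $\sum_{j\in J}\tau(M_j)=0$ to $\prod_{j\in J}M_j=\one$ without explicitly ruling out the phases $-1,\pm i$ as you do.
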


\begin{proof}
There is a there is a proper subset $J \subseteq \{1,\ldots,n-k\}$ such that
$$
\sum_{j \in J} \tau(M_j)=0,
$$
if and only if the rank of $\mathrm{G}(S)$ is not $n-k$.
By Lemma~\ref{tauadd}, this is if and only if
$$
\prod_{j \in J} M_j=\one.
$$
\end{proof}

The following table makes for a useful reference.

\begin{small}
\begin{tabular}{@{}ll@{}}
\toprule
 $\mathcal{P}_n$ & the Pauli group, given by $n$-fold tensor products of Pauli matrices \\ & $\sigma_0, \sigma_x, \sigma_y, \sigma_z$ with phases $\{\pm i, \pm 1\}$.\\
 $M_1,\ldots,M_{n-k}$ & the generators, a set of independent elements of $\mathcal{P}_n$ that generate $S$. \\
 $S$ & the stabilizer, an abelian subgroup of $\mathcal{P}_n$. \\
 $Q(S)$ & the quantum code obtained as the joint intersection \\
 &  of the eigenspaces of eigenvalue $1$ of the operators in $S$. \\ 
 $ \mathrm{Centraliser}(S)$ &  the centraliser of $S$ in  $\mathcal{P}_n$  \\
 $C$ & the subspace of ${\mathbb F}_2^{2n}$ obtained from the image of $S$ under $\tau$.\\
 $C^{\perp_a}$ & the subspace of ${\mathbb F}_2^{2n}$ obtained as the image of $\mathrm{Centraliser}(S)$ under $\tau$.\\  
$\mathrm{G}(S)$ & the $(n-k) \times 2n$ generator matrix for $C$ whose $i$-th row is $\tau(M_i)$.\\
\bottomrule
\end{tabular}
\end{small}

\begin{example}\label{5zero3code} (A $[\![ 5,0,3] \!]$ stabilizer code).

Let $S$ be the subgroup of $\mathcal{P}_5$ generated by the following pairwise commuting elements.
$$
\begin{array}{rccccccccc} 
M_1 & =X & Z & I & I & Z  \\ 
M_2 & =Z & X & Z & I & I  \\ 
M_3 & =I & Z & X & Z & I  \\
M_4 & =I & I &  Z & X & Z  \\
M_5 & =Z & I & I &  Z & X  \\
\end{array}
$$
The matrix $\mathrm{G}(S)$ for this code is
$$
\left(
\begin{array}{ccccc|ccccc}
1 & 0 & 0 & 0 & 0 & 0 & 1 & 0 & 0 & 1\\
0 & 1 & 0 & 0 & 0 & 1 & 0 & 1 & 0 & 0 \\
0 & 0 & 1 & 0 & 0 & 0 & 1 & 0 & 1 & 0  \\
0 & 0 & 0 & 1 & 0 & 0 & 0 & 1 & 0 & 1 \\
0 & 0 & 0 & 0 & 1 & 1 & 0 & 0 & 1 & 0  \\
\end{array}
\right)
$$
One can check directly that $(u,v)_a=0$ for any two rows $u,v$ of $\mathrm{G}(S)$. Alternatively, it is enough to observe that $A$ is symmetric and that 
$$
(I \ | A)(\frac{A^t}{I})=A^t+A=A+A=0.
$$

We will prove in Example~\ref{5zero3code2} that the minimum distance of $Q(S)$ is $3$.

Observe that any $n \times n$ symmetric matrix $A$ gives a $[\![n,0,d]\!]$ code, where $G(S)=(I\ | \ A)$. The difficulty lies in choosing $A$ so that the symplectic weight of the code generated by $G$ (and hence $d$)  is large.

\end{example}

\section{The geometry of additive, linear and stabilizer codes} \label{sectiongeometry}

\subsection{Additive and linear codes over a finite field}
We recall that a {\em code} of length $n$ is a subset $C$ of $A^n$, where $A$ is a finite set called the {\em alphabet}. 
An element of $C$ is called a {\em codeword}.

The {\em distance} between any two elements of $A^n$ is the number of coordinates in which they differ. 
The {\em minimum distance} of $C$ is the minimum distance between any two codewords of $C$.

Suppose $A$ is a finite abelian group with identity element $0$. If $u+v \in C$ for all $u,v \in C$ then we say that $C$ is {\em additive}.

The {\em weight} of an element (codeword) $u$ of an additive code is the number of non-zero coordinates that it has.

\begin{lemma} \label{minweight}
If $C$ is an additive code over an alphabet which is a finite abelian group 
then the minimum distance $d$ of $C$ 
is equal to the minimum non-zero weight $w$.
\end{lemma}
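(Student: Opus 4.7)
The plan is to reduce distances between pairs of codewords to weights of single codewords, exploiting the group structure of the alphabet. The key observation is that in any abelian group, the number of coordinates in which $u$ and $v$ differ equals the number of non-zero coordinates of $u-v$, i.e.\ $d(u,v)=\operatorname{wt}(u-v)$. So if I can show that $u-v$ ranges over all non-zero elements of $C$ as $(u,v)$ ranges over ordered pairs of distinct codewords, the lemma falls out immediately.

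First I would note that $C$ must be non-empty (otherwise the minimum distance is vacuous) and, being a subset of the finite abelian group $A^n$ that is closed under addition, is therefore a finite submonoid of an abelian group and hence a subgroup. In particular $0\in C$ and $-u\in C$ whenever $u\in C$. This is the only place where one really uses that the ambient alphabet is a group and that $C$ is finite.

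Next I would argue the two inequalities. For $d\leqslant w$: pick any non-zero $u\in C$ of weight $w$; then $u$ and $0$ are distinct codewords at distance $\operatorname{wt}(u)=w$, so $d\leqslant w$. For $w\leqslant d$: pick any two distinct codewords $u,v\in C$ realising the minimum distance, so $d(u,v)=d$; then $u-v\in C$ is non-zero and has weight $d(u,v)=d$, so $w\leqslant d$. Combining these gives $d=w$.

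Honestly, there is no real obstacle here; the only subtlety is making sure the reader sees why $0$ and additive inverses lie in $C$, which is a consequence of finiteness of $A^n$ together with closure under addition. Everything else is a one-line translation between "coordinates where $u$ and $v$ differ" and "non-zero coordinates of $u-v$".
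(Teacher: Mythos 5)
Your proof is correct and follows essentially the same route as the paper: establish that $0$ and additive inverses lie in $C$ (the paper does this by summing $u$ with itself until the all-zero tuple appears, which is the same finiteness argument you package as ``a finite submonoid of a group is a subgroup''), then obtain $d\leqslant w$ from the pair $(u,0)$ and $w\leqslant d$ from the codeword $u-v$. Nothing further is needed.
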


\begin{proof}
Summing $u\in C$ enough times will eventually give the $n$-tuple of all zeros, hence $0=(0,\ldots,0) \in C$. Note that this implies $-u \in C$ too.
 
Suppose that~$u$ is a codeword of minimum weight~$w$. Then since $0 \in C$, we have $w \geqslant d$.

Suppose that $u$ and $v$ are two codewords which differ in exactly $d$~coordinates. Then $u-v$ is a codeword in $C$ of weight $d$ and so $d \geqslant w$.
\end{proof}

Suppose that $A={\mathbb F}_q$, the finite field with $q=p^h$ elements, $p$ prime. If $C$ is additive then $\lambda u \in C$ for all $\lambda \in {\mathbb F}_p$, so $C$ is a subspace over ${\mathbb F}_p$. If $C$ has the additional property that $\lambda u \in C$ for all $\lambda$ in ${\mathbb F}_q$ then we say $C$ is {\em linear}. A linear code of length $n$ is a subspace of ${\mathbb F}_q^n$.

We use the notation $(n,K,d)_q$ code to denote a code over an alphabet of size $q$ of length $n$, size $K$ and minimum distance $d$.

The notation $[n,k,d]_q$ code denotes a $k$-dimensional linear code over ${\mathbb F}_q$ of length $n$ and minimum distance $d$.

\subsection{The geometry of linear codes} \label{geomlinear}

We will begin our geometrical study of codes by considering linear codes over ${\mathbb F}_q$.

Let $\mathrm{G}$ be a $k\times n$ matrix. We recall that when $a^t$ is a row vector in ${\mathbb F}_q^k$, the expression $a^t\mathrm{G}$ yields a linear combination of the rows of $\mathrm{G}$. Likewise, when $b$ is a column vector in ${\mathbb F}_q^n$, 
the expression $\mathrm{G}b$ yields a linear combination of the columns of $\mathrm{G}$.

Let $C$ be a $k$-dimensional linear code over ${\mathbb F}_q$ of length $n$, in other words, $C$ is a $k$-dimensional subspace of ${\mathbb F}_q^n$. We describe $C$ by a $k \times n$ matrix $\mathrm{G}$ whose row space is $C$, i.e. the rows of $\mathrm{G}$ are a basis for $C$. Thus, for each $u\in C$, there is an $a^t=(a_1,\ldots,a_k) \in {\mathbb F}_q^k$ such that
$$
u=a^t\mathrm{G}.
$$
In other words, the {\em generator matrix} $\mathrm{G}$ acts as a linear encoding matrix for the message $a$, 
yielding the codeword $u$ ready to be sent over a noisy channel.

The geometry of $C$ is seen by considering the set of columns of the generator matrix~$\mathrm{G}$. Let $\mathcal{X}$ be the set of columns of $\mathrm{G}$, 
so $\mathcal{X}$ is a (possibly multi-)set of $n$ vectors of ${\mathbb F}_q^k$. The codeword $u=a^t\mathrm{G}$ has a zero in its $i$-th coordinate if and only if 
$$
a \cdot z=a_1z_1+\cdots+a_kz_k=0
$$
where $z=(z_1,\ldots,z_k)$ is the $i$-th column of $\mathrm{G}$. This property is unaffected if we replace $z$ by a non-zero scalar multiple of $z$, so it is natural to consider $\mathcal{X}$ as a (possibly multi-)set of $n$ points of $\mathrm{PG}(k-1,q)$, the $(k-1)$-dimensional projective space over ${\mathbb F}_q$.

The projective space $\mathrm{PG}(k-1,q)$ is obtained from the vector space ${\mathbb F}_q^k$ by identifying the vectors which are scalar multiples of each other. In this way, the {\em points} of $\mathrm{PG}(k-1,q)$ are the one-dimensional subspaces of ${\mathbb F}_q^k$ and, more generally, the $(i-1)$-dimensional subspaces of $\mathrm{PG}(k-1,q)$ are the $i$-dimensional subspaces of ${\mathbb F}_q^k$. The {\em lines, planes} and {\em hyperplanes} of $\mathrm{PG}(k-1,q)$ are the $1$-dimensional, $2$-dimensional and co-dimension $1$ subspaces, respectively. Note that in $\mathrm{PG}(k-1,q)$ familiar geometric properties hold. For example, two points are joined by a line; the intersection of two planes in a three-dimensional subspace is a line. If a point $x$ is contained in a subspace $\pi$ we say that $x$ is {\em incident} with $\pi$. If two subspaces $\pi_1$ and $\pi_2$ have an empty intersection (i.e. their corresponding subspaces in ${\mathbb F}_q^k$ intersect in the zero vector), then we say that they are {\em skew}.

A set of points $x_1,\ldots,x_r$ of a projective space are {\em independent} if they span an $(r-1)$-dimensional (projective) subspace. If they are not independent then they are {\em dependent}.

The number of $r$-tuples of linearly independent vectors of ${\mathbb F}_q^k$ is 
$$
(q^k-1)(q^{k-1}-1)\cdots (q^{k-r+1}-1).
$$
Hence, the number of $r$-dimensional subspaces of ${\mathbb F}_q^k$ is
$$
\left[ \begin{array}{c} k \\ r \end{array}\right]_q:=\frac{(q^k-1)(q^{k-1}-1)\cdots (q^{k-r+1}-1)}{(q^r-1)(q^{r-1}-1)\cdots(q-1)}.
$$

Thus, the number of points of $\mathrm{PG}(k-1,q)$ is
$$
\frac{q^k-1}{q-1}=q^{k-1}+q^{k-2}+\cdots+q+1.
$$

There is a natural duality between the points of $\mathrm{PG}(k-1,q)$ and the hyperplanes of $\mathrm{PG}(k-1,q)$. A point $(a_1,\ldots,a_k)$ is mapped to the hyperplane defined as the kernel as the linear form
$$
a_1X_1+\cdots+a_kX_k.
$$
For example, the point $(1,-1,0)$ is mapped to the hyperplane $X_1-X_2=0$,

Thus, the number of hyperplanes of $\mathrm{PG}(k-1,q)$ is also
$$
q^{k-1}+q^{k-2}+\cdots+q+1,
$$
which can be checked directly by calculating $\left[ \begin{array}{c} k \\ k-1 \end{array}\right]_q$.

The number of lines of $\mathrm{PG}(3,q)$ is
$$
\frac{(q^4-1)(q^3-1)}{(q^2-1)(q-1)}=(q^2+1)(q^2+q+1).
$$

The number of points in $\mathrm{PG}(k-1,2)$ is $2^k-1$ and the number of lines of 
 $\mathrm{PG}(k-1,2)$ is $(2^k-1)(2^{k-1}-1)/3$.

\begin{lemma} \label{numberofsubspaces}
The number of $(r-1)$-dimensional subspaces of $\mathrm{PG}(k-1,q)$ containing a fixed $(s-1)$-dimensional subspace is
$$
\left[ \begin{array}{c} k-s \\ r -s \end{array}\right]_q.
$$
\end{lemma}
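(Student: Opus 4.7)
The plan is to argue by direct enumeration, mirroring the counting argument that gave the formula for $\left[\begin{array}{c} k \\ r \end{array}\right]_q$ in the paragraph just above. Fix an $(s-1)$-dimensional subspace $\Pi$ of $\mathrm{PG}(k-1,q)$ and let $V$ be the corresponding $s$-dimensional subspace of ${\mathbb F}_q^k$. Fix a basis $e_1,\ldots,e_s$ of $V$. An $(r-1)$-dimensional subspace of $\mathrm{PG}(k-1,q)$ containing $\Pi$ is exactly an $r$-dimensional subspace $W$ of ${\mathbb F}_q^k$ containing $V$, so it suffices to count the latter.

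First I would count ordered $(r-s)$-tuples of vectors $(v_{s+1},\ldots,v_r)$ in ${\mathbb F}_q^k$ such that $e_1,\ldots,e_s,v_{s+1},\ldots,v_r$ is linearly independent. Choosing $v_{s+i}$ outside the span of the previously selected $s+i-1$ vectors gives $q^k-q^{s+i-1}$ choices, so the total number of such tuples is
$$
(q^k-q^s)(q^k-q^{s+1})\cdots(q^k-q^{r-1}).
$$
Next, each $r$-dimensional subspace $W$ containing $V$ is produced in this enumeration by exactly as many tuples as there are ordered bases of $W$ extending $e_1,\ldots,e_s$. Running the same argument inside $W$ yields
$$
(q^r-q^s)(q^r-q^{s+1})\cdots(q^r-q^{r-1})
$$
such extensions, so the number of subspaces $W$ is the ratio of the two products.

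To finish, I would factor $q^k-q^{s+i-1}=q^{s+i-1}(q^{k-s-i+1}-1)$ and similarly in the denominator. The powers of $q$ cancel term-by-term and the remaining factors reassemble exactly into $\left[\begin{array}{c} k-s \\ r-s \end{array}\right]_q$. There is no conceptual obstacle; the only thing one has to be careful about is lining up the index ranges of the two products so that the factorisations cancel correctly.

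A conceptually cleaner but essentially equivalent alternative is to pass to the quotient space ${\mathbb F}_q^k/V$: the $r$-dimensional subspaces of ${\mathbb F}_q^k$ containing $V$ are in bijection with the $(r-s)$-dimensional subspaces of the $(k-s)$-dimensional space ${\mathbb F}_q^k/V$, and one then applies the formula for the number of subspaces of a given dimension, quoted in the paragraph preceding the lemma, directly.
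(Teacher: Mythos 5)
Your proposal is correct. Your primary argument is a direct enumeration: you count ordered extensions of a fixed basis of $V$ to a basis of an $r$-dimensional $W \supseteq V$, divide by the number of such extensions producing a given $W$, and simplify the ratio to $\left[\begin{array}{c} k-s \\ r-s \end{array}\right]_q$; the index bookkeeping checks out (the numerator factors reduce to $(q^{k-s}-1)\cdots(q^{k-r+1}-1)$ and the denominator to $(q^{r-s}-1)\cdots(q-1)$, as required). The paper instead gives only the one-line quotient-space argument — subspaces containing $U$ correspond to subspaces of ${\mathbb F}_q^k/U$ — which is precisely the "conceptually cleaner alternative" you mention at the end. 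So you have in effect supplied both proofs: your direct count is more self-contained and makes the cancellation explicit, while the paper's quotient argument is shorter but silently relies on the bijection between subspaces of the quotient and subspaces containing $U$, together with the already-established formula for $\left[\begin{array}{c} k-s \\ r-s \end{array}\right]_q$. Either route is fully rigorous.
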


\begin{proof}
For any $s$-dimensional subspace $U$ of ${\mathbb F}_q^k$, the quotient space ${\mathbb F}_q^k/U$ is a $(k-s)$-dimensional vector space. An $r$-dimensional subspace containing $U$ is a $(r-s)$-dimensional subspace in the quotient space. Thus, the lemma holds, taking into account the dimension shift when considering the projective space.
\end{proof}

The following theorem explains what the minimum distance $d$ of a linear code implies for the set of points $\mathcal{X}$.

\begin{theorem} \label{lineargeomthm}
An $[n,k,d]$ linear code over ${\mathbb F}_q$ is equivalent to a (possibly multi-)set of points $\mathcal{X}$ in $\mathrm{PG}(k-1,q)$ in which every hyperplane of $\mathrm{PG}(k-1,q)$ contains at most $n-d$ points of $\mathcal{X}$ and some hyperplane contains exactly $n-d$ points of $\mathcal{X}$.
\end{theorem}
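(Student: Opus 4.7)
The plan is to set up an explicit dictionary between nonzero codewords of $C$ and hyperplanes of $\mathrm{PG}(k-1,q)$ via the columns of a generator matrix, and then to read off the minimum distance as a statement about the maximum intersection of $\mathcal{X}$ with a hyperplane.

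First, I would fix a generator matrix $\mathrm{G}$ of $C$ (so the rows of $\mathrm{G}$ form a basis of $C$) and let $\mathcal{X}$ be the multiset of its $n$ columns $z_1,\ldots,z_n$, viewed as points in $\mathrm{PG}(k-1,q)$. Since $\mathrm{G}$ has rank $k$, the map $a \mapsto a^t \mathrm{G}$ is a linear bijection between ${\mathbb F}_q^k$ and $C$, sending $0$ to $0$. The $i$-th coordinate of $u = a^t \mathrm{G}$ is $a \cdot z_i = a_1 z_{1i}+\cdots+a_k z_{ki}$, so $u$ has a zero in position $i$ if and only if $z_i$ lies on the hyperplane $H_a$ of $\mathrm{PG}(k-1,q)$ cut out by the linear form $a_1 X_1+\cdots+a_k X_k$. (One may, without loss of generality, assume $\mathrm{G}$ has no zero columns, since such a coordinate is identically zero in every codeword and may be deleted; this is precisely the convention needed for each $z_i$ to correspond to an actual point of $\mathrm{PG}(k-1,q)$.)

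It follows that the number of zero coordinates of $u = a^t \mathrm{G}$ equals $|H_a \cap \mathcal{X}|$, counted with multiplicity, so $\mathrm{wt}(u) = n - |H_a \cap \mathcal{X}|$. By Lemma~\ref{minweight}, $d$ is the minimum weight of a nonzero codeword, so
$$
d \;=\; n \,-\, \max_{0 \neq a \in {\mathbb F}_q^k} |H_a \cap \mathcal{X}|.
$$
As $a$ ranges over nonzero vectors of ${\mathbb F}_q^k$ (up to scalar, which does not change $H_a$), the hyperplane $H_a$ ranges over all hyperplanes of $\mathrm{PG}(k-1,q)$ by the point–hyperplane duality recalled before the theorem. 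This yields precisely the two stated conditions: every hyperplane meets $\mathcal{X}$ in at most $n-d$ points, and some hyperplane meets it in exactly $n-d$ points.

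For the converse, given a multiset $\mathcal{X}$ of $n$ points of $\mathrm{PG}(k-1,q)$ satisfying the hyperplane condition, pick a representative vector in ${\mathbb F}_q^k$ for each point and assemble these as the columns of a $k\times n$ matrix $\mathrm{G}$. If the columns failed to span ${\mathbb F}_q^k$, all of $\mathcal{X}$ would lie in some hyperplane, forcing $n - d \geq n$, which is impossible for $d\geq 1$; hence $\mathrm{G}$ has rank $k$ and its row space is an $[n,k,d']$ code, with $d' = d$ by the identical calculation. The main subtlety is the meaning of ``equivalent'': linear codes are here regarded as equivalent up to permutation of coordinates and rescaling of each coordinate by a nonzero scalar, which is exactly the freedom one has in ordering the multiset $\mathcal{X}$ and in choosing projective representatives for each point; this is what makes the correspondence $C \leftrightarrow \mathcal{X}$ well defined in both directions.
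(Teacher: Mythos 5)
Your proof is correct and follows essentially the same route as the paper's: the dictionary between nonzero codewords $u=a^t\mathrm{G}$ and hyperplanes $\pi_a$ via the columns of a generator matrix, together with Lemma~\ref{minweight} to convert minimum weight into a maximum hyperplane intersection. You are in fact somewhat more careful than the paper, which only argues the forward direction and does not address the converse, the need to exclude zero columns, or the precise notion of equivalence (permutation of coordinates and rescaling versus choice of projective representatives) -- these additions are welcome but do not change the underlying argument.
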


\begin{proof}
Let $\mathrm{G}$ be a $k \times n$ matrix whose row space is a $[n,k,d]$ linear code $C$. Let $\mathcal{X}$ be the set of columns of $\mathrm{G}$ viewed as points of $\mathrm{PG}(k-1,q)$. 

Recall that the codeword $u=a^t\mathrm{G}$ has a zero in its $i$-th coordinate if and only if 
$$
a \cdot z=a_1z_1+\cdots+a_kz_k=0
$$
where $z=(z_1,\ldots,z_k)$ is the $i$-th column of $\mathrm{G}$. 

The kernel of the linear form
$$
a_1X_1+\cdots+a_kX_k
$$
defines a hyperplane $\pi_a$ of $\mathrm{PG}(k-1,q)$.  The codeword $u=a^t\mathrm{G}$ has weight $w$ if and only if $u$ has exactly $n-w$ zero coordinates. This is the case if and only if $\pi_a$ is incident with $n-w$ points of $\mathcal{X}$. 

By Lemma~\ref{minweight}, the minimum distance of a linear code is equal to its minimum weight. Hence, the maximum number of points of $\mathcal{X}$ on a hyperplane of $\mathrm{PG}(k-1,q)$ is $n-d$, where $d$ is the minimum distance of $C$.
\end{proof}

\subsection{The geometry of additive codes} \label{additivecodes}

An additive code $C$ over ${\mathbb F}_{q}$ is linear over ${\mathbb F}_p$, where $q=p^h$ for some prime $p$. Therefore, $|C|=p^r$ for some $r$. The following theorem is the additive version of Theorem~\ref{lineargeomthm}; the set of points $\mathcal{X}$ is replaced by a set of subspaces.
 
\begin{theorem} \label{additivegeom}
An $(n,p^{r},d)$ additive code over ${\mathbb F}_q$ with $q = p^h$ is equivalent to a (possibly multi-)set $\mathcal{X}$ of $\leqslant (h-1)$-dimensional subspaces  in $\mathrm{PG}(r-1,p)$ in which every hyperplane of $\mathrm{PG}(r-1,p)$ contains at most $n-d$ subspaces of $\mathcal{X}$ and some hyperplane contains exactly $n-d$ subspaces of $\mathcal{X}$.
\end{theorem}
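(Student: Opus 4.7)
The plan is to mirror the proof of Theorem~\ref{lineargeomthm}, but to work inside $\mathrm{PG}(r-1,p)$ and attach a subspace (rather than a single point) to each coordinate. First I fix an $\mathbb{F}_p$-basis $\{\beta_1,\ldots,\beta_h\}$ of $\mathbb{F}_q$ and an $r\times n$ matrix $\mathrm{G}$ over $\mathbb{F}_q$ whose rows form an $\mathbb{F}_p$-basis for $C$, so that every codeword has the form $a^t \mathrm{G}$ for a unique $a\in \mathbb{F}_p^r$. Writing the entries of the $i$-th column of $\mathrm{G}$ as $z_{i,j}=\sum_{s=1}^h z_{i,j,s}\beta_s$ with $z_{i,j,s}\in\mathbb{F}_p$, I set $v_{i,s}=(z_{i,1,s},\ldots,z_{i,r,s})\in \mathbb{F}_p^r$ and let $V_i$ be their $\mathbb{F}_p$-span. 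Each $V_i$ has $\mathbb{F}_p$-dimension at most $h$, hence projective dimension at most $h-1$ in $\mathrm{PG}(r-1,p)$.

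The key computation is that
$$
a\cdot z_i = \sum_{s=1}^h \beta_s\,(a\cdot v_{i,s}),
$$
and since $\{\beta_s\}$ is an $\mathbb{F}_p$-basis of $\mathbb{F}_q$, the left-hand side vanishes in $\mathbb{F}_q$ if and only if $a\cdot v_{i,s}=0$ in $\mathbb{F}_p$ for every $s$, which in turn is equivalent to $V_i\subseteq \pi_a$, where $\pi_a=\{x\in\mathrm{PG}(r-1,p):a\cdot x=0\}$ is the hyperplane dual to $a$. Consequently, the weight of the codeword $a^t \mathrm{G}$ equals $n-|\{i:V_i\subseteq \pi_a\}|$. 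By Lemma~\ref{minweight}, the minimum distance of $C$ equals its minimum nonzero weight, so
$$
d \;=\; n - \max_{a\neq 0}\,\bigl|\{i:V_i\subseteq\pi_a\}\bigr|.
$$
This is exactly the statement that every hyperplane of $\mathrm{PG}(r-1,p)$ contains at most $n-d$ of the subspaces $V_i$, with equality attained by some hyperplane, which is the geometric condition on $\mathcal{X}=\{V_1,\ldots,V_n\}$.

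For the converse, given such a multiset $\mathcal{X}$, I would choose in each $V_i$ an $\mathbb{F}_p$-spanning sequence $v_{i,1},\ldots,v_{i,h}$ (padding with zero vectors if $\dim_{\mathbb{F}_p}V_i<h$) and reconstruct $\mathrm{G}$ by setting $z_{i,j}=\sum_s v_{i,s,j}\beta_s$. Taking $C$ to be the $\mathbb{F}_p$-row space of $\mathrm{G}$ inside $\mathbb{F}_q^n$ yields an additive code, and the same computation reads off the correct minimum distance. To obtain $|C|=p^r$ one needs the kernel of $a\mapsto a^t \mathrm{G}$ to be trivial, and this kernel is precisely the annihilator of $\bigcup_i V_i$ in $\mathbb{F}_p^r$, which is $\{0\}$ exactly when $\mathcal{X}$ spans $\mathrm{PG}(r-1,p)$; the latter is a tacit non-degeneracy assumption built into the statement.

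The main obstacle is purely bookkeeping: pinning down the correct dictionary between ``the hyperplane $\pi_a$ contains $V_i$'' and ``the $i$-th coordinate of $a^t \mathrm{G}$ is zero'' via the $\mathbb{F}_p$-basis $\{\beta_s\}$, and verifying that the construction is independent of the chosen basis up to linear equivalence of the code. Once the decomposition $a\cdot z_i=\sum_s\beta_s(a\cdot v_{i,s})$ is in place, the argument becomes a direct translation of the proof of Theorem~\ref{lineargeomthm}.
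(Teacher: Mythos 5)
Your proposal is correct and follows essentially the same route as the paper: expand each column of the ${\mathbb F}_p$-generator matrix over an ${\mathbb F}_p$-basis of ${\mathbb F}_q$, associate to it the subspace of $\mathrm{PG}(r-1,p)$ spanned by the resulting $h$ component vectors, and observe that the $i$-th coordinate of $a^t\mathrm{G}$ vanishes exactly when the hyperplane $\pi_a$ contains that subspace. If anything, your write-up is more complete than the paper's, since you make the key identity $a\cdot z_i=\sum_s\beta_s(a\cdot v_{i,s})$ explicit and also address the converse direction and the non-degeneracy needed for $|C|=p^r$.
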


\begin{proof}
Let $\mathrm{G}$ be a $r \times n$ matrix which is a basis for $C$ over ${\mathbb F}_p$. As in the case of linear codes, we consider the (possibly multi-)set $\mathcal{X}$ of columns of $\mathrm{G}$. However, we shouldn't consider the elements of $\mathcal{X}$ as points of $\mathrm{PG}(r-1,q)$, since we obtain $C$ from $\mathrm{G}$ by taking the row span over ${\mathbb F}_p$ and not over ${\mathbb F}_q$. Thus, we consider the elements of $\mathcal{X}$ as subspaces of $\mathrm{PG}(r-1,p)$. Suppose that $e \in {\mathbb F}_q$, is such that $\{1,e,e^2,\ldots,e^{{h-1}}\}$ is a basis for ${\mathbb F}_q$ over ${\mathbb F}_p$. Then, up to scalar factor, we can write $x \in \mathcal{X}$ as
$$
\sum_{j=0}^{h-1} e^{j} x_j,
$$
where $x_j \in {\mathbb F}_p^r$. We associate $x$ with the subspace spanned by $x_0,\ldots,x_{h-1}$ in $\mathrm{PG}(r-1,p)$, which we denote by $\ell_x$. The subspace $\ell_x$ has dimension at most $h-1$. 

Suppose that $x$ is the $i$-th column of $\mathrm{G}$, so $x \in \mathcal{X}$. The non-zero codeword $u=a^t\mathrm{G}$, where $a \in {\mathbb F}_p^r$, has a zero in its $i$-th coordinate if and only if the hyperplane of $\mathrm{PG}(r-1,p)$, which is the kernel of linear form
$$
a_1X_1+\cdots+a_rX_r,
$$
contains the subspace $\ell_x$.
\end{proof}

Observe that a linear code over ${\mathbb F}_q$ necessarily has size $q^k$, so if we wish to obtain an additive code with the same parameters as a linear code, then $r=kh$ for some $k$.

\subsection{The geometry of qubit quantum codes}

For the moment, we restrict to the case $q=2$ and consider the geometrical consequences of Theorem~\ref{stabtheorem}, which describes the connection between stabilizer codes and binary linear codes.

A qubit stabilizer code $Q(S)$ is equivalent to a binary linear code $C=\tau(S)$ of length $2n$ which is contained in its alternating dual $C^{\perp_a}$. According to Theorem~\ref{stabtheorem}, the minimum distance of $Q(S)$ is the minimum symplectic weight of $C^{\perp_a} \backslash C$.
 
Consider once again the Shor code from Example~\ref{shorcode1}.

\begin{example} \label{Gmatrixshor} (Shor code)
Applying the map $\tau$ to the elements in Example~\ref{shorcode1} we have that $C=\tau(S)$ is the ${\mathbb F}_2$ row span of the matrix
$$
G(S)=\left( \begin{array}{ccccccccc|ccccccccc} 
0 & 0 & 0 & 0 & 0 & 0 & 0 & 0 & 0 & 1 & 1 & 0 & 0 & 0 & 0 & 0 & 0 & 0 \\
0 & 0 & 0 & 0 & 0 & 0 & 0 & 0 & 0 & 0 & 1 & 1 & 0 & 0 & 0 & 0 & 0 & 0\\
0 & 0 & 0 & 0 & 0 & 0 & 0 & 0 & 0 & 0 & 0 & 0 & 1 & 1 & 0 & 0 & 0 & 0 \\
0 & 0 & 0 & 0 & 0 & 0 & 0 & 0 & 0 & 0 & 0 & 0 & 0 & 1 & 1 & 0 & 0 & 0\\
0 & 0 & 0 & 0 & 0 & 0 & 0 & 0 & 0  & 0 & 0 & 0 & 0 & 0 & 0 & 1 & 1 & 0\\
0 & 0 & 0 & 0 & 0 & 0 & 0 & 0 & 0 & 0 & 0 & 0 & 0 & 0 & 0 & 0 & 1 & 1\\
1 & 1 & 1 & 1 & 1 & 1 & 0 & 0 & 0 & 0 & 0 & 0 & 0 & 0 & 0 & 0 & 0 & 0 \\
0 & 0 & 0 & 1 & 1 & 1 & 1 & 1 & 1 & 0 & 0 & 0 & 0 & 0 & 0 & 0 & 0 & 0 \\
\end{array} \right).
$$
Since there are two columns which are linearly dependent, there are elements of $C^{\perp_a
}$ of symplectic weight two; these are images under $\tau$ of Pauli operators of $\mathrm{Centraliser}(S)$ of weight two. 

To see this, recall that the alternating form is defined as
$$
(u,w)_a=\sum_{j=1}^n (u_{j}w_{j+n}-u_{j+n}w_{j}),
$$ 
so the dependency of the first two columns implies that
$$
(0,0,0,0,0,0,0,0,0 \ | \ 1,1,0,0,0,0,0,0,0)
$$
is an element of $C^{\perp_a}$. However, this element is an element of $C$, since it's the first row of the matrix. Recall that the minimum distance is equal to the minimum symplectic weight of $C^{\perp_a} \setminus C$. Therefore, although $C^{\perp_a}$ contains elements of symplectic weight $2$, the minimum symplectic weight of $C^{\perp_a} \setminus C$ is in fact $3$. We will prove this in Example~\ref{shorcode3}. 
\end{example}

Given a subgroup $S$, generated by $n-k$ commuting elements $M_1,\ldots,M_{n-k}$ of $\mathcal{P}_n$, we obtain a set $\mathcal{X}$ of $n$ lines or possibly points in $\mathrm{PG}(n-k-1,2)$ in the following way. For each $i \in \{1,\ldots,n\}$, we get a line (or a point) by considering the span of the $i$-th and $(i+n)$-th column of the generator matrix $\mathrm{G}(S)$. Vice versa, given a set of $n$ lines in $\mathrm{PG}(n-k-1,2)$, we construct a $(n-k)\times 2n$ matrix, from which we obtain $M_1,\ldots,M_{n-k}$ by applying $\tau^{-1}$ to the rows of the matrix.

On first sight it may seem that there is a certain amount of freedom when we reconstruct the code from a given quantum set of lines. Each line is incident with three points and we can choose which pair of points on the line to use to construct the $i$-th and the $(i+n)$-th column of $\mathrm{G}$. This choice is equivalent to invoking a permutation of $\{ \sigma_x,\sigma_y,\sigma_z\}$ on the $i$-th position of each of the $M_1,\ldots,M_{n-k}$. This does not affect the property that these elements pairwise commute, so we define all quantum codes that can be obtained from each other in this way to be equivalent.

For example, in Example~\ref{5zero3code}, invoking the permutation $\sigma$ which takes $X\rightarrow Z \rightarrow Y \rightarrow X$ on the $M_i$ in the first, second and fourth positions gives
$$
\begin{array}{rccccccccc} 
\sigma(M_1) & =Z & Y & I & I & Z  \\ 
\sigma(M_2) & =Y & Z & Z & I & I  \\ 
\sigma(M_3) & =I & Y & X & Y & I  \\
\sigma(M_4) & =I & I &  Z & Z & Z  \\
\sigma(M_5) & =Y & I & I &  Y & X  \\
\end{array}.
$$
The matrix whose $i$-th row is $\tau(M_i)$ is
$$
\left(
\begin{array}{ccccc|ccccc}
0 & 1 & 0 & 0 & 0 & 1 & 1 & 0 & 0 & 1\\
1 & 0 & 0 & 0 & 0 & 1 & 1 & 1 & 0 & 0 \\
0 & 1 & 1 & 1 & 0 & 0 & 1 & 0 & 1 & 0  \\
0 & 0 & 0 & 0 & 0 & 0 & 0 & 1 & 1 & 1 \\
1 & 0 & 0 & 1 & 1 & 1 & 0 & 0 & 1 & 0  \\
\end{array}
\right)
$$
Comparing this to the matrix 
$$
\mathrm{G}(S)=
\left(
\begin{array}{ccccc|ccccc}
1 & 0 & 0 & 0 & 0 & 0 & 1 & 0 & 0 & 1\\
0 & 1 & 0 & 0 & 0 & 1 & 0 & 1 & 0 & 0 \\
0 & 0 & 1 & 0 & 0 & 0 & 1 & 0 & 1 & 0  \\
0 & 0 & 0 & 1 & 0 & 0 & 0 & 1 & 0 & 1 \\
0 & 0 & 0 & 0 & 1 & 1 & 0 & 0 & 1 & 0  \\
\end{array}
\right)
$$
from Example~\ref{5zero3code}, we see that the set of lines $\mathcal{X}$ remains unchanged.

There is also a choice between the scalar factor of $M$ when we apply $\tau^{-1}$ to a row of the matrix $\mathrm{G}$. We will always assume that this factor to be $1$. However, changing the sign of some of the generators of a subgroup $S$ can be useful, as we shall see in Section~\ref{unionstabilizers}.

\begin{lemma} \label{theyrelines}
The span of the $i$-th and $(i+n)$-th column of the generator matrix $\mathrm{G}(S)$ is a line of $\mathrm{PG}(n-k-1,2)$ for all $i=1,\ldots,n$ if and only if the minimum non-zero weight of $\mathrm{Centraliser}(S)$ is at least two.
\end{lemma}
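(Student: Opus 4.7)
The plan is to unpack both sides of the equivalence in purely linear-algebraic terms, using the dictionary between Pauli operators and $\mathbb{F}_2^{2n}$ established by $\tau$ and Theorem~\ref{stabtheorem}. A line of $\mathrm{PG}(n-k-1,2)$ is a $2$-dimensional subspace of $\mathbb{F}_2^{n-k}$, so the span of the $i$-th and $(i+n)$-th columns of $\mathrm{G}(S)$ is a line precisely when these two columns are linearly independent over $\mathbb{F}_2$. Since we are over $\mathbb{F}_2$, this fails precisely in three situations: $c_i = 0$, $c_{i+n} = 0$, or $c_i = c_{i+n}$.

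Next I would translate the right-hand condition. By Theorem~\ref{stabtheorem}, $\tau(\mathrm{Centraliser}(S))=C^{\perp_a}$, and weight in the Pauli group corresponds to symplectic weight. Hence the minimum non-zero weight of $\mathrm{Centraliser}(S)$ is less than $2$ iff there exists a non-zero $w \in C^{\perp_a}$ of symplectic weight $1$, i.e.\ a $w$ supported only on coordinates $i$ and $i+n$ for some single $i$, with $(w_i,w_{i+n}) \neq (0,0)$.

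The key calculation is then to check when such a $w$ lies in $C^{\perp_a}$. Since $w$ is zero outside positions $i, i+n$, the condition $(\tau(M_r), w)_a = 0$ for every generator $M_r$ collapses, working in $\mathbb{F}_2$, to
$$
w_{i+n}\, c_i + w_i\, c_{i+n} = 0 \in \mathbb{F}_2^{n-k}.
$$
Examining the three non-zero choices of $(w_i,w_{i+n}) \in \mathbb{F}_2^2$ shows this is solvable iff $c_i=0$, or $c_{i+n}=0$, or $c_i=c_{i+n}$ — exactly the three failure cases identified above.

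Putting the two directions together yields the equivalence. I would organise the write-up as a short chain of iff's going from ``for every $i$, $\{c_i,c_{i+n}\}$ spans a line'' through ``for every $i$, no non-zero $(w_i,w_{i+n}) \in \mathbb{F}_2^2$ satisfies $w_{i+n}c_i+w_ic_{i+n}=0$'' to ``$C^{\perp_a}$ contains no element of symplectic weight $1$''. I do not expect a genuine obstacle here; the only subtle point is being careful that the alternating form reduces to a single two-term expression once $w$ has support confined to positions $i$ and $i+n$, and that in $\mathbb{F}_2$ the sign in the alternating form is irrelevant.
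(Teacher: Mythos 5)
Your proof is correct and follows essentially the same route as the paper's: both reduce the lemma to the observation that the columns $c_i, c_{i+n}$ are dependent over $\mathbb{F}_2$ exactly when some non-zero $(w_i,w_{i+n})$ annihilates them under the alternating form, i.e.\ when a weight-one Pauli at position $i$ commutes with every generator. The only difference is cosmetic: the paper states the middle step in the Pauli language (every element of $S$ has $\sigma_0$ or one fixed $\sigma$ in position $i$), whereas you state it via the symplectic form, and these are identified by Lemma~\ref{taucommutes}.
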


\begin{proof}

We fail to obtain a line of $\mathrm{PG}(n-k-1,2)$ if and only if the $i$-th and $(i+n)$-th column of the matrix $\mathrm{G}(S)$ are either the same non-zero vector or one or both of them is the zero vector. This implies that in the $i$-th position of all the Pauli operators in $S$, there is either $\sigma_0$ or a fixed element $\sigma \in \{\sigma_x,\sigma_y,\sigma_z\}$. This occurs if and only if there is an element of $\mathrm{Centraliser}(S)$ of weight $1$. 
\end{proof}

If $Q(S)$ is pure then the condition that the minimum non-zero weight of $\mathrm{Centraliser}(S)$ is at least $2$ can be replaced by the condition that the minimum distance of $Q(S)$ is at least $2$. However, this does not need to hold for impure codes. 
Indeed it could be that there are elements of $\mathrm{Centraliser}(S) \cap S$ of weight one. 
Yet, if the stabilizer of a $[\![n,k,d]\!]$ code $Q(S)$ contains an element of weight one, then it is easy to see that one can construct a $[\![n-1,k,d]\!]$ stabilizer code by deleting that position.

We would like to give a geometrical interpretation of the fact that the code $C=\tau(S)$ is contained in $C^{{\perp}_a}$. 

Recall that we say two subspaces of $\mathrm{PG}(k-1,q)$ are {\em skew} if they do not intersect.

\begin{theorem} \label{quantumlines}
The following are equivalent.
\begin{enumerate}
\item
There is a $[\![ n,k,d ]\!]$ stabilizer code $Q(S)$, where $S$ is a subgroup generated by $n-k$ independent commuting elements of $\mathcal{P}_n$ and whose centraliser contains no element of weight one.
\item
There is a set of $n$ lines $\mathcal{X}$ spanning $\mathrm{PG}(n-k-1,2)$ with the property that every co-dimension $2$ subspace is skew to an even number of the number of lines of~$\mathcal{X}$.
\end{enumerate}
\end{theorem}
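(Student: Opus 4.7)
The plan is to translate both directions through the generator matrix $\mathrm{G}(S)$, where the $n$ lines $\ell_j$ are the spans of the $j$-th and $(j+n)$-th columns of $\mathrm{G}(S)$. The two conditions to be matched are (a) $M_1, \dots, M_{n-k}$ pairwise commute versus the alternating form vanishing, and (b) the resulting objects in $\mathrm{PG}(n-k-1,2)$ being genuine lines that span the ambient space.

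For the direction $(1) \Rightarrow (2)$: form $\mathrm{G}(S)$ from the generators. By Lemma~\ref{fullrankus} its rows are independent, so its columns span $\mathbb{F}_2^{n-k}$, hence the spans of the pairs of columns collectively span $\mathrm{PG}(n-k-1,2)$. Because $\mathrm{Centraliser}(S)$ contains no weight-one element, Lemma~\ref{theyrelines} guarantees that each of the $n$ spans is a bona fide line. The heart of the proof is to count, for a codimension-$2$ subspace $W = \pi_a \cap \pi_b$ (with $a,b \in \mathbb{F}_2^{n-k}$ independent), the lines $\ell_j$ that are skew to $W$. Write $z_j, z_{j+n}$ for the two chosen column vectors spanning $\ell_j$. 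A short analysis over the four Boolean values $(a\cdot z_j, a \cdot z_{j+n}, b \cdot z_j, b \cdot z_{j+n})$ shows that $\ell_j$ is skew to $W$ precisely when
\[
(a \cdot z_j)(b \cdot z_{j+n}) + (a \cdot z_{j+n})(b \cdot z_j) = 1 \quad \text{in } \mathbb{F}_2.
\]
Summing this indicator over $j$ gives exactly $(\tau(M_a), \tau(M_b))_a \bmod 2$, where $M_a, M_b$ are the elements of $S$ corresponding to the row combinations $a^t \mathrm{G}(S)$ and $b^t \mathrm{G}(S)$. By Lemma~\ref{taucommutes}, this is $0$ since $M_a$ and $M_b$ commute, so the number of skew lines is even.

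For the direction $(2) \Rightarrow (1)$: choose a pair of points $(z_j, z_{j+n})$ spanning each line $\ell_j$, and arrange them as columns of an $(n-k) \times 2n$ matrix $\mathrm{G}$. Because the lines span $\mathrm{PG}(n-k-1,2)$, the columns of $\mathrm{G}$ span $\mathbb{F}_2^{n-k}$, so $\mathrm{G}$ has rank $n-k$ and its rows are linearly independent. Define $M_j = \tau^{-1}$ of the $j$-th row; by Lemma~\ref{fullrankus}, these generate a subgroup $S \leq \mathcal{P}_n$ of rank $n-k$. To check that $S$ is abelian, apply the same line-counting identity: for every pair of independent $a,b \in \mathbb{F}_2^{n-k}$, the parity of the number of lines skew to $\pi_a \cap \pi_b$ equals $(\tau(M_a), \tau(M_b))_a \bmod 2$; by hypothesis this parity is even, so all pairs $M_a, M_b$ commute by Lemma~\ref{taucommutes}. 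The trivial cases of $a=0$, $b=0$, or $a=b$ commute automatically, so the whole group is abelian. Finally $-\one \notin S$ follows from the convention that rows are lifted with phase $+1$, and Lemma~\ref{theyrelines} (applied in the reverse direction) shows that since each $\ell_j$ is genuinely a line, $\mathrm{Centraliser}(S)$ contains no weight-one element. Theorem~\ref{qubitdim} then delivers the $[\![n,k,d]\!]$ code with $d$ read off via Theorem~\ref{stabtheorem}.

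The main obstacle is the combinatorial identification of ``$\ell_j$ skew to $\pi_a \cap \pi_b$'' with the $j$-th term of the symplectic form $(\tau(M_a), \tau(M_b))_a$. Once this local identification is established by a finite case-check on $\mathbb{F}_2^4$, everything else is a direct application of the lemmas of the previous subsection; the remaining care is bookkeeping (rank, spanning, absence of weight-one elements) to ensure the objects on each side are exactly what is claimed.
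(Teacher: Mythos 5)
Your proposal is correct and follows essentially the same route as the paper's own proof: both translate the problem through $\mathrm{G}(S)$, identify the $j$-th term of the alternating form $(u,w)_a$ with the $2\times 2$ determinant $(a\cdot z_j)(b\cdot z_{j+n})-(a\cdot z_{j+n})(b\cdot z_j)$, and observe that this term equals $1$ exactly when $\ell_j$ is skew to $\pi_a\cap\pi_b$, so that commutativity of $S$ is equivalent to the even-skewness condition. The only differences are cosmetic (you spell out a few bookkeeping points, such as invoking Lemma~\ref{theyrelines} in both directions and noting that every co-dimension~$2$ subspace arises as $\pi_a\cap\pi_b$, which the paper also records).
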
 

\begin{proof}
($1 \Rightarrow 2$)

Let $C=\tau(S)$ and let $\mathrm{G}=\mathrm{G}(S)$ be a $(n-k) \times 2n$ generator matrix for $C$.

From Lemma~\ref{fullrankus}, the matrix~$G$ has rank $n-k$. Thus, its columns span~$\mathrm{PG(n-k-1,2)}$.

Let $\mathcal{X}$ be the set of $n$ lines obtained for $i=1,\ldots,n$ as the span of the $i$-th and $(i+n)$-th column of $\mathrm{G}(S)$.

Let $u,w \in C$, so $u=(a_1,\ldots,a_{n-k})\mathrm{G}$ and $w=(b_1,\ldots,b_{n-k})\mathrm{G}$ for some $a=(a_1,\ldots,a_{n-k}) \in {\mathbb F}_2^{n-k}$ and $b=(b_1,\ldots,b_{n-k}) \in {\mathbb F}_2^{n-k}$.

One has $C \subseteq C^{\perp_a}$ if and only if 
$$
(u,w)_a=\sum_{j=1}^n (u_jw_{n+j}-w_ju_{n+j})=0,
$$
for all $u,w \in C$. 

We want to deduce the geometrical meaning of $(u,w)_a=0$. 

Consider a single term in the sum first. Let $x$ and $y$ be the $j$-th and the $(n+j)$-th column of $\mathrm{G}$ respectively. 
Then
$$
u_j w_{n+j} - u_{n+j} w_j = (a \cdot x)(b \cdot y) - (a \cdot y)(b \cdot x).
$$
The right-hand side is zero if and only if the matrix
$$
\left(\begin{array}{cc}
a\cdot  x & a \cdot y\\
b \cdot x & b\cdot  y
\end{array} \right)
$$
has zero determinant, i.e. it has rank $1$. 

This is if and only if there exists $\lambda,\mu \in {\mathbb F}_2$ such that 
$$
a \cdot( \lambda x+\mu y)=0
$$
and
$$
b \cdot( \lambda x+\mu y)=0.
$$
Recall that we define $\pi_a$ as the hyperplane which is the kernel of the linear form 
$$
a \cdot X=a_1X_1+\cdots+ a_{n-k}X_{n-k}.
$$

\begin{figure}[tbp]
\begin{center}
\includegraphics[width = 0.7\textwidth]{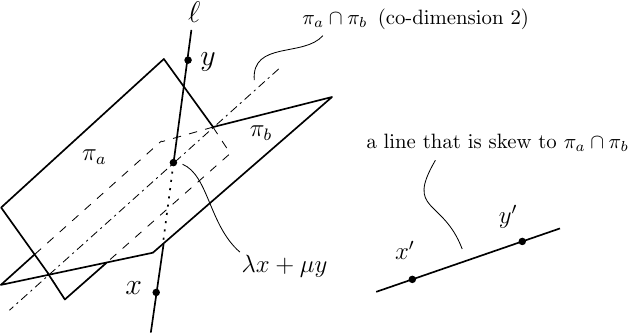} 
 \caption{\label{fig:intersect}}
A point $\lambda x + \mu y$ on the intersection of the hyperplanes $\pi_a$ and $\pi_b$.
\end{center}
\end{figure}

We can thus rewrite the above conditions as the requirement 
that the point $\lambda x + \mu y$ is contained in both $\pi_a$ and $\pi_b$.
In other words, there is a point on the line $\ell$, spanned by $x$ and $y$, 
which is incident with the intersection of the two hyperplanes $\pi_a$ and $\pi_b$.

Returning to the condition $(u,v)_a = 0$, 
we must therefore get an even number of ones in the sum 
$$ 
\sum_{j=1}^n (u_j w_{n+j}-u_{n+j} w_j )\,.
$$
All lines of $\mathcal{X}$ that are skew to
$\pi_a \cap \pi_b = \ker(a \cdot X) \,\cap\, \ker (b \,\cdot X)$ 
contribute; for any given $a$ and $b$ there must in total be an even number of such lines.

We note that {\em every} co-dimension $2$ subspace of $\mathrm{PG}(n-k-1,2)$ 
can be realised in this way (as the intersection of some $a \cdot X=0$ and $b \cdot X=0$). 
This proves the forward implication.

\noindent ($1 \Leftarrow 2$) 

Let $\mathcal{X}$ be a set of lines spanning $\mathrm{PG}(n-k-1,2)$ with the property that every co-dimension $2$ subspace of $\mathrm{PG}(n-k-1,2)$ is skew to an even number of lines of $\mathcal{X}$. Let $\mathrm{G}$ be the matrix whose $i$-th and $(i+n)$-th column are points which span the $i$-th line of $\mathcal{X}$. Let $C$ be the code generated by $\mathrm{G}$. Since $\mathcal{X}$ spans $\mathrm{PG}(n-k-1,2)$, the code $C$ is $(n-k)$-dimensional.
As we proved in the forward implication, the property that every co-dimension 2 subspace is skew to an even number of lines of $\mathcal{X}$ implies that for any two codewords $u$ and $v$ of $C$, $(u,v)_a=0$ holds. By Lemma~\ref{taucommutes}, the image under $\tau^{-1}$ of $C$ is an abelian subgroup $S$ of $\mathcal{P}_n$ and by Lemma~\ref{fullrankus}, it is generated by $n-k$ pairwise commuting elements of $\mathcal{P}_n$.
\end{proof}

Let $\mathcal{X}$ be a set of lines and let $\Theta(\mathcal{X})$ be the space spanned by the lines of $\mathcal{X}$.

We say that $\mathcal{X}$ is a {\em quantum set of lines} if it has the property that every co-dimension~$2$ subspace of $\Theta(\mathcal{X})$ is skew to an even number of lines of $\mathcal{X}$. 
To deduce the minimum distance of the corresponding stabilizer code, 
we introduce the parameter~$d(\mathcal{X})$. 

Recall that $r$ points are {\em independent} if they span an $(r-1)$-dimensional subspace; 
they are {\em dependent} otherwise.

Consider first the case in which $\dim \Theta(\mathcal{X}) \neq |\mathcal{X}|-1$. By Theorem~\ref{quantumlines}, $\mathcal{X}$ will give a quantum $[\![n,k,d]\!]$ code with $k \neq 0$.
We define the parameter $d(\mathcal{X})$ as the minimum number of dependent points 
that can be found on distinct lines of $\mathcal{X}$; 
not including the dependencies for which there is a hyperplane of $\Theta(\mathcal{X})$ which both
\begin{enumerate}
\item[a)]
contains all the lines of $\mathcal{X}$ which do not contain the dependent points\,,
\item[b)]
contains all the dependent points.\footnote{In the original definition of Glynn et al \cite{GGMG}, the condition b) does not appear.} 
\end{enumerate}

Thus, $d(\mathcal{X})=r$, where $r$ is minimal such that there exists a set of dependent points $\{x_1,\ldots,x_r\}$, where each $x_i$ is incident with a line $\ell_i \in \mathcal{X}$ and the lines $\ell_1,\ldots,\ell_r$ are distinct, but for which there is no hyperplane containing the lines $\mathcal X \setminus \{\ell_1,\ldots,\ell_r\}$  and the points $\{x_1,\ldots,x_r\}$.

In the case in which $\dim \Theta(\mathcal{X}) =|\mathcal{X}|-1$, Theorem~\ref{quantumlines} implies that $\mathcal{X}$ will give a quantum $[\![n,k,d]\!]$ code with $k=0$. We define the parameter $d(\mathcal{X})$ as the minimum $d$ for which there is a hyperplane of $\Theta(\mathcal{X})$ containing $|\mathcal{X}|-d$ lines of $\mathcal{X}$. Equivalently. it is the minimum number of dependent points that can be found on distinct lines of $\mathcal{X}$. This definition and the equivalence will be justified in the proof of Theorem~\ref{thm:stab_geo}.

From now on we assume that the centraliser of the stabilizer $S$ contains no elements of weight one.  
By Lemma~\ref{theyrelines}, this assumption guarantees that there is a quantum set of lines associated with the stabilizer code. 
As mentioned before, this is equivalent to assuming that the minimum distance is at least $2$ in the case of pure codes.

\begin{theorem}\label{thm:stab_geo}
There is a $[\![n,k,d]\!]$ stabilizer code if and only if there is a quantum set of lines $\mathcal{X}$ for which $d(\mathcal{X})=d$ and $\Theta(\mathcal X)=\mathrm{PG}(n-k-1,2)$.
\end{theorem}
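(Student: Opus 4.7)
The plan is to combine Theorem~\ref{quantumlines} with a careful geometric description of $C^{\perp_a}$ and $C$, where $C=\tau(S)$. Theorem~\ref{quantumlines} already supplies the bijection between stabilizer subgroups $S$ (whose centraliser contains no weight-one element) and quantum sets of $n$ lines $\mathcal{X}$ spanning $\mathrm{PG}(n-k-1,2)$. By Theorem~\ref{stabtheorem} the minimum distance of $Q(S)$ is the minimum symplectic weight of $C^{\perp_a}\setminus C$ (or of $C\setminus\{0\}$ when $k=0$), so what remains is to identify this quantity with $d(\mathcal{X})$.

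The core technical step is to re-express the alternating dual geometrically. Write $u \in {\mathbb F}_2^{2n}$ as pairs $(u_j, u_{n+j})$ for $j=1,\ldots,n$, let $x_j$ and $y_j$ be the $j$-th and $(n+j)$-th columns of $G(S)$ (so $\ell_j = \langle x_j, y_j\rangle$), and set $p_j := u_j y_j + u_{n+j} x_j \in {\mathbb F}_2^{n-k}$. Taking $w$ to range over the rows of $G(S)$, one computes directly that $u \in C^{\perp_a}$ iff $\sum_{j=1}^n p_j = 0$. Since $x_j,y_j$ are independent, $p_j = 0$ iff $(u_j,u_{n+j}) = (0,0)$, and otherwise $p_j$ is one of the three non-zero points of $\ell_j$. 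Hence non-zero elements of $C^{\perp_a}$ of symplectic weight $r$ correspond bijectively to dependencies of size $r$: choices of one point on each of $r$ distinct lines of $\mathcal{X}$ whose vector sum vanishes.

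Next I would identify which dependencies arise from $C$. For $u = a^t G \in C$ a direct check gives $(u_j,u_{n+j}) = (a\cdot x_j,\,a\cdot y_j)$, that $\supp(u) = \{j : \ell_j \not\subseteq \pi_a\}$, and that for each such $j$ the point $p_j$ is the unique intersection $\ell_j \cap \pi_a$, where $\pi_a = \ker(a\cdot X)$. Thus $\pi_a$ contains both (a) every line $\ell_j$ outside the dependency and (b) every dependent point $p_j$: precisely the hyperplane excluded in the definition of $d(\mathcal{X})$. Conversely, suppose $u \in C^{\perp_a}\setminus C$ has minimum symplectic weight and its dependency admits such a hyperplane $\pi = \pi_a$. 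Setting $v = a^t G \in C$, the computation above gives $\supp(v) \subseteq \supp(u)$ with $v$'s pairs agreeing with $u$'s on $\supp(v)$. Because $\mathcal{X}$ spans $\mathrm{PG}(n-k-1,2)$, $\pi_a$ cannot contain every line, so $v \neq 0$; and $v \neq u$ because $u \notin C$. Therefore $u-v$ is a non-zero element of $C^{\perp_a}\setminus C$ supported on a proper subset of $\supp(u)$, contradicting minimality. This proves that the minimum weight of $C^{\perp_a}\setminus C$ equals $d(\mathcal{X})$ for $k \geq 1$.

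For $k=0$ one has $C=C^{\perp_a}$, so every non-zero element of $C$ yields a dependency. Writing $u = a^t G$ shows that the symplectic weight of $u$ equals the number of lines of $\mathcal{X}$ not contained in $\pi_a$, reconciling the two equivalent formulations of $d(\mathcal{X})$ quoted in the excerpt. The main obstacle is the subtraction-and-descend step in the previous paragraph: a priori a dependency could admit the excluded hyperplane without the corresponding $u$ itself lying in $C$, and resolving this at minimum weight crucially uses that $\mathcal{X}$ genuinely spans its ambient projective space.
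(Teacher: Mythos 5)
Your proposal follows the same route as the paper's proof: the identity $u\in C^{\perp_a}$ iff $\sum_j p_j=0$ with $p_j=u_jy_j+u_{n+j}x_j$ is exactly the paper's equation relating dual codewords to dependencies of points on the lines $\ell_j$, and your identification of codewords $u=a^t\mathrm{G}\in C$ with the excluded dependencies (the hyperplane $\pi_a$ containing both the untouched lines and the dependent points) is the paper's central computation, as is the $k=0$ case. Where you go beyond the paper is the subtraction-and-descend step: the paper only proves that $v\in C$ forces its dependency to be excluded (which gives $d\leqslant d(\mathcal{X})$), and then asserts that "this exactly coincides with our definition of $d(\mathcal{X})$" without addressing the possibility that a minimum-weight element of $C^{\perp_a}\setminus C$ could itself produce an excluded dependency, which would make $d(\mathcal{X})>d$. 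Your argument closes this: if the dependency of such a $u$ were excluded via $\pi_a$, then $v=a^t\mathrm{G}$ is non-zero (since $\mathcal{X}$ spans), agrees with $u$ on $\supp(v)\subseteq\supp(u)$, and $u-v\in C^{\perp_a}\setminus C$ has strictly smaller symplectic weight, a contradiction. This is a genuine and worthwhile addition; the rest of your write-up is a faithful, slightly more explicit rendering of the published argument.
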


\begin{proof}
We only have to prove the part about the minimum distance since Theorem~\ref{quantumlines} covers the rest.

($\Rightarrow$) Let $Q(S)$ be a $[\![n,k,d]\!]$ stabilizer code given by some stabilizer $S$. Let $C=\tau(S)$.

As in the proof of Theorem ~\ref{quantumlines}, let $\mathrm{G}=\mathrm{G}(S)$ be the $(n-k) \times 2n$ generator 
matrix with entries from ${\mathbb F}_2$ whose row space forms the code $C$. Define a set of lines 
$$
\mathcal{X}=\{ \ell_j \ | \ j=1,\ldots,n \},
$$ 
where $\ell_j$ is the line that corresponds to 
the span of the $j$-th and $(j+n)$-th column of $\mathrm{G}$.

Consider the case $k \neq 0$. 

By Theorem~\ref{stabtheorem}, the parameter $d$ is the minimum symplectic weight of $C^{\perp_a} \setminus C$.

Suppose now that $v \in C^{\perp_a}$ has symplectic weight $w$ and let $W$ denote the set of positions that contribute to the weight,
$$
W=\{ j \in \{1,\ldots,n\} \ | \ (v_j,v_{n+j}) \neq (0,0) \}.
$$
Clearly, $|W|=w$. 

Denote by $x_j$ the $j$-th column of $\mathrm{G}$.
Since $v=(v_1,\ldots,v_{2n})$ is in $C^{\perp_a}$, one has
\begin{equation}\label{eq:sum}
\sum_{j \in W} (v_{n+j} x_j-x_{n+j}v_j)=0. 
\end{equation} 

Each summand corresponds to some point of $\ell_{j}$.
Thus, there are $w = |W|$ points on distinct lines $\{ \ell_{j} \ | \ j \in W\}$ which are dependent.

However, since the minimum distance $d$ is the minimum symplectic weight of $C^{\perp_a} \setminus C$, we have to disregard this dependency if $v \in C$.

A vector $v$ is in $C$ if and only if $v=a\mathrm{G}$ for some $a \in {\mathbb F}_2^{n-k}$. 
As a consequence, $v_j=a\cdot x_j$ for all $j=1,\ldots,2n$. 

First, consider those positions $j$ of $v$ that {\em do not} contribute to its symplectic weight, 
that is, $j\notin W$. 
For each $j\notin W$, one has that $v_j=a\cdot x_j=0$ and $v_{n+j}=a\cdot x_{n+j}=0$ if and only if the line $l_j$ is contained in the hyperplane $\pi_a$ described by $a\cdot X = 0$.
So the lines of $\{ \ell_j \ | \ j \in \{1,\ldots,n\} \setminus W\}$ are contained in $\pi_a$. 

Second, consider those positions $j$ of $v$ that contribute to its symplectic weight, $j\in W$. Then
$$
a \cdot (v_{n+j} x_j-x_{n+j}v_j)= v_{n+j}(a \cdot x_j)-(a \cdot x_{n+j})v_j= v_{n+j} v_j-v_{n+j}v_j=0,
$$
since $v_j=a\cdot x_j$ and $v_{n+j}=a\cdot x_{n+j}$. Hence, the dependent points are also contained in the hyperplane $a \cdot X=0$. 

This exactly coincides with our definition of $d(\mathcal X)$.

Now, consider the case $k=0$. 

By Theorem~\ref{stabtheorem}, the parameter $d$ is the minimum non-zero symplectic weight of $C$.

Let $v \in C$ be of minimum non-zero symplectic weight. Since $v \in C$, $v=a\mathrm{G}$ for some $a \in {\mathbb F}_2^{n-k}$. Thus, $v_j=a\cdot x_j$ for all $j=1,\ldots,2n$. 

Let $W$ denote the set of positions that contribute to the symplectic weight of $v$, i.e.
$$
W=\{ j \in \{1,\ldots,n\} \ | \ (v_j,v_{n+j}) \neq (0,0) \}.
$$
Then, for $j \in W$, $a\cdot x_j=a\cdot x_{n+j}=0$ which is equivalent to the line $\ell_j \in \mathcal{X}$ being contained in the hyperplane $a\cdot X=0$. Therefore, there is a hyperplane of $\Theta(\mathcal{X})$ containing $|\mathcal{X}|-d$ lines of $\mathcal{X}$ which coincides with our definition of $d(\mathcal X)$ in this case.

Alternatively, since $C=C^{\perp_a}$, the parameter $d$ is the minimum non-zero symplectic weight of $C^{\perp_a}$. As in the case $k\neq 0$, a vector $v=(v_1,\ldots,v_{2n}) \in C^{\perp_a}$ of symplectic weight $d$, will give a dependency of $d$ points of $\mathcal{X}$, which coincides with our alternative definition of $d(\mathcal X)$ in this case.

($\Leftarrow$) Vice-versa, suppose that $\mathcal{X}$ is a quantum set of lines for which $d(\mathcal{X})=d$ and $\Theta(\mathcal X)=\mathrm{PG}(n-k-1,2)$. 

Let $\mathrm{G}=\mathrm{G}(S)$ be the $(n-k) \times 2n$ generator 
matrix for a code $C$, whose $i$-th and $(i+n)$-th column span the $i$-th line of $\mathcal X$. Let $S=\tau^{-1}(C)$ and let $Q(S)$ be the stabiliser code. By Theorem~\ref{quantumlines} and the fact that $\Theta(\mathcal X)=\mathrm{PG}(n-k-1,2)$, $Q(S)$ is a $[\![n,k,d]\!]$ stabilizer code for some $d$. The fact that $d=d(\mathcal X)$ follows from the same arguments as in the forward implication, observing that if
$$
a \cdot (v_{n+j} x_j-x_{n+j}v_j)=0
$$
then 
$$
v_{n+j}(a \cdot x_j)-(a \cdot x_{n+j})v_j=0
$$
 which implies  $v_j=a\cdot x_j$ and $v_{n+j}=a\cdot x_{n+j}$, assuming $(a\cdot x_j,a\cdot x_{n+j}) \neq (0,0)$. This is precisely the assumption that $\ell_j$ is not contained in the hyperplane $\pi_a$.
\end{proof}

\begin{example} (Shor code) \label{shorcode3}
As we saw in Example~\ref{Gmatrixshor}, the Shor code has the generator matrix
$$
\mathrm{G}(S)=\left( \begin{array}{ccccccccc|ccccccccc} 
0 & 0 & 0 & 0 & 0 & 0 & 0 & 0 & 0 & 1 & 1 & 0 & 0 & 0 & 0 & 0 & 0 & 0 \\
0 & 0 & 0 & 0 & 0 & 0 & 0 & 0 & 0 & 0 & 1 & 1 & 0 & 0 & 0 & 0 & 0 & 0\\
0 & 0 & 0 & 0 & 0 & 0 & 0 & 0 & 0 & 0 & 0 & 0 & 1 & 1 & 0 & 0 & 0 & 0 \\
0 & 0 & 0 & 0 & 0 & 0 & 0 & 0 & 0 & 0 & 0 & 0 & 0 & 1 & 1 & 0 & 0 & 0\\
0 & 0 & 0 & 0 & 0 & 0 & 0 & 0 & 0  & 0 & 0 & 0 & 0 & 0 & 0 & 1 & 1 & 0\\
0 & 0 & 0 & 0 & 0 & 0 & 0 & 0 & 0 & 0 & 0 & 0 & 0 & 0 & 0 & 0 & 1 & 1\\
1 & 1 & 1 & 1 & 1 & 1 & 0 & 0 & 0 & 0 & 0 & 0 & 0 & 0 & 0 & 0 & 0 & 0 \\
0 & 0 & 0 & 1 & 1 & 1 & 1 & 1 & 1 & 0 & 0 & 0 & 0 & 0 & 0 & 0 & 0 & 0 \\
\end{array} \right).
$$

Let $e_i$ denote the $i$-th vector in the canonical basis of ${\mathbb F}_2^8$.

The quantum set of lines $\mathcal{X}$ is
$$
\{ 
\langle e_1, e_7\rangle, \langle e_1+e_2,e_7 \rangle, \langle e_2, e_7\rangle,
\langle e_3, e_7+e_8\rangle,
$$
$$
 \langle e_3+e_4,e_7+e_8 \rangle, \langle e_4, e_7+e_8\rangle,
\langle e_5, e_8\rangle, \langle e_5+e_6,e_8 \rangle, \langle e_6, e_8\rangle
\}.
$$
which is drawn in Figure~\ref{shorcodefig}. Here, $\langle e_i, e_j\rangle$ denotes the
line spanned by points $e_i$ and $e_j$.

Note that the point $e_7$ is on the two lines
$\langle e_1, e_7\rangle$ and $\langle e_1+e_2,e_7 \rangle$, 
and thus $e_7$ is ``dependent with itself''. 
So at first sight it seems that $d(\mathcal{X})=2$. However, the remaining seven lines span a six dimensional subspace since the two planes $\langle e_3,e_4,e_7+e_8\rangle$ and $\langle e_5,e_6,e_8\rangle$  span a five dimensional subspace,
while the line $\langle e_2, e_7\rangle$ extends this to a six dimensional subspace that also contains the point $e_7$ (i.e. contains all dependent points). Following Theorem~\ref{thm:stab_geo}, we do not count this dependency and conclude that $d(\mathcal{X}) \geqslant 3$. The dependency of $e_7$ with itself implies that the Shor code is impure. The dependent points $\{e_1, e_2,e_1+e_2\}$ imply that $d(\mathcal{X})=3$. Although the six lines not containing these points are contained in a hyperplane, there is no hyperplane containing the six lines and the dependent points, thus we do not disregard this dependency. Thus, we see that condition b) is essential in the definition of $d(\mathcal X)$. 
\end{example}

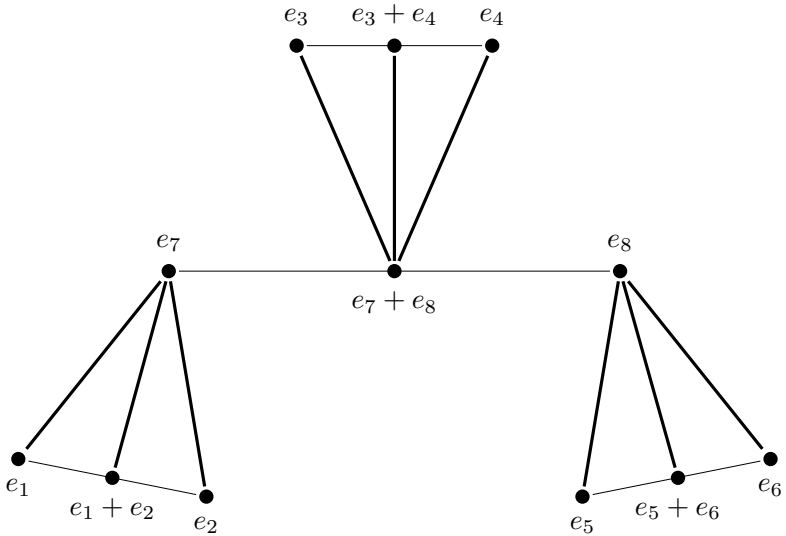
\begin{figure}
\begin{center}
\begin{tikzpicture}
       \node (e8) at (0,0) [label=below:$e_7+e_8$]{};
       \node (e7) at (-3,0) [label=above:$e_7$]{};
       \node (e9) at (3,0) [label=above:$e_8$]{}; 
       \node (e1) at (-5,-2.5) [label=below:$e_1$]{};
       \node (e12) at (-3.75,-2.75) [label=below:$e_1+e_2$]{};
       \node (e2) at (-2.5,-3) [label=below:$e_2$]{}; 
       \node (e3) at (-1.3,3) [label=above:$e_3$]{}; 
       \node (e4) at (1.3,3) [label=above:$e_4$]{};
       \node (e34) at (0,3) [label=above:$e_3+e_4$]{};  
       \node (e5) at (2.5,-3) [label=below:$e_5$]{};
       \node (e6) at (5,-2.5) [label=below:$e_6$]{};
       \node (e56) at (3.77,-2.75) [label=below:$e_5+e_6$]{};  
        \draw[very thick] (e7) -- (e1);
        \draw[very thick] (e7) -- (e2);  
        \draw[very thick] (e7) -- (e12);  
	\draw[very thick] (e8) -- (e3);
       	\draw[very thick] (e8) -- (e4);  
        \draw[very thick] (e8) -- (e34);
        \draw[very thick] (e9) -- (e5);
       	\draw[very thick] (e9) -- (e6);  
        \draw[very thick] (e9) -- (e56);  
        \draw[ultra thin] (e7) -- (e9); 
        \draw[ultra thin] (e1) -- (e2); 
        \draw[ultra thin] (e3) -- (e4);
        \draw[ultra thin] (e5) -- (e6);   
       \fill (e1) circle (2.7pt);
       \fill (e2) circle (2.7pt);
       \fill (e12) circle (2.7pt);
       \fill (e34) circle (2.7pt);
       \fill (e56) circle (2.7pt);
       \fill (e3) circle (2.7pt);
       \fill (e4) circle (2.7pt);
       \fill (e5) circle (2.7pt);
       \fill (e6) circle (2.7pt);
       \fill (e7) circle (2.7pt);
       \fill (e8) circle (2.7pt);
       \fill (e9) circle (2.7pt);

     \end{tikzpicture} 
    
\end{center}
\caption{The set of nine (thick) lines describing the geometry of the Shor code.}  \label{shorcodefig}
\end{figure}

Let us generalize one feature of the Shor code further:
a {\em planar pencil of lines} in a projective space is a set of lines which are all contained in some 
plane and are all the lines incident with a point in that plane. 
As illustrated in Figure~\ref{shorcodefig}, the Shor code is the union of three planar pencils.

Observe that a planar pencil of lines is itself a quantum set of lines. 
Our aim is to show that a quantum set of lines is nothing more than the union modulo two of planar pencils of lines. We first prove a few lemmas.

\begin{lemma} \label{qlunion}
The union modulo two of two quantum sets of lines is a quantum set of lines.
\end{lemma}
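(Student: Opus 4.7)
The plan is to exploit the $\mathbb F_2$-linearity of the defining condition, after first lifting it from the intrinsic spanning space $\Theta(\mathcal X)$ to a common ambient projective space.

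First, I would establish a reformulation: if $\mathcal X$ is a quantum set of lines and $V$ is any projective space containing $\Theta(\mathcal X)$, then every codimension-$2$ subspace of $V$ (not only of $\Theta(\mathcal X)$) is skew to an even number of lines of $\mathcal X$, and conversely. For the forward direction, let $\pi'$ be a codimension-$2$ subspace of $V$. The projective dimension formula forces $\pi'\cap\Theta(\mathcal X)$ to have codimension $0$, $1$, or $2$ in $\Theta(\mathcal X)$. In the first two cases every line of $\mathcal X\subseteq\Theta(\mathcal X)$ meets $\pi'\cap\Theta(\mathcal X)\subseteq\pi'$, so zero lines are skew to $\pi'$; in the remaining case a line $\ell\subseteq\Theta(\mathcal X)$ satisfies $\ell\cap\pi'=\ell\cap(\pi'\cap\Theta(\mathcal X))$, so skewness to $\pi'$ coincides with skewness to the codimension-$2$ subspace $\pi'\cap\Theta(\mathcal X)$ of $\Theta(\mathcal X)$, for which the count is even by hypothesis. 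For the converse, any codimension-$2$ subspace $\pi$ of $\Theta(\mathcal X)$ arises as $\pi'\cap\Theta(\mathcal X)$ for $\pi':=\pi+W$, with $W$ any complement of $\Theta(\mathcal X)$ in $V$, and skewness of a line of $\mathcal X$ to $\pi$ is equivalent to skewness to $\pi'$.

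Next, I would apply this with $V:=\Theta(\mathcal X_1)+\Theta(\mathcal X_2)$, a common ambient space for both $\mathcal X_1$ and $\mathcal X_2$ that also contains $\Theta(\mathcal X_1\triangle\mathcal X_2)$. By the reformulation, each $\mathcal X_i$ has an even number of lines skew to every codimension-$2$ subspace $\pi'$ of $V$. Interpreting the union modulo two as the multiset $\mathcal X_1\triangle\mathcal X_2$ in which the multiplicity of each line $\ell$ is the sum modulo $2$ of its multiplicities in $\mathcal X_1$ and $\mathcal X_2$, and writing $N_i(\pi')$ for the number of lines of $\mathcal X_i$ skew to $\pi'$, we obtain
$$
\#\{\ell \in \mathcal X_1 \triangle \mathcal X_2 : \ell \cap \pi' = \emptyset\} \equiv N_1(\pi') + N_2(\pi') \equiv 0 \pmod 2.
$$
Applying the reformulation in reverse to $\mathcal X_1\triangle\mathcal X_2$ inside $V$ then yields the required evenness on every codimension-$2$ subspace of $\Theta(\mathcal X_1\triangle\mathcal X_2)$.

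The heart of the argument is the first step: the defining condition refers only to codimension-$2$ subspaces of the self-span $\Theta(\mathcal X)$, while $\Theta(\mathcal X_1\triangle\mathcal X_2)$ can be strictly smaller than $\Theta(\mathcal X_1)+\Theta(\mathcal X_2)$, so one cannot directly compare the three spans. Once the condition is lifted to a common ambient space, what remains is the triviality that the parity of skew-line counts is additive modulo $2$.
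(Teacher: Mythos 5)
Your proof is correct and follows essentially the same route as the paper's: a case analysis on how a codimension-$2$ subspace of a common ambient space meets $\Theta(\mathcal X)$ (meeting in codimension $0$ or $1$ forces zero skew lines, codimension $2$ gives an even count by hypothesis), followed by additivity of the skew-line parity under symmetric difference. The one difference is that you make explicit, via your lifting reformulation, the point that $\Theta(\mathcal X_1\triangle\mathcal X_2)$ may be strictly smaller than $\Theta(\mathcal X_1)+\Theta(\mathcal X_2)$ and that the evenness condition transfers back down to it --- a step the paper's proof leaves implicit.
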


\begin{proof}
Let $\mathcal{X}$ and $\mathcal{Y}$ be two quantum sets of lines. Recall that $\Theta(\XX)$, $\Theta(\YY)$, and $\Theta({\XX \cup \YY})$ are the spaces spanned by $\XX$, $\YY$, and both sets of lines respectively. A co-dimension $2$ subspace $\pi$ intersects $\Theta(\mathcal{X})$ in either a co-dimension $2$ subspace, in a hyperplane, or in $\Theta(\mathcal{X})$. In the first case it is skew to an even number of the lines of $\mathcal{X}$; 
in the latter two cases it is skew to none (which is even).  

Let $\mathcal{\overbar X}$ be the subset of $\mathcal{X}$ of lines skew to $\pi$.
Likewise, let $\mathcal{\overbar{Y}}$ be the subset of $\mathcal{Y}$ of lines skew to $\pi$. 
Then $\pi$ is skew to
$
|\mathcal{\overbar X}|+|\mathcal{\overbar{Y}}|-2|\mathcal{\overbar X} \cap \mathcal{\overbar{Y}}|
$
lines of the union modulo two of $\mathcal{X}$ and $\mathcal{Y}$.

Since both $|\mathcal{\overbar X}|$ and $|\mathcal{\overbar{Y}}|$ are even,
every co-dimension $2$ subspace is skew to an even number of lines of $\XX \cup \YY$. This proves the lemma.
\end{proof}

An {\em $r$-sputnik} is a set of $(r+1)$ concurrent lines (they are all incident with some point) in an $r$-dimensional subspace $\pi$ with the property that any $r$ of them span $\pi$. In Figure~\ref{fig:sputnik} a $3$-sputnik is illustrated.
\begin{figure}[tbh]
\centering
 \includegraphics[width=0.3\textwidth]{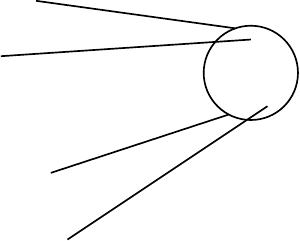}
 \caption{\label{fig:sputnik} A $3$-sputnik looks quite like a Soviet radio satellite from 1957.}
\end{figure}

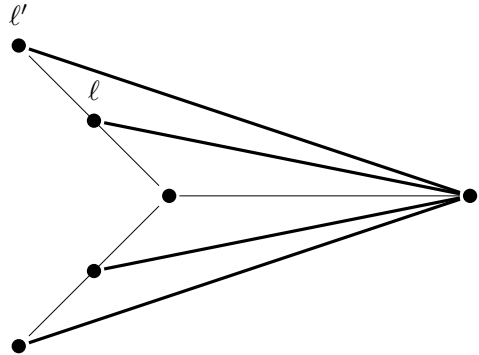
\begin{figure}
\begin{center}
\begin{tikzpicture} 
       \node (e8) at (0,0) {};
       \node (e7) at (-4,0){};
       \node (e1) at (-5,-1) {};
       \node (e2) at (-6,-2){}; 
       \node (e3) at (-5,1) [label=above:$\ell$]{}; 
       \node (e4) at (-6,2) [label=above:$\ell'$]{};
        \draw[very thick] (e8) -- (e1) {};
        \draw[very thick] (e8) -- (e2);  
	\draw[very thick] (e8) -- (e3);
       	\draw[very thick] (e8) -- (e4);    
        \draw[ultra thin] (e7) -- (e8); 
        \draw[ultra thin] (e7) -- (e2); 
        \draw[ultra thin] (e7) -- (e4);
       \fill (e1) circle (2.7pt);
       \fill (e2) circle (2.7pt);
       \fill (e3) circle (2.7pt);
       \fill (e4) circle (2.7pt);
       \fill (e7) circle (2.7pt);
       \fill (e8) circle (2.7pt);
     \end{tikzpicture} 
     
\end{center}
\caption{A $3$-sputnik seen as the union modulo two of two planar pencils of lines.} \label{3sputnik}
\end{figure}

Our aim will be to prove that a quantum set of lines is the union modulo two of planar pencils of lines. Firstly we will prove that this claim is true for an $r$-sputnik.

\begin{lemma} \label{sputnikisaqsol}
An $r$-sputnik is the union modulo two of planar pencils of lines. In particular, an $r$-sputnik is a quantum set of lines.
\end{lemma}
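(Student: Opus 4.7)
My plan is to prove the decomposition into planar pencils by induction on $r \geq 2$, and then derive the fact that an $r$-sputnik is a quantum set of lines as an immediate consequence of Lemma~\ref{qlunion} together with the observation preceding it that a planar pencil is itself a quantum set of lines. The base case $r = 2$ is immediate because a $2$-sputnik consists of $3$ concurrent lines in a plane, and over $\mathbb{F}_2$ every point of a plane lies on exactly $3$ lines, so a $2$-sputnik already \emph{is} a planar pencil.

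For the inductive step, take an $r$-sputnik $\mathcal{X} = \{\ell_0, \ell_1, \ldots, \ell_r\}$ through the common point $P$ in the $r$-dimensional subspace $\pi$. Let $\alpha$ be the plane spanned by $\ell_0$ and $\ell_1$, let $m$ be the third line through $P$ in $\alpha$, and set $\mathcal{P} = \{\ell_0, \ell_1, m\}$, a planar pencil. Then
\[
\mathcal{X} \triangle \mathcal{P} = \{m, \ell_2, \ldots, \ell_r\}
\]
is a set of $r$ lines through $P$, and the key step is to show that this symmetric difference is again a sputnik, specifically an $(r-1)$-sputnik. Granting this, the inductive hypothesis expresses $\mathcal{X} \triangle \mathcal{P}$ as a union modulo two of planar pencils, hence so does $\mathcal{X} = (\mathcal{X} \triangle \mathcal{P}) \triangle \mathcal{P}$.

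To verify the sputnik conditions for $\mathcal{X} \triangle \mathcal{P}$, I would pass to the quotient $\pi/P \cong \mathrm{PG}(r-1, 2)$, where each $\ell_i$ projects to a point $Q_i$ and $m$ projects to $Q_0 + Q_1$. The general position hypothesis says any $r$ of the $Q_i$ are independent in the underlying $r$-dimensional vector space, so the unique nontrivial linear relation among the $r+1$ points must have full support, giving $Q_0 + Q_1 + \cdots + Q_r = 0$. In particular $Q_0 + Q_1 = Q_2 + \cdots + Q_r$, which shows on the one hand that $Q_0 + Q_1 \neq Q_i$ for $i \geq 2$ (else three of the $Q_i$ would be dependent, contradicting general position), so the lines in $\mathcal{X} \triangle \mathcal{P}$ really are distinct, and on the other hand that $\{Q_0 + Q_1, Q_2, \ldots, Q_r\}$ all lie in the $(r-1)$-dimensional subspace spanned by $Q_2, \ldots, Q_r$. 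A brief independence check, again using the unique relation, shows that any $r-1$ of these $r$ points form a basis of that subspace, which is exactly the sputnik property at dimension $r-1$.

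The main obstacle, such as it is, lies in the linear-algebra bookkeeping of the previous paragraph; once one recognises that general position forces the unique all-ones relation among the $Q_i$, everything else is routine, and the second assertion of the lemma follows at once from the decomposition and Lemma~\ref{qlunion}.
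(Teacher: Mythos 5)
Your proposal is correct and follows essentially the same route as the paper: both reduce an $r$-sputnik to an $(r-1)$-sputnik by adding modulo two the planar pencil determined by two of its lines, iterate down to a $2$-sputnik (which is itself a planar pencil), and then invoke Lemma~\ref{qlunion}. The only difference is that you supply, via the quotient by the common point and the forced all-ones dependency among the $Q_i$, the verification that the symmetric difference really is an $(r-1)$-sputnik — a step the paper asserts without detail — so your write-up is, if anything, more complete.
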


\begin{proof}
Let $\mathcal{X}$ be an $r$-sputnik and take any two lines $\ell$ and $\ell' \in \mathcal{X}$. The $r-1$ lines of $\mathcal{X} \setminus \{\ell,\ell'\}$ span a $(r-1)$-dimensional subspace which intersects the plane spanned by $\ell$ and $\ell'$ in a line $\ell''$. The line $\ell''$ is the third line in the planar pencil of lines spanned by $\ell$ and $\ell'$. Thus, adding (modulo $2$) this pencil of lines to $\mathcal{X}$ we get an $(r-1)$-sputnik. Now continue adding planar pencils of lines in this way until we get a $2$-sputnik. Since a $2$-sputnik is a planar pencil of lines, it is a quantum set of lines. We can then reverse the process adding planar pencils of lines to recover the $r$-sputnik which, by Lemma~\ref{qlunion}, is also a quantum set of lines.
\end{proof}

\begin{lemma} \label{dependqsol}
Let $\mathcal{X}$ be a quantum set of lines. There is a set $D$ of dependent points such that each point of $D$ is incident with a different line of $\mathcal{X}$.
\end{lemma}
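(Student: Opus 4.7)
The plan is to translate $\mathcal{X}$ back into the binary symplectic language of Theorem~\ref{quantumlines} and to read the desired dependence off a nonzero vector in the alternating dual of the resulting code. The main point is simply that symplectic self-orthogonality, together with non-degeneracy of $(\cdot,\cdot)_a$, forces the dual code to be nontrivial.

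First I would label the lines $\ell_1,\dots,\ell_n$ of $\mathcal{X}$, put $m = \dim_{\mathbb F_2}\Theta(\mathcal{X})$, and on each $\ell_j$ choose two distinct spanning points $x_j,y_j \in \mathbb F_2^m$. Form the $m\times 2n$ matrix $\mathrm{G}$ whose $j$-th and $(j+n)$-th columns are $x_j$ and $y_j$, and let $C \leq \mathbb F_2^{2n}$ be its row space, so $\dim C = m$. The forward direction of the proof of Theorem~\ref{quantumlines}, whose hypothesis is precisely that $\mathcal{X}$ is a quantum set of lines, gives $C \subseteq C^{\perp_a}$. Since the alternating form on $\mathbb F_2^{2n}$ is non-degenerate we have $\dim C^{\perp_a} = 2n-m$, and self-orthogonality forces $m \leq n$, so $\dim C^{\perp_a} \geq n$. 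Assuming $n \geq 1$ (the case $n = 0$ is vacuous, and $n = 1$ cannot occur because the zero subspace of $\Theta(\mathcal{X})$ would then be skew to the unique line), I can pick a nonzero $v \in C^{\perp_a}$.

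Let $W = \{j : (v_j,v_{n+j}) \neq (0,0)\}$ be the symplectic support of $v$ and, for $j \in W$, set $p_j := v_{n+j}\,x_j + v_j\,y_j$. Each $p_j$ is a nonzero $\mathbb F_2$-combination of the spanning points of $\ell_j$, so represents a point of $\ell_j$, and the lines $\{\ell_j : j \in W\}$ are pairwise distinct by construction. Evaluating $(u,v)_a = 0$ at $u = a^{\top}\mathrm{G}$ for arbitrary $a \in \mathbb F_2^m$ gives $a^{\top}\sum_{j \in W} p_j = 0$, so $\sum_{j \in W} p_j = 0$ in $\Theta(\mathcal{X})$. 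This is a nontrivial linear relation (the case $|W|=1$ is impossible since $p_j \neq 0$), so $D := \{p_j : j \in W\}$ is a set of dependent points each incident with a distinct line of $\mathcal{X}$, as required.

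No step is genuinely delicate; the only substantive input is the existence of a nonzero $v \in C^{\perp_a}$, which is immediate from $C \subseteq C^{\perp_a}$ and non-degeneracy of the form. In essence the lemma is the geometric shadow of the algebraic fact that a symplectically self-orthogonal binary code of length $2n$ admits a nonzero vector in its alternating dual.
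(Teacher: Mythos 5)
Your proof is correct, but it takes a genuinely different route from the paper. The paper argues synthetically: it removes a single line $\ell$ from $\mathcal{X}$, observes that the span of the remaining lines cannot have co-dimension $2$ in $\Theta(\mathcal{X})$ (else that subspace would be skew to exactly one line of $\mathcal{X}$, an odd number), concludes that some point of $\ell$ lies in the span of the others, writes that point as a sum of points on the other lines, and then repairs any line carrying two points of the dependency by replacing them with the third point of that line. You instead pass to the binary symplectic code $C$ of length $2n$ attached to $\mathcal{X}$, use $C\leqslant C^{\perp_a}$ together with non-degeneracy of $(\cdot,\cdot)_a$ to get $\dim C^{\perp_a}=2n-m\geqslant n\geqslant 1$, and read the dependency off a nonzero $v\in C^{\perp_a}$; the computation $a\cdot\sum_{j\in W}p_j=(u,v)_a=0$ for all $a$ is exactly the one the paper performs inside the proof of Theorem~\ref{thm:stab_geo}. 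Your route has the small advantage that each line contributes at most one point automatically, so no repair step is needed, and it makes the dimension count $m\leqslant n$ explicit; the paper's route stays entirely inside the geometry (which is the point of this section, since Lemma~\ref{dependqsol} feeds into the purely geometric Theorem~\ref{3GMakx}) and yields the slightly stronger fact that the dependency can be chosen to involve any prescribed line of $\mathcal{X}$. Two minor remarks: the implication ``quantum set of lines $\Rightarrow C\leqslant C^{\perp_a}$'' is the direction labelled ($1\Leftarrow 2$) in the paper's proof of Theorem~\ref{quantumlines}, not the forward one, though the computation you need is indeed established there; and since $C\leqslant C^{\perp_a}$ with $C\neq\{0\}$, you could even take $v\in C$ itself and skip the appeal to non-degeneracy.
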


\begin{proof}
Let $\pi=\Theta(\XX)$ be the subspace spanned by the lines of $\mathcal{X}$ and let $\ell \in \mathcal{X}$. Let $\pi'=\Theta(\XX \setminus \{\ell\})$ be the subspace spanned by the lines of $\mathcal{X} \setminus \{ \ell\}$. The subspace $\pi'$ is either a co-dimension $2$ subspace of $\pi$, a hyperplane of $\pi$, or $\pi$ itself. The first case is ruled out since $\mathcal{X}$ is a quantum set of lines and, by definition, any co-dimension $2$ subspace is skew to an even number of lines of $\mathcal{X}$. Therefore, there is a point of $x$ of $\ell$ incident with $\pi'$. Any point of $\pi'$ is the sum of points incident with the lines of $\mathcal{X} \setminus \{ \ell\}$. Thus, we obtain a set of dependent points each incident with a line of $\mathcal{X}$. If in this set there are two points $y$ and $z$ incident with same line $\ell'$ of $\mathcal{X}$, then we can replace $y$ and $z$ by $\ell' \setminus \{y,z\}$. Hence, we obtain a set of dependent points each incident with a distinct line of $\mathcal{X}$.
\end{proof}

\begin{lemma} \label{qsol3}
A quantum set of three lines is a planar pencil of lines.
\end{lemma}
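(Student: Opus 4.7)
The plan is to prove the statement in two steps: first show that three lines of a quantum set must be coplanar, and then show that coplanar quantum triples of lines are exactly the planar pencils.

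First, I would invoke Theorem~\ref{quantumlines}. For $n = 3$ lines spanning a projective subspace of dimension $d = \dim \Theta(\mathcal{X})$, the correspondence yields a $[\![3,k,d']\!]$ stabilizer code with $d = n - k - 1 = 2 - k$. The key input is $k \geq 0$, which follows from the fact that the associated binary code $C = \tau(S)$ is isotropic under the alternating form, $C \subseteq C^{\perp_a}$, and so $\dim C = n - k \leq n$, the maximum isotropic dimension in a $2n$-dimensional symplectic space. Hence $d \leq 2$. On the other hand, three distinct lines already span a projective subspace of dimension at least $2$, so equality holds and $\Theta(\mathcal{X}) = \mathrm{PG}(2,2)$ is the Fano plane.

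Second, in the Fano plane the codimension-$2$ subspaces are precisely the points, and a point is skew to a line if and only if it does not lie on that line. The quantum condition therefore becomes: every point of $\mathrm{PG}(2,2)$ lies on an odd number (one or three) of the lines of $\mathcal{X}$. Three distinct lines in the Fano plane form either a pencil (concurrent at a common point) or a triangle (meeting pairwise at three distinct vertices). In the triangle configuration each vertex is incident with exactly two of the three lines, violating parity. In the pencil configuration the common point is on all three lines, and since a pencil through a point in $\mathrm{PG}(2,2)$ covers every point of the plane, the remaining six points lie on exactly one line each, so the parity condition is satisfied. Therefore $\mathcal{X}$ must be a planar pencil.

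The most delicate step is the bound $\dim \Theta(\mathcal{X}) \leq n - 1$ in the first paragraph, which hinges on the maximum-isotropic dimension bound for alternating forms supplied by the correspondence with stabilizer codes; once that is in hand the Fano-plane case analysis is essentially immediate.
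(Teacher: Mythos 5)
Your proof is correct, and the way you eliminate the cases $\dim \Theta(\mathcal{X}) \in \{3,4,5\}$ is genuinely different from the paper's. The paper stays entirely inside the geometry: for spans of dimension $4$ or $5$ it exhibits a co-dimension $2$ subspace skew to exactly one of the three lines, and for dimension $3$ it first rules out intersecting pairs (reducing, via Lemma~\ref{qlunion}, to a nonexistent quantum set of two lines) and then kills the configuration of three pairwise skew lines by counting lines of $\mathrm{PG}(3,2)$, arriving at $39 > 35$ against Lemma~\ref{numberofsubspaces}. You instead go through the algebra underlying Theorem~\ref{quantumlines}: the parity condition forces the row space $C$ of the associated matrix to be totally isotropic for the non-degenerate alternating form on ${\mathbb F}_2^{2n}$, whence $\dim \Theta(\mathcal{X}) + 1 = \dim C \leqslant n = 3$. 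This is shorter, removes the case analysis entirely, and actually proves the stronger general fact that a quantum set of $n$ lines spans at most $\mathrm{PG}(n-1,2)$; the paper's route has the virtue of remaining purely geometric and of pointing at the explicit co-dimension $2$ subspaces that witness the failure. Once both arguments land in the Fano plane they coincide: co-dimension $2$ subspaces are points, every point must lie on an odd number of the three lines, the triangle configuration fails at its vertices, and the pencil satisfies the condition. (Like the paper, you tacitly assume the three lines are distinct, which is the intended reading of ``set''.)
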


\begin{proof}
Suppose that the quantum set of three lines $\mathcal{X}=\{\ell_1,\ell_2,\ell_3\}$ span $\mathrm{PG}(4,2)$ or $\mathrm{PG}(5,2)$ respectively. Then there is a point $x \in \ell_2$ such that the co-dimension $2$ subspace spanned by $\ell_1$ and $x$ (resp. $\ell_1$ and $\ell_2$) is skew to $\ell_3$. This contradicts the definition of a quantum set of lines.

Suppose that the quantum set of three lines $\mathcal{X}=\{\ell_1,\ell_2,\ell_3\}$ span $\mathrm{PG}(3,2)$. If $\ell_1$ and $\ell_2$ intersect then the co-dimension $2$ subspace $\ell_1$ (and also $\ell_2$) must also intersect $\ell_3$. Since they span $\mathrm{PG}(3,2)$ the three lines must be concurrent (and not co-planar). Taking the union modulo $2$ of the planar pencil of lines spanned by $\ell_2$ and $\ell_3$ we obtain, by Lemma~\ref{qlunion}, a quantum set of two lines, which does not exist. Thus we have three pairwise skew lines $\ell_1,\ell_2,\ell_3$ with the property that any line incident with two of them is incident with the third. This implies there are nine lines which are all incident with exactly one point of each of $\ell_1,\ell_2,\ell_3$, see Figure~\ref{32}. By Lemma~\ref{numberofsubspaces}, a point of $\mathrm{PG}(3,2)$ is incident with seven lines of $\mathrm{PG}(3,2)$, so in all we have that there are (at least)
$$
9(7-4)+3+9=39
$$
lines of $\mathrm{PG}(3,2)$, when in fact, by Lemma~\ref{numberofsubspaces}, there are $35$.

Therefore, the quantum set of three lines span a $\mathrm{PG}(2,2)$. A co-dimension $2$ subspace is just a point, so a quantum set of lines must be incident with every point of the plane. Hence, $\mathcal{X}$ is a planar pencil of lines.
\end{proof}

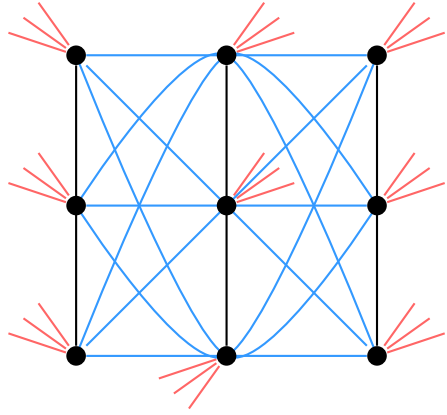
\begin{figure}
\centering
\begin{tikzpicture}
\definecolor{blau}{RGB}{51, 153, 255}
\definecolor{v}{RGB}{255, 102, 102}
	\node (A) at (0,4){};
	\node (B) at (0,2){};
	\node (C) at (0,0){};
	\node (D) at (2,4){};
	\node (E) at (2,2){};
	\node (F) at (2,0){};
	\node (G) at (4,4){};
	\node (H) at (4,2){};
	\node (I) at (4,0){};
	
	\draw[v, thick] (G) -- (4.5, 4.7);
	\draw[v, thick] (G) -- (4.7, 4.5);
	\draw[v, thick] (G) -- (4.9, 4.3);
	
	\draw[v, thick] (H) -- (4.5, 2.7);
	\draw[v, thick] (H) -- (4.7, 2.5);
	\draw[v, thick] (H) -- (4.9, 2.3);
	
	\draw[v, thick] (I) -- (4.5, 0.7);
	\draw[v, thick] (I) -- (4.7, 0.5);
	\draw[v, thick] (I) -- (4.9, 0.3);
		
	\draw[v, thick] (A) -- (-0.5, 4.7);
	\draw[v, thick] (A) -- (-0.7, 4.5);
	\draw[v, thick] (A) -- (-0.9, 4.3);
	
	\draw[v, thick] (B) -- (-0.5, 2.7);
	\draw[v, thick] (B) -- (-0.7, 2.5);
	\draw[v, thick] (B) -- (-0.9, 2.3);
	
	\draw[v, thick] (C) -- (-0.5, 0.7);
	\draw[v, thick] (C) -- (-0.7, 0.5);
	\draw[v, thick] (C) -- (-0.9, 0.3);
	
	\draw[v, thick] (D) -- (2.5, 4.7);
	\draw[v, thick] (D) -- (2.7, 4.5);
	\draw[v, thick] (D) -- (2.9, 4.3);
	
	\draw[v, thick] (E) -- (2.5, 2.7);
	\draw[v, thick] (E) -- (2.7, 2.5);
	\draw[v, thick] (E) -- (2.9, 2.3);
	
	\draw[v, thick] (F) -- (1.5, -0.7);
	\draw[v, thick] (F) -- (1.3, -0.5);
	\draw[v, thick] (F) -- (1.1, -0.3);

	\draw[thick] (A) -- (C);
	\draw[thick] (D) -- (F);
	\draw[thick] (G) -- (I);
	
	\draw[blau, thick] (A) -- (G);
	\draw[blau, thick] (A) -- (I);
	\draw[blau, thick] plot[smooth] coordinates {(A) (F) (H)};
	
	\draw[blau, thick] (B) -- (H);
	\draw[blau, thick] plot[smooth] coordinates {(B) (D) (I)};
	\draw[blau, thick] plot[smooth] coordinates {(B) (F) (G)};
	
	\draw[blau, thick] (C) -- (I);
	\draw[blau, thick] (C) -- (G);
	\draw[blau, thick] plot[smooth] coordinates {(C) (D) (H)};

	\fill (A) circle (3.7pt);
    \fill (B) circle (3.7pt);     
    \fill (C) circle (3.7pt);
    \fill (D) circle (3.7pt);
    \fill (E) circle (3.7pt);
    \fill (F) circle (3.7pt);
    \fill (G) circle (3.7pt);
    \fill (H) circle (3.7pt);
    \fill (I) circle (3.7pt);

\end{tikzpicture}
\caption{Configuration of the lines in $PG(3,2)$.}\label{32}
\end{figure}

The following theorem is due to Glynn, Gulliver, Maks and Gupta  \cite{GGMG}. It is important to note that if the qubit stabilizer code has minimum distance $2$ then it is possible that the quantum set of lines $\mathcal{X}$ contains repeated lines. This occurs, for example, in the $[\![5,2,2]\!]$ code. 

\begin{theorem} \label{3GMakx}
A qubit stabilizer code with minimum distance at least three is equivalent to a quantum set of lines which is generated by the union modulo two of planar pencils of lines.
\end{theorem}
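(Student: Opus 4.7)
The plan is to prove the theorem by induction on $|\mathcal{X}|$, where $\mathcal{X}$ is the quantum set of lines associated with the stabilizer code $Q(S)$ by Theorem~\ref{thm:stab_geo}. For the reverse implication, I would first verify that a single planar pencil is itself a quantum set of lines: the codimension-$2$ subspaces of its ambient plane are the points of that plane, and the three lines of the pencil partition the six non-centre points into pairs, so every point of the plane lies on either all three lines (if it is the centre) or on exactly one line (otherwise), making it skew to an even number -- zero or two -- of lines of the pencil. Iteratively applying Lemma~\ref{qlunion} then shows that any union modulo two of planar pencils is a quantum set of lines, and Theorem~\ref{thm:stab_geo} upgrades this to a stabilizer code.

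For the forward direction, the base case $|\mathcal{X}|=3$ is exactly Lemma~\ref{qsol3}, which identifies a quantum set of three lines with a planar pencil. For the inductive step, assuming $|\mathcal{X}|\geq 4$, the strategy is to find a planar pencil $P$ that shares at least two lines with $\mathcal{X}$. Writing $\mathcal{X}' = \mathcal{X}\triangle P$ for the symmetric difference, we then have $|\mathcal{X}'| = |\mathcal{X}|+3-2|P\cap\mathcal{X}| < |\mathcal{X}|$, and Lemma~\ref{qlunion} ensures that $\mathcal{X}'$ remains a quantum set of lines. The inductive hypothesis then decomposes $\mathcal{X}'$, and hence $\mathcal{X} = \mathcal{X}'\triangle P$, as a union modulo two of planar pencils.

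The key geometric ingredient in locating $P$ is the dependency structure supplied by Lemma~\ref{dependqsol}: there exists a minimum-size set of dependent points $\{x_1,\ldots,x_r\}$ with $r=d(\mathcal{X})\geq 3$ and each $x_i$ on a distinct line $\ell_i \in \mathcal{X}$. In the minimal case $r=3$, the three points satisfy $x_1+x_2+x_3=0$ in $\mathbb{F}_2^{n-k}$ and so are precisely the three points of a common $\mathbb{F}_2$-line $m$, which is then a transversal of $\ell_1,\ell_2,\ell_3$. For each $i$, the plane $\pi_i = \langle m,\ell_i\rangle$ contains a planar pencil $P_i = \{m,\ell_i,\ell_i^{\ast}\}$ through $x_i$. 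By suitable combinations of these $P_i$'s (together with, when $r>3$, the $r$-sputnik decompositions from Lemma~\ref{sputnikisaqsol}), one extracts a planar pencil $P$ with $|P\cap\mathcal{X}|\geq 2$ as required.

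The main obstacle is the case analysis needed to guarantee the existence of $P$: a single $P_i$ typically contains only the one line $\ell_i$ from $\mathcal{X}$, so one must combine several pencils whose symmetric difference may fail to itself be a single pencil, and when $r>3$ the dependent points are no longer collinear and the sputnik framework must be deployed. A related technical point is that the inductive hypothesis should be formulated for \emph{all} quantum sets of lines, not only those arising from codes with $d\geq 3$, since $\mathcal{X}\triangle P$ may degenerate (acquire repeated lines or a smaller value of $d(\cdot)$) even when one starts from a code with distance at least three.
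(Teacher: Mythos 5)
Your overall architecture matches the paper's: reduce $|\mathcal{X}|$ by taking symmetric differences with planar pencils, terminate at three lines via Lemma~\ref{qsol3}, and use Lemma~\ref{qlunion}, Lemma~\ref{dependqsol} and the sputnik machinery of Lemma~\ref{sputnikisaqsol} along the way. However, the step that carries the whole induction --- producing a planar pencil $P$ with $|P\cap\mathcal{X}|\geq 2$ --- is a genuine gap, and in fact no such pencil need exist. For the $[\![5,1,3]\!]$ code of Example~\ref{spread1} the five lines of $\mathcal{X}$ are pairwise skew in $\mathrm{PG}(3,2)$; two skew lines span a three-dimensional subspace, so no plane, and hence no planar pencil, contains two of them. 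Your fallback (``suitable combinations of the $P_i$'s together with sputnik decompositions'') names the right objects but does not supply the mechanism, and you explicitly flag this as the unresolved obstacle.

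The paper's resolution is to abandon the requirement that a \emph{single} pencil meet $\mathcal{X}$ in two lines. Instead, given minimally dependent points $x_1,\ldots,x_{r+1}$ on distinct lines $\ell_1,\ldots,\ell_{r+1}$ of $\mathcal{X}$ (Lemma~\ref{dependqsol}), one picks an auxiliary point $x\in\ell_{r+1}\setminus\{x_{r+1}\}$, forms the $r$-sputnik $\mathcal{X}'=\{\langle x,x_j\rangle \mid j=1,\ldots,r\}\cup\{\ell_{r+1}\}$, and for each $j$ the planar pencil $\mathcal{L}_j$ spanned by $\ell_j$ and $\ell_j'=\langle x,x_j\rangle$. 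The union modulo two of $\mathcal{X}$, $\mathcal{X}'$ and all the $\mathcal{L}_j$ cancels $\ell_1,\ldots,\ell_{r+1}$ and introduces only the $r$ third lines of the pencils $\mathcal{L}_j$, giving a quantum set of $|\mathcal{X}|-1$ lines; since the sputnik is itself a union modulo two of pencils, the decomposition into pencils is preserved at every step. This is the concrete construction your proposal is missing. (Your remark that the inductive hypothesis must be stated for all quantum sets of lines, not just those with $d\geq 3$, is correct and is implicitly what the paper does by iterating on arbitrary quantum sets of lines.)
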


\begin{proof}
Let $\mathcal{X}$ be a quantum set of lines. We will prove that there is an $r$-sputnik $\mathcal{X}'$ such that the union modulo $2$ of $\mathcal{X}$, $\mathcal{X}'$ and $r-1$ planar pencils of lines is a quantum set of $|\mathcal{X}|-1$ lines. Since, by Lemma~\ref{sputnikisaqsol}, $\mathcal{X}'$ is the union modulo~$2$ of planar pencils of lines, this implies that, by iteration, we can take the union modulo $2$ of $\mathcal{X}$ and some planar pencils of lines and obtain a quantum set of three lines, by Lemma~\ref{qlunion}. By Lemma~\ref{qsol3}, this set of three lines is a planar pencil of lines and we are done.

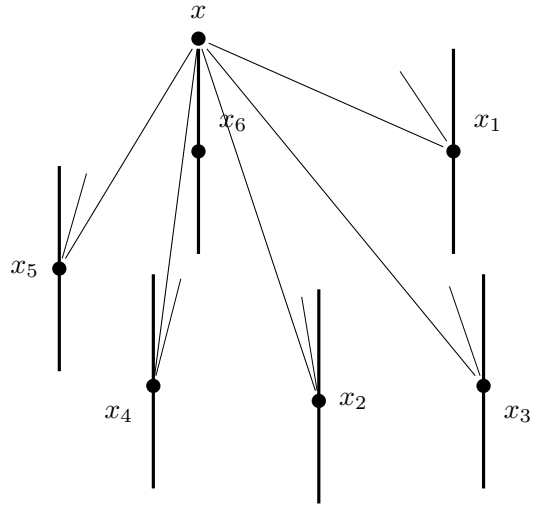
\begin{figure}
\begin{center}
\begin{tikzpicture} 
       \node (e0) at (1.85,2.56) [label=above right:$x_6$]{};
       \node (e1) at (5.24,2.56) [label=above right:$x_1$]{};
       \node (e2) at (3.45,-0.76) [label=right:$x_2$]{}; 
       \node (e3) at (5.64,-0.56) [label=below right:$x_3$]{}; 
       \node (e4) at (1.25,-0.56) [label=below left:$x_4$]{}; 
       \node (e5) at (0,1.0) [label=left:$x_5$]{}; 
       \node (e0u) at (1.85,4.06) [label=above:$x$]{};
       \node (e0b) at (1.85,1.06) [label=below:$$]{};
       \node (e1u) at (5.24,4.06) [label=below:$$]{};
       \node (e1b) at (5.24,1.06) [label=below:$$]{};
       \node (e1r) at (4.44,3.76) [label=below:$$]{};
       \node (e2u) at (3.45,0.86) [label=below:$$]{};
       \node (e2b) at (3.45,-2.26) [label=below:$$]{};
       \node (e2r) at (3.20,0.76) [label=below:$$]{};
       \node (e3u) at (5.64,1.06) [label=below:$$]{};
       \node (e3b) at (5.64,-2.06) [label=below:$$]{};
       \node (e3r) at (5.14,0.90) [label=below:$$]{};
       \node (e4u) at (1.25,1.06) [label=below:$$]{};
       \node (e4b) at (1.25,-2.06) [label=below:$$]{};
       \node (e4r) at (1.65,1.00) [label=below:$$]{};
       \node (e5u) at (0,2.5) [label=below:$$]{};
       \node (e5b) at (0,-0.5) [label=below:$$]{};
       \node (e5r) at (0.4,2.4) [label=below:$$]{};
       \draw (e0u) -- (e1); 
       \draw (e0u) -- (e2); 
       \draw (e0u) -- (e3); 
       \draw (e0u) -- (e4); 
       \draw (e0u) -- (e5); 
       \draw[ultra thin] (e5) -- (e5r); 
       \draw[ultra thin] (e4) -- (e4r); 
       \draw[ultra thin] (e3) -- (e3r); 
       \draw[ultra thin] (e2) -- (e2r); 
       \draw[ultra thin] (e1) -- (e1r);
       \draw[very thick] (e0b) -- (e0u); 
       \draw[very thick] (e1b) -- (e1u); 
       \draw[very thick] (e2b) -- (e2u); 
       \draw[very thick] (e3b) -- (e3u); 
       \draw[very thick] (e4b) -- (e4u);  
       \draw[very thick] (e5b) -- (e5u); 
       \fill (e0) circle (2.7pt);
       \fill (e1) circle (2.7pt);
       \fill (e2) circle (2.7pt);
       \fill (e3) circle (2.7pt);
       \fill (e4) circle (2.7pt);
       \fill (e5) circle (2.7pt);
        \fill (e0u) circle (2.7pt);

     \end{tikzpicture} 
    
\end{center}
\caption{The thick lines are in $\mathcal{X}$, the medium-thick lines are in $\mathcal{X}'$ and the thin lines make up the planar pencils at each point $x_1,\ldots,x_r$.} \label{addthesputnik}
\end{figure}

By Lemma~\ref{dependqsol}, there is a set $x_1,\ldots,x_{r+1}$ of minimally dependent points incident with the lines $\ell_1,\ldots,\ell_{r+1}$ of $\mathcal{X}$, respectively. Let $x \in \ell_{r+1}\setminus \{x_{r+1}\}$. Let $\ell_j'$ be the line spanned by the points $x$ and $x_j$, for $j=1,\ldots,r$. Let $\mathcal{X}'$ be the $r$-sputnik, 
$$
\mathcal{X}'=\{\ell_j' \ | \ j=1,\ldots,r\} \cup \{\ell_{r+1}\}.
$$
Let $\mathcal{L}_j$ be the planar pencil of lines spanned by $\ell_j$ and $\ell_j'$. In Figure~\ref{addthesputnik}, $r=5$, the lines $\ell_j$ are the thick lines, the $\ell_j'$ are the medium thickness lines and the thin lines are the third line in the planar pencil of lines spanned by $\ell_j$ and $\ell_j'$.

By Lemma~\ref{qlunion}, the union modulo two of
$$
(\cup_{j=1}^r \mathcal{L}_j) \cup \mathcal{X}\cup \mathcal{X}'
$$
is a quantum set of lines and, on inspection, it is a set of $|\mathcal{X}|-1$ lines.
\end{proof}

\begin{example} \label{5zero3code2}
Consider again the $[\![5,0,3]\!]$ code constructed in Example~\ref{5zero3code}. As a quantum set of lines $\mathcal{X}$, this is the union modulo two of pencils of lines drawn in Figure~\ref{503figure}.

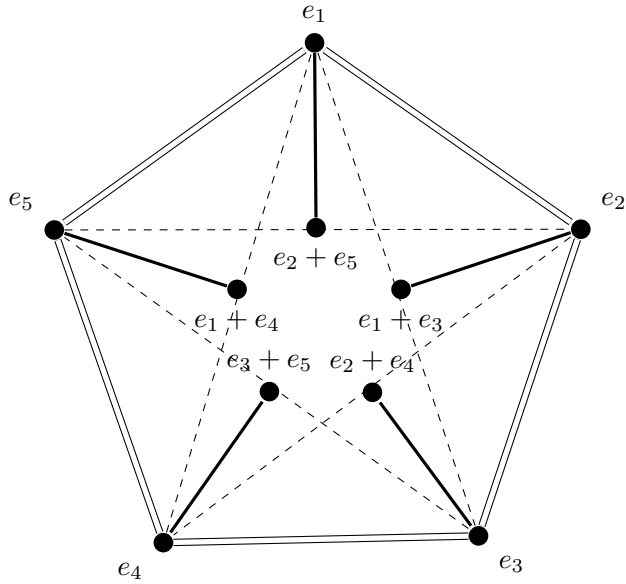
\begin{figure}
\begin{center}
\begin{tikzpicture} 
       \node (e1) at (3.46,5) [label=$e_1$]{};
       \node (e2) at (7,2.52) [label=above right:$e_2$]{}; 
       \node (e3) at (5.64,-1.56) [label=below right:$e_3$]{}; 
       \node (e4) at (1.45,-1.65) [label=below left:$e_4$]{}; 
       \node (e5) at (0,2.51) [label=above left:$e_5$]{}; 
       \node (e13) at (4.61,1.72) [label=below:$e_1+e_3$]{};
       \node (e24) at (4.23,0.35) [label=above :$e_2+e_4$]{}; 
       \node (e35) at (2.86,0.36) [label=above :$e_3+e_5$]{}; 
       \node (e14) at (2.43,1.72) [label=below :$e_1+e_4$]{}; 
       \node (e25) at (3.48,2.54) [label=below:$e_2+e_5$]{}; 
       \draw[double distance=2pt] (e1) -- (e2); 
       \draw[double distance=2pt] (e2) -- (e3); 
       \draw[double distance=2pt] (e3) -- (e4); 
       \draw[double distance=2pt] (e4) -- (e5); 
       \draw[double distance=2pt] (e5) -- (e1); 
       \draw[dashed] (e4) -- (e1); 
       \draw[dashed] (e1) -- (e3); 
       \draw[dashed] (e3) -- (e5); 
       \draw[dashed] (e5) -- (e2);
       \draw[dashed] (e2) -- (e4); 
       \draw[very thick] (e1) -- (e25); 
       \draw[very thick] (e2) -- (e13); 
       \draw[very thick] (e3) -- (e24); 
       \draw[very thick] (e4) -- (e35); 
       \draw[very thick] (e5) -- (e14);
        
       \fill (e1) circle (3.7pt);
       \fill (e2) circle (3.7pt);
       \fill (e3) circle (3.7pt);
       \fill (e4) circle (3.7pt);
       \fill (e5) circle (3.7pt);
       \fill (e13) circle (3.7pt);
       \fill (e24) circle (3.7pt);
       \fill (e35) circle (3.7pt);
       \fill (e14) circle (3.7pt);
       \fill (e25) circle (3.7pt);

     \end{tikzpicture} 
\end{center}
\caption{The $[\![5,0,3]\!]$ code as the union modulo two of planar pencils of lines.} \label{503figure}
\end{figure}

Since $k=0$, $d(\mathcal X)$ is the minimum $d$ for which there is a hyperplane of $\mathrm{PG}(4,q)$ containing $|\mathcal{X}|-d=5-d$ lines of $\mathcal{X}$. Since any three lines span the whole space, we have that $d=3$. Thus, this is a $[\![5,0,3]\!]$ code.

We can also construct the $[\![5,1,3]\!]$ code from Figure~\ref{503figure}. We only have to replace $e_5$ with $e_1+e_2+e_3+e_4$ and check that the five (thick) lines are then pairwise skew. This can be done by writing down the $15$ points and checking we get every point of $\mathrm{PG}(3,2)$. Then, since any two of the thick lines are pairwise skew, we have that the minimum distance is $3$.
\end{example}

\begin{example}

The $[\![6,0,4]\!]$ code is the sum modulo $2$ of 16 planar pencils of lines, see Figure~\ref{64}. The cyclic structure allows one to check quickly that there are no three collinear points intersecting distinct lines of the six lines of the quantum set of lines. Indeed, the points of weight two obtained by summing two points incident with the quantum lines are cyclic shifts of $26, 36, 46$ and the points of weight three obtained by summing two points incident with the quantum lines are cyclic shifts of $134$ and $146$. Therefore, the minimum distance of the code is at least $4$. The points $e_{126},e_{34}, e_{16}, e_{234}$ are four dependent points, implying that the minimum distance of the code is $4$.

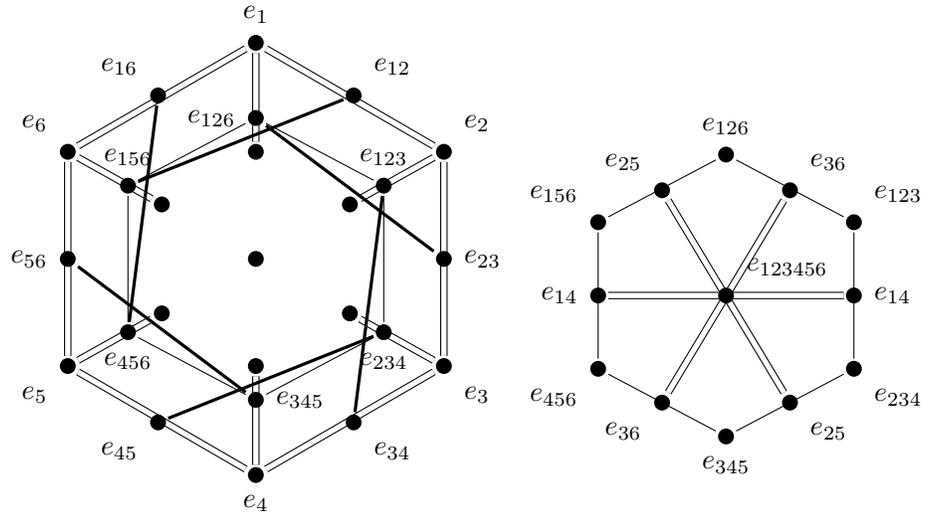
\begin{figure}
\centering
\begin{tikzpicture}[scale = 0.5]
\node (e1) at (5, 11.5)[label = above:$e_1$]{};
\node (e2) at (10, 8.6)[label = above right:$e_2$]{};
\node (e3) at (10, 2.9)[label = below right:$e_3$]{};
\node (e4) at (5, 0)[label = below:$e_4$]{};
\node (e5) at (0, 2.9)[label = below left:$e_5$]{};
\node (e6) at (0, 8.6)[label = above left:$e_6$]{};
\node (e12) at (7.6, 10.1)[label = above right:$e_{12}$]{};
\node (e23) at (10, 5.75)[label = right:$e_{23}$]{};
\node (e34) at (7.6, 1.4)[label = below right:$e_{34}$]{};
\node (e45) at (2.4, 1.4)[label = below left:$e_{45}$]{};
\node (e56) at (0, 5.75)[label = left:$e_{56}$]{};
\node (e16) at (2.4, 10.1)[label = above left:$e_{16}$]{};
\node (e123) at (8.4, 7.7)[label = above:$e_{123}$]{};
\node (e234) at (8.4, 3.8)[label = below:$e_{234}$]{};
\node (e345) at (5, 2)[label = right:$e_{345}$]{};
\node (e456) at (1.6, 3.8)[label = below :$e_{456}$]{};
\node (e156) at (1.6, 7.7)[label = above:$e_{156}$]{};
\node (e126) at (5, 9.5)[label = left:$e_{126}$]{};
\node (e26) at (5,8.6)[]{};
\node (e35) at (5, 2.9)[]{};
\node (e46) at (2.5, 4.3)[]{};
\node (e24) at (7.5, 4.3)[]{};
\node (e123456) at (5,5.75)[]{};
\node (e13) at (7.5, 7.2)[]{};
\node (e15) at (2.5, 7.2)[]{};

\draw[double distance=2pt] (e1) -- (e2);
\draw[double distance=2pt] (e2) -- (e3);
\draw[double distance=2pt] (e3) -- (e4);
\draw[double distance=2pt] (e4) -- (e5);
\draw[double distance=2pt] (e5) -- (e6);
\draw[double distance=2pt] (e6) -- (e1);

\draw[double distance=2pt] (e1) -- (e26);
\draw[double distance=2pt] (e4) -- (e35);
\draw[double distance=2pt] (e2) -- (e13);
\draw[double distance=2pt] (e6) -- (e15);
\draw[double distance=2pt] (e5) -- (e46);
\draw[double distance=2pt] (e3) -- (e24);

\draw[very thick] (e12)--(e156);
\draw[very thick] (e23) -- (e126);
\draw[very thick] (e34) -- (e123);
\draw[very thick] (e45) -- (e234);
\draw[very thick] (e345) -- (e56);
\draw[very thick] (e456) -- (e16);

\draw (e156) -- (e126);
\draw (e126) -- (e123);
\draw (e123) -- (e234);
\draw (e234) -- (e345);
\draw (e456) -- (e345);
\draw (e156) -- (e456);

\fill (e1) circle (6pt);
\fill (e2) circle (6pt);
\fill (e3) circle (6pt);
\fill (e4) circle (6pt);
\fill (e5) circle (6pt);
\fill (e6) circle (6pt);
\fill (e12) circle (6pt);
\fill (e23) circle (6pt);
\fill (e34) circle (6pt);
\fill (e45) circle (6pt);
\fill (e56) circle (6pt);
\fill (e16) circle (6pt);
\fill (e123) circle (6pt);
\fill (e234) circle (6pt);
\fill (e345) circle (6pt);
\fill (e456) circle (6pt);
\fill (e156) circle (6pt);
\fill (e126) circle (6pt);
\fill (e26) circle (6pt);
\fill (e35) circle (6pt);
\fill (e123456) circle (6pt);
\fill (e46) circle (6pt);
\fill (e24) circle (6pt);
\fill (e13) circle (6pt);
\fill (e15) circle (6pt);
\end{tikzpicture}
\begin{tikzpicture}[scale = 0.5]
\node (e1) at (5, 11.5)[]{};
\node (e2) at (10, 8.6)[]{};
\node (e3) at (10, 2.9)[]{};
\node (e4) at (5, 0)[]{};
\node (e5) at (0, 2.9)[]{};
\node (e6) at (0, 8.6)[]{};
\node (e12) at (7.6, 10.1)[]{};
\node (e23) at (10, 5.75)[]{};
\node (e34) at (7.6, 1.4)[]{};
\node (e45) at (2.4, 1.4)[]{};
\node (e56) at (0, 5.75)[]{};
\node (e16) at (2.4, 10.1)[]{};
\node (e123) at (8.4, 7.7)[label = above right:$e_{123}$]{};
\node (e234) at (8.4, 3.8)[label = below right:$e_{234}$]{};
\node (e345) at (5, 2)[label = below:$e_{345}$]{};
\node (e456) at (1.6, 3.8)[label = below left:$e_{456}$]{};
\node (e156) at (1.6, 7.7)[label = above left:$e_{156}$]{};
\node (e126) at (5, 9.5)[label = above:$e_{126}$]{};
\node (e26) at (5,8.6)[]{};
\node (e35) at (5, 2.9)[]{};
\node (e46) at (2.5, 4.3)[]{};
\node (e24) at (7.5, 4.3)[]{};
\node (e123456) at (5,5.75)[label=above right:$e_{123456}$]{};
\node (e13) at (7.5, 7.2)[]{};
\node (e15) at (2.5, 7.2)[]{};
\node (e36) at (3.3, 2.9)[label = below left:$e_{36}$]{};
\node (e25) at (3.3, 8.55)[label = above left:$e_{25}$]{};
\node (e14) at (1.6, 5.75)[label = left:$e_{14}$]{};
\node (e14*) at (8.4,5.75)[label = right:$e_{14}$]{};
\node (e25*) at (6.7, 2.9)[label = below right:$e_{25}$]{};
\node (e36*) at (6.7, 8.55)[label = above right:$e_{36}$]{};

\draw (e156) -- (e126);
\draw (e126) -- (e123);
\draw (e123) -- (e234);
\draw (e234) -- (e345);
\draw (e456) -- (e345);
\draw (e156) -- (e456);
\draw[double distance = 2pt] (e36) -- (e36*);
\draw[double distance = 2pt] (e25) -- (e25*);
\draw[double distance = 2pt] (e14) -- (e14*);

\fill (e123) circle (6pt);
\fill (e234) circle (6pt);
\fill (e345) circle (6pt);
\fill (e456) circle (6pt);
\fill (e156) circle (6pt);
\fill (e126) circle (6pt);
\fill (e123456) circle (6pt);
\fill (e36) circle (6pt);
\fill (e25) circle (6pt);
\fill (e14) circle (6pt);
\fill (e36*) circle (6pt);
\fill (e14*) circle (6pt);
\fill (e25*) circle (6pt);
\end{tikzpicture}
\caption{The quantum set of lines (the thicker lines) giving a $[\![6,0,4]\!]$ code.} \label{64}
\end{figure}

\end{example}

\begin{rprob}
The parameters $[\![14,3,5]\!]$ are the smallest for which it is unknown whether there exists a qubit stabilizer code or not~\cite{codetables}. To construct such a code one should look for a union modulo two of planar pencils of lines that give $14$ lines in $\mathrm{PG}(10,2)$, such that any four points on $4$ of the $14$ lines
that also lie on a common plane,
the remaining $10$ lines are contained in a hyperplane which also contains those four dependent points. 
\end{rprob}

Theorem~\ref{3GMakx} can also be used to rule out the existence of quantum codes with certain parameters sets. For example, were a $[\![4,0,3]\!]$ stabilizer code to exist then $\mathcal{X}$ would be a set of four skew lines in $\mathrm{PG}(3,2)$ with the property that any line is skew to an even number of lines of $\mathcal{X}$. However, the lines of $\mathcal X$ themselves are skew to the other three lines of $\mathcal X$, which is an odd number. A more interesting exercise is to prove that a $[\![7,0,4]\!]$ code does not exist. To prove this, show that there are at least five three dimensional subspaces which intersect all of the $7$ lines of $\mathrm{PG}(6,2)$ in the quantum set of lines and prove that these pairwise intersect in a point.

\section{Non-additive qubit quantum codes} \label{unionstabilizers}

\subsection{Direct sum of stabilizer codes} 

As discussed in the previous sections, 
a stabilizer code is defined as the common $(+1)$-eigenspace of an independent set of pairwise commuting Pauli operators $M_1, \dots, M_{n-k}$; this is the generator of the code. In other words, these codes are completely characterized by an abelian subgroup~$S = \langle M_1, \dots, M_{n-k}\rangle \subset \mathcal{P}_n$.
The aim of this section is to construct quantum codes that are the {\em direct sum} of stabilizer codes.
Technically speaking, any subspace can be regarded as a quantum code,
and naturally we want to make sure to obtain a large mininum distance when taking this direct sum of subspaces. 
Thus, we seek for some additional structure amongst them.
While each individual subspace will again be defined by a set of generators $M_1,\ldots,M_{n-k}$, 
we will now not simply take the joint eigenspace with eigenvalue~$1$ as our code space.

We have already observed that to avoid constructing a trivial code, 
one restricts the stabilizer not to contain a non-trivial multiple of the identity, $-\one \not\in S$. 
This implies that each generator can only have an overall phase of $+1$ or $-1$
and they are of the form
$$
M_j= \pm \sigma_{1} \otimes \cdots \otimes \sigma_{n}
$$
for some $\sigma_{1}, \dots, \sigma_{n} \in \mathcal{P}_1$. 
Now observe that when $M_1,\ldots,M_{n-k}$ commute, then so do
$$
\pm M_1,\ldots,\pm M_{n-k}\,.
$$

Thus for all $t=(t_1,\ldots,t_{n-k}) \in \{0,1\}^{n-k}$, one can define a corresponding stabilizer code $Q(S_t)$ 
as the joint $(+1)$-eigenspace of
$$
(-1)^{t_1} M_1,\ldots,(-1)^{t_{n-k}} M_{n-k}.
$$

For distinct $t$ and $t' \in T$, there is a $j$ such that $t_j \neq t'_j$. Without loss of generality, suppose that $t_j=1$. For all $\ket{v} \in Q(S_t)$ and $\ket{w}\in Q(S_{t'})$, one has $\braket{v}{w} = \bra{v}\ket{M_j w} = \bra{M_jv}\ket{w} = -\braket{v}{w} = 0$. Consequently, $Q(S_t)$ and $Q(S_{t'})$ are orthogonal.

For any $T \subset  \{0,1\}^m$, we define a {\em direct sum stabilizer code} (confusingly also known as a {\em union stabilizer code}) as
$$
Q(S_T)= \bigoplus_{t \in T} Q(S_t).
$$

To be able to determine the minimum distance of this quantum code, we first determine the errors which are not detectable.

As before, let $\mathrm{G}$ be the generator matrix whose row space is $C=\tau(S)$.

Let $t,u \in T \setminus \{0\}$ and let $\mathrm{A}_{t,u}$ be a $(n-k) \times (n-k)$ non-singular matrix whose first two columns are $t$ and $u$. Then $\mathrm{A}_{t,u}^{-1}\mathrm{G}$ is also a generator matrix for $C$ and we can find another set 
$$
\{M_i' \ | i=1,\ldots,n-k \}
$$
of generators of $S$, where $M_i'$ is obtained from the $i$-th row of $A_{t,u}^{-1}G$ by applying $\tau^{-1}$, in other words reversing the construction above. 

Let $S_{t,u}$ be the subgroup of $S$ generated by $M'_3,\ldots,M'_{n-k}$. 

\begin{lemma} \label{stulemma}
Suppose $\ket{\psi^t} \in Q_t(S)$ and $\ket{\psi^{u}} \in Q_{u}(S)$. Then, for all $M\in S_{t,u}$, 
$$
M\ket{\psi^t}=\ket{\psi^t} \mathrm{and} \ M \ket{\psi^{u}}=\ket{\psi^{u}}.
$$
\end{lemma}

\begin{proof}
Observe that $Q_t(S)$ depends on the set of generators we have chosen for $S$. If we use the set of generators $M'_1,\ldots,M'_{n-k}$ for $S$ then $Q_t(S)$ becomes $Q_{(1,0,0,\ldots,0)}(S)$ and $Q_u(S)$ becomes $Q_{(0,1,0,\ldots,0)}(S)$. Thus, $M'_j \ket{\psi^t}=\ket{\psi^t}$ and $M'_j \ket{\psi^{u}}=\ket{\psi^{u}}$ for all $j\in \{3,\ldots,n-k\}$.
\end{proof}.

\begin{lemma} \label{uniondetect}
Suppose $Q(S_T)$ is unable to detect an error $E$. Then there is a pair $t,u \in T$ such that $E  \in \mathrm{Centraliser}(S_{t,u})$.
\end{lemma}

\begin{proof}
Suppose there is no such pair.
Then, for all $t,u \in T$, there is a $M_{t,u} \in S_{t,u}$ for which $E$ anti-commutes with $M_{t,u}$.

Suppose $\ket{\psi^t} \in Q_t(S)$ and $\ket{\psi^{u}} \in Q_{u}(S)$ are in an orthogonal basis for $Q(S_T)$. By Lemma~\ref{stulemma},
$$
M_{t,u}\ket{\psi^t}=\ket{\psi^t} \mathrm{and} \ M_{t,u} \ket{\psi^{u}}=\ket{\psi^{u}}
$$
and so
$$
\bra{\psi^t} E \ket{\psi^u} = \bra{\psi^t} EM_{t,u} \ket{\psi^u}= -\bra{\psi^t} M_{t,u} E \ket{\psi^u}=
 -\bra{\psi^t} E\ket{\psi^u}.
$$
Hence,
$$
\bra{\psi^t} E \ket{\psi^u} =0
$$ 
and by Theorem~\ref{KLtheorem}, $E$ is detectable.
\end{proof}

Thus, according to Lemma~\ref{uniondetect}, we only need concern ourselves with the errors which are in $\mathrm{Centraliser}(S_{t,u})$ for any $t,u\in T$.

This motivates the definition
\begin{equation} \label{uniondistance}
d_T=\mathrm{min} \{d_{t,u} \ | \ t,u \in T\} 
\end{equation}
where $d_{t,u}$ is the minimum weight of a Pauli operator in $\mathrm{Centralise}(S_{t,u})$.

\begin{theorem} \label{unionthm}
The subspace $Q(S_T)$ is an $(\!(n,|T|2^k,d_T)\!)$ quantum code.
\end{theorem}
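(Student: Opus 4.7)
The proof splits into a dimension count and a minimum-distance calculation, which I would carry out separately.

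For the dimension, I would first appeal to the discussion preceding the theorem, which already establishes that the subspaces $\{Q(S_t)\}_{t \in T}$ are pairwise orthogonal: for any distinct $t, t' \in T$ there exists some coordinate $j$ with $t_j \neq t'_j$, and the generator $M_j$ acts as $+1$ on $Q(S_t)$ and $-1$ on $Q(S_{t'})$, forcing $\bra{v}\ket{w} = -\bra{v}\ket{w} = 0$. Each $Q(S_t)$ has dimension $2^k$ by Theorem~\ref{qubitdim}, so the direct sum $Q(S_T) = \bigoplus_{t \in T} Q(S_t)$ has dimension $|T| \cdot 2^k$.

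For the lower bound on the minimum distance, I would argue the contrapositive. Assume for contradiction that there is an undetectable Pauli error $E$ with $\mathrm{wt}(E) < d_T$. By Lemma~\ref{uniondetect} one can select a pair $t, t' \in T$ such that $E$ commutes with every $M_j$ for $j \notin \mathrm{supp}(t+t')$. My plan is to convert this one-sided commutation statement into the claim that $E$ (possibly after multiplying by a suitable element of $S$ chosen to fix up the commutation/sign behaviour on $\mathrm{supp}(t+t')$) lies in $\mathrm{Centraliser}(S_{t+t'}) \setminus S_{t+t'}$, without increasing its weight. Once this is done, Theorem~\ref{stabdistance} yields $\mathrm{wt}(E) \geq d_{t+t'} \geq d_T$, contradicting our assumption.

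For the upper bound, I would use the definition of $d_T$ to pick $t_0, t_0' \in T$ achieving $d_{t_0+t_0'} = d_T$, and then apply Theorem~\ref{stabdistance} to get an error $E \in \mathrm{Centraliser}(S_{t_0+t_0'}) \setminus S_{t_0+t_0'}$ of weight exactly $d_T$. I would then exhibit concrete code states $\ket{\phi} \in Q(S_{t_0})$ and $\ket{\psi} \in Q(S_{t_0'})$, which are orthogonal as vectors in $Q(S_T)$, but for which $\bra{\phi} E \ket{\psi} \neq 0$. This would violate the Knill-Laflamme condition of Theorem~\ref{KLtheorem}, showing $E$ is undetectable for $Q(S_T)$.

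The main obstacle is the lower-bound step: Lemma~\ref{uniondetect} only guarantees commutation of $E$ with the generators indexed outside $\mathrm{supp}(t+t')$, which does not place $E$ directly in $\mathrm{Centraliser}(S_{t+t'})$. Bridging this gap requires careful bookkeeping of signs when $E$ acts on $Q(S_{t+t'})$ and possibly multiplying $E$ by a product of those $M_j$ with $j \in \mathrm{supp}(t+t')$ that anti-commute with it, exploiting the fact that this correction is a Pauli operator of weight no larger than $\mathrm{wt}(E) + \sum_{j \in \mathrm{supp}(t+t')} \mathrm{wt}(M_j)$ — but one must argue that a cleverer correction keeps the weight bounded, for which the definition of $\mathrm{Centraliser}(S_{t+t'})$ as a signed subset (rather than just an unsigned one) is essential.
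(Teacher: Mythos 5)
Your dimension count is fine and matches what the paper leaves implicit. The problem is in the lower bound on the distance: the ``main obstacle'' you identify is an artefact of misreading the notation, and the repair you sketch does not close the gap you create. In this section $S_{t+t'}$ does not denote the sign-twisted group generated by all $n-k$ operators $(-1)^{(t+t')_j}M_j$; it denotes the subgroup generated only by $\{M_j \ | \ j \notin \mathrm{supp}(t+t')\}$. This reading is forced by the sentence immediately following Lemma~\ref{uniondetect} (``we only need concern ourselves with the errors which are in $\mathrm{Centraliser}(S_{t+t'})$''), by the later geometric discussion where $Q(S_{e_i+e_j})$ is the code generated by $\{M_\ell \ | \ \ell \neq i,j\}$, and by Example~\ref{562code}, where the relevant $d_{t+t'}$ is the distance $2$ of the code obtained by deleting two generators; under your reading $d_T$ would come out as $3$ there, and a $(\!(5,6,3)\!)$ code cannot exist. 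With the correct reading, the conclusion of Lemma~\ref{uniondetect} --- $E$ commutes with every $M_j$, $j\notin\mathrm{supp}(t+t')$ --- \emph{is} the statement $E\in\mathrm{Centraliser}(S_{t+t'})$, so there is nothing to bridge and no multiplication by elements of $S$ is needed; the correction you propose would in any case not obviously preserve the weight, as you yourself concede, so as written the lower-bound step is incomplete. The one point that does merit care (and which the paper also glosses over) is that $d_{t+t'}$ only counts $\mathrm{Centraliser}(S_{t+t'})\setminus S_{t+t'}$; but if the undetectable $E$ lay in $S_{t+t'}$ for every pair furnished by Lemma~\ref{uniondetect}, it would act as a global scalar on $Q(S_T)$ and hence be detectable, so this case does not arise.

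Your upper-bound step goes beyond the paper, whose proof only establishes that every undetectable error has weight at least $d_T$. As sketched it also has a hole: a minimum-weight element $E$ of $\mathrm{Centraliser}(S_{t_0+t_0'})\setminus S_{t_0+t_0'}$ is only guaranteed to have its anticommutation pattern $s$ supported inside $\mathrm{supp}(t_0+t_0')$; $s$ need not equal $t_0+t_0'$, so $E$ maps $Q(S_{t_0'})$ into $Q(S_{t_0'+s})$, which may fail to be $Q(S_{t_0})$ or even to meet $Q(S_T)$ at all, and then $\bra{\phi}E\ket{\psi}=0$ for your chosen states. Producing a genuinely undetectable error of weight exactly $d_T$ requires a converse to Lemma~\ref{uniondetect} that neither you nor the paper supplies.
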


\begin{proof}
If $E$ is undetectable then it is an element of $\mathrm{Centraliser}(S_{t,u})$ for some $t,u\in T$. 
\end{proof}

\subsection{The Rains, Hardin, Shor, Sloane non-additive quantum code}

This code first appeared in \cite{RHSS1997}, although the geometric observation given here appears to be new.

\begin{example} (Rains, Hardin, Shor, Sloane) \label{562code}
Consider the following elements of $\mathcal{P}_5$.
$$
\begin{array}{rccccccccc} 
M_1= & Z & X & Y & Y & X  \\ 
M_2 =& X & Z & X & Y & Y  \\ 
M_3 =& Y & X & Z & X & Y  \\
M_4 =& Y & Y &  X & Z & X  \\
M_5 =& X & Y & Y &  X & Z  \\
\end{array}
$$
The corresponding matrix whose $i$-th row is $\tau(M_i)$ is
$$
\left(
\begin{array}{ccccc|ccccc}
0 & 1 & 1 & 1 & 1 & 1 & 0 & 1 & 1 & 0\\
1 & 0 & 1 & 1 & 1 & 0 & 1 & 0 & 1 & 1 \\
1 & 1 & 0 & 1 & 1 & 1 & 0 & 1 & 0 & 1  \\
1 & 1 & 1 & 0 & 1 & 1 & 1 & 0 & 1 & 0 \\
1 & 1 & 1 & 1 & 0 & 0 & 1 & 1 & 0 & 1  \\
\end{array}
\right).
$$
Observe that deleting any two rows of this matrix we obtain a $3 \times 10$ matrix whose $5$ pairs of columns define a quantum set of lines in $\mathrm{PG}(2,2)$. This quantum set of lines defines a stabilizer code whose minimum distance is $2$. Therefore, if we set
$$
T=\{\emptyset,\{1\},\{2\},\{3\},\{4\},\{5\} \}
$$
then, by Theorem~\ref{unionthm}, $Q(S_T)$ is a $(\!(5,6,2)\!)$ quantum code.

\end{example}

\subsection{The geometry of direct sum stabilizer codes}

Suppose that we restrict our choice of elements of $T$ to singleton subsets and the empty set, as in Example~\ref{562code}. Let $\mathcal{X}$ be the quantum set of lines of $\mathrm{PG}(n-k-1,2)$ associated with the $[\![n,k,d]\!]$ quantum stabilizer code $Q(S)$, where $S$ is the subgroup generated by $M_1,\ldots,M_{n-k}$. Let $P=\{e_1,\ldots,e_r\}$ be a set of linearly independent points of $\mathrm{PG}(n-k-1,2)$, chosen so that the projection from any two points $e_i,e_j \in P$ of the lines of $\mathcal{X}$ is a set of lines of $\mathrm{PG}(n-k-3,2)$. If this projection is a set of lines then it is necessarily a quantum set of lines, which we denote by $\mathcal{X}_{ij}$. The mini

The parameter $d({\mathcal X}_{ij})$ is the size of the smallest set of dependent points incident with distinct lines of $\mathcal{X}_{ij}$. Thus, the definition in (\ref{uniondistance}) will be
$$
d_T=\mathrm{min} \{ d(\mathcal{X}_{ij}) \ | \ i,j \in \{1,\ldots,r\} \}.
$$
Hence, we have a purely geometric way to construct direct sum stabilizer codes with parameters $(\!(n,(r+1)2^k,d_T)\!)$, for some $r \leqslant n-k$.

This is taken much further in \cite{BP2021}, where the geometrical construction is generalised to prime alphabets.

\begin{rprob}
Find quantum sets of lines $\mathcal{X}$ for which there are points with the property that the projection of the lines of $\mathcal{X}$ from any pair is onto a quantum set of lines $\mathcal{X}'$ with relatively large $d(\mathcal{X}')$. It should be possible to make direct sum stabilizer codes with good parameters from this geometrical construction. It would be of great interest if one could construct codes with parameters for which stabilizer codes could feasibly exist but none are known to exist. 
\end{rprob}

\section{Stabilizer codes for larger alphabets} \label{sectionquDit}

\subsection{The higher-dimensional Pauli group}
When a quantum system has $D$ levels we speak of a quDit.
In this section, we will consider quantum codes over such larger subsystems.
Consequently, these codes are subspaces of the Hilbert space $(\mathbb C^D)^{\ot n}$.

We will consider $({\mathbb C}^q)^{\otimes n}$, where $q=p^h$, is the power of a prime $p$. 
The restriction to prime powers allows us to use the structure of the finite field for their construction. 
In the case when $D$ is not a prime power, one can use the ring ${\mathbb Z}/{D\mathbb Z}$,
but then most of the constructions that we will consider here will not work.

We label the coordinates of ${\mathbb C}^q$ with elements of ${\mathbb F}_q$, where ${\mathbb F}_q$ denotes the finite field with $q$ elements. In this way, a basis for the space of endomorphisms of ${\mathbb C}^q$ can be indexed by the elements of ${\mathbb F}_q \times {\mathbb F}_q$.

For each $a \in {\mathbb F}_q$, we define a $q \times q$ matrix $X(a)$ to be matrix obtained from from the linear map which permutes the coordinates of ${\mathbb C}^q$ by adding $a$ to the index. 

In other words, with basis $\{ \ket x \ | \ x \in {\mathbb F}_q\}$ of ${\mathbb C}$,
$$
X(a)\ket x=\ket{x+a}.
$$
For example, if $q=3$ and the elements of ${\mathbb F}_q$ are $\{0,1,2\}$ then
$$
X(0)=\left( \begin{array}{ccc} 1 & 0 & 0 \\ 0 & 1 & 0 \\ 0 & 0 & 1 \end{array} \right),\ 
X(1)=\left( \begin{array}{ccc} 0 & 0 & 1 \\ 1 & 0 & 0 \\ 0 & 1 & 0 \end{array} \right) \ \mathrm{and} \ 
X(2)=\left( \begin{array}{ccc} 0 & 1 & 0 \\ 0 & 0 & 1 \\ 1 & 0 & 0 \end{array} \right)
.
$$

For each $b \in {\mathbb F}_q$, we define a $q \times q$ matrix $Z(b)$ to be the diagonal matrix whose $i$-th diagonal entry is $w^{\mathrm{tr}(ib)}$.  Here,
$w=e^{2\pi i/p}$ is a primitive $p$-th root of unity and 
$\mathrm{tr}$ is the trace map from ${\mathbb F}_q$ to its prime subfield ${\mathbb F}_p$,
$$
\tr(a) = \sum_{j=0}^{h-1} a^{p^j}\,.
$$
As in the previous case, if we take say $q=3$ then
$$
Z(0)=\left( \begin{array}{ccc} 1 & 0 & 0 \\ 0 & 1 & 0 \\ 0 & 0 & 1 \end{array} \right),\ 
Z(1)=\left( \begin{array}{ccc} 1 & 0 & 0 \\ 0 & \omega & 0 \\ 0 & 0 & \omega^2 \end{array} \right)\ \mathrm{and} \ 
Z(2)=\left( \begin{array}{ccc} 1 & 0 & 0 \\ 0 &  \omega^2 & 0 \\ 0 & 0 &  \omega \end{array} \right),
$$
where $\omega$ is a primitive complex third root of unity. Recall, that the rows and columns of the matrix are indexed by elements of ${\mathbb F}_q$, so $i \in {\mathbb F}_q$. Thus,
$$
Z(b)\ket x=\omega^{\mathrm{tr}(xb)}\ket x.
$$

We define the Pauli group for $q$ odd as
$$
\mathcal{P}_1=\{ \omega^cX(a)Z(b) \ | \ a,b \in {\mathbb F}_q,\ c \in {\mathbb Z}/p{\mathbb Z}\}
$$
and for $q$ even, that is when $p=2$, as
$$
\mathcal{P}_1=\{ i^f \omega^c X(a)Z(b) \ | \ a,b \in {\mathbb F}_q,\ c \in {\mathbb Z}/2{\mathbb Z},\ f \in {\mathbb Z}/2{\mathbb Z}\}.
$$

The reason that we accommodate this slightly larger group for $q$ even is due to Lemma~\ref{rthpower} below. One can check that this definition coincides with our definition of the Pauli group for $q=2$.

More generally, we define the group of Pauli operators on $({\mathbb C}^q)^{\otimes n}$ to be the $n$-fold direct product 
$\mathcal{P}_n = \mathcal{P}_1 \times \dots \times \mathcal{P}_1$ ($n$ times). Thus

$$
\mathcal{P}_n=\{\ \sigma_1\otimes \cdots \otimes \sigma_n \ | \ \sigma_j \in \mathcal{P}_1\}.
$$
The size of $\mathcal{P}_n$ is $pq^{2n}$ for $q$ odd and $4q^{2n}$ for $q$ even.

The weight of an element $c\sigma_1\otimes \cdots \otimes \sigma_n $, where $\sigma_i=X(a_i)Z(b_i)$, is the number of $i \in \{1,\ldots,n\}$ such that $\sigma_i \neq X(0)Z(0)$.

\begin{lemma} \label{XZswitch}
For all $a,b \in {\mathbb F}_q^n$,
$$
\omega^{\mathrm{tr}(a\cdot b)}X(a)Z(b)= Z(b)X(a).
$$
\end{lemma}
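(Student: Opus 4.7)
The plan is to reduce the $n$-qudit statement to a single-qudit computation and then assemble the tensor-product version. Note that the statement implicitly interprets $X(a)=X(a_1)\otimes\cdots\otimes X(a_n)$ and $Z(b)=Z(b_1)\otimes\cdots\otimes Z(b_n)$ for $a=(a_1,\ldots,a_n)$, $b=(b_1,\ldots,b_n)\in\mathbb{F}_q^n$, and that $a\cdot b=\sum_j a_j b_j$.

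First I would verify the lemma in the single-qudit case, for $a,b\in\mathbb{F}_q$. The key computation is to apply each side to a basis vector $\ket{x}$, $x\in\mathbb{F}_q$. On one hand,
$$
X(a)Z(b)\ket{x}=X(a)\,\omega^{\mathrm{tr}(xb)}\ket{x}=\omega^{\mathrm{tr}(xb)}\ket{x+a}.
$$
On the other hand,
$$
Z(b)X(a)\ket{x}=Z(b)\ket{x+a}=\omega^{\mathrm{tr}((x+a)b)}\ket{x+a}=\omega^{\mathrm{tr}(xb)+\mathrm{tr}(ab)}\ket{x+a},
$$
using $\mathbb{F}_p$-linearity of the trace map. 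Comparing the two expressions gives $Z(b)X(a)=\omega^{\mathrm{tr}(ab)}X(a)Z(b)$, which is the single-qudit version of the claim.

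Next I would extend to $n$ qudits. Using the mixed-product property of the tensor product,
$$
X(a)Z(b)=\bigotimes_{j=1}^{n}X(a_j)Z(b_j),\qquad Z(b)X(a)=\bigotimes_{j=1}^{n}Z(b_j)X(a_j),
$$
so applying the single-qudit identity in each factor yields
$$
Z(b)X(a)=\bigotimes_{j=1}^{n}\omega^{\mathrm{tr}(a_j b_j)}X(a_j)Z(b_j)=\omega^{\sum_j \mathrm{tr}(a_j b_j)}X(a)Z(b).
$$
Since $\mathrm{tr}$ is $\mathbb{F}_p$-linear, $\sum_{j=1}^{n}\mathrm{tr}(a_j b_j)=\mathrm{tr}(a\cdot b)$, giving $Z(b)X(a)=\omega^{\mathrm{tr}(a\cdot b)}X(a)Z(b)$, which is the stated identity.

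There is no real obstacle here; the proof is essentially the direct computation above. The only minor care needed is to remember that $\omega$ is a $p$-th root of unity (not a $q$-th root), so the exponent must be an element of $\mathbb{F}_p$, which is exactly why the trace map appears: it maps $\mathbb{F}_q$ to $\mathbb{F}_p$ and is additive, making both the definition of $Z(b)$ well-behaved on basis vectors and the exponent summation collapse cleanly into $\mathrm{tr}(a\cdot b)$.
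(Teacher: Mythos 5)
Your proof is correct and follows essentially the same route as the paper: both apply $X(a)Z(b)$ and $Z(b)X(a)$ to a computational basis vector $\ket{x}$ and compare phases using the additivity of the trace map. The only cosmetic difference is that you first treat a single qudit and then tensor up, whereas the paper performs the computation directly for $x\in\mathbb{F}_q^n$ in one step.
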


\begin{proof}
We have
$$
X(a)Z(b)\ket x=\omega^{\mathrm{tr}(b \cdot x)}X(a) \ket x=\omega^{\mathrm{tr}(b \cdot x)} \ket{x+a}.
$$
Meanwhile,
$$
Z(b)X(a)\ket x=Z(b) \ket{x+a}=\omega^{\mathrm{tr}(b \cdot (x+a))}\ket {x+a}.
$$
\end{proof}

The following lemma implies that non-identity elements of the Pauli group have order $p$, for $q$ odd. Note that for $q$ even this is not the case; there are elements of order four. However, we extend the Pauli group as above (defining $\sigma_y=i\sigma_x\sigma_z$) and in this way we introduce more elements of order two. We do this so that we have more options for $M_i$ in our set of pairwise commuting operators which will generate the abelian subgroup $S$. \footnote{This was overlooked in the seminal paper of Ketkar et. al. \cite{KKKS2006} on stabilizer codes over finite fields. They do not accommodate the larger Pauli group when $q$ is even, or include any version of Lemma~\ref{rthpower}. However, this larger group is necessary for all the examples of qubit stabiliser codes we have included here.}

\begin{lemma} \label{rthpower}
For all $a,b \in {\mathbb F}_q^n$ and $r \in {\mathbb N}$,
$$
(X(a)Z(b))^r= \omega^{{r \choose 2}\mathrm{tr}(a\cdot b)}X(a)^rZ(b)^r.
$$
\end{lemma}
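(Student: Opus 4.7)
The plan is to proceed by induction on $r$, with Lemma~\ref{XZswitch} as the workhorse for pushing $Z$-factors past $X$-factors. The base cases $r=0$ and $r=1$ are immediate, since $\binom{0}{2}=\binom{1}{2}=0$ and both sides reduce to $\one$ and $X(a)Z(b)$ respectively.

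For the inductive step, the essential ingredient is a ``bulk'' commutation relation
$$
Z(b)^{r}X(a) = \omega^{r\,\mathrm{tr}(a\cdot b)}\,X(a)\,Z(b)^{r},
$$
which I would establish by a short side-induction: Lemma~\ref{XZswitch} gives the case $r=1$ (rewritten as $Z(b)X(a)=\omega^{\mathrm{tr}(a\cdot b)}X(a)Z(b)$), and iterating $r$ times accumulates the factor $\omega^{r\,\mathrm{tr}(a\cdot b)}$.

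Assuming the statement for $r$, I would write
$$
(X(a)Z(b))^{r+1} = (X(a)Z(b))^{r}\cdot X(a)Z(b) = \omega^{\binom{r}{2}\mathrm{tr}(a\cdot b)}\,X(a)^{r}\bigl(Z(b)^{r}X(a)\bigr)Z(b),
$$
and then apply the bulk commutation relation to bring this to
$$
\omega^{\bigl(\binom{r}{2}+r\bigr)\mathrm{tr}(a\cdot b)}\,X(a)^{r+1}Z(b)^{r+1}.
$$
The identity $\binom{r}{2}+r=\binom{r+1}{2}$ closes the induction.

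The only mildly delicate point is keeping the phase bookkeeping straight: one must be careful that the phase $\omega$ commutes with all $X(a)$ and $Z(b)$ (which it does, being scalar), so that scalars can be pulled freely through the products, and that the generalisation of Lemma~\ref{XZswitch} from single coordinates to the tensor structure over $\mathbb{F}_q^n$ is automatic because $X(a)=\bigotimes_i X(a_i)$, $Z(b)=\bigotimes_i Z(b_i)$ and the phase $\omega^{\mathrm{tr}(a\cdot b)}=\prod_i \omega^{\mathrm{tr}(a_ib_i)}$ factorises across tensor components. I do not expect any genuine obstacle; the proof is essentially a one-line induction once the commutation relation is stated.
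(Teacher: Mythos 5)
Your proof is correct and follows essentially the same route as the paper: induction on $r$, peeling off one factor of $X(a)Z(b)$ and using Lemma~\ref{XZswitch} to push $Z(b)^{r-1}$ (resp.\ $Z(b)^r$) past $X(a)$, picking up the phase $\omega^{(r-1)\mathrm{tr}(a\cdot b)}$ and closing with $\binom{r-1}{2}+(r-1)=\binom{r}{2}$. The only difference is that you isolate the bulk commutation relation $Z(b)^{r}X(a)=\omega^{r\,\mathrm{tr}(a\cdot b)}X(a)Z(b)^{r}$ as an explicit side-induction, which the paper uses implicitly in a single step.
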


\begin{proof}
By induction on $r$, we have
$$
(X(a)Z(b))^r=(X(a)Z(b))^{r-1}X(a)Z(b)
$$
$$
= \omega^{{r-1 \choose 2}\mathrm{tr}(a\cdot b)}X(a)^{r-1}Z(b)^{r-1}X(a)Z(b).
$$
By Lemma~\ref{XZswitch}, this is equal to
$$
\omega^{{r-1 \choose 2}\mathrm{tr}(a\cdot b)}X(a)^{r-1}\omega^{(r-1)\mathrm{tr}(a\cdot b)}X(a)Z(b)^{r-1}Z(b)
= \omega^{{r \choose 2}\mathrm{tr}(a\cdot b)}X(a)^rZ(b)^r.
$$
\end{proof}

As in the case of qubit codes, we will again be looking to construct stabilizer codes and for this reason it will be of interest to know when elements $M,N \in \mathcal{P}_n$ commute or not. For this reason the following lemma is fundamental.

\begin{lemma} \label{MNcommuteDit}
For all $a,b,a',b' \in {\mathbb F}_q^n$,
$$
X(a)Z(b)X(a')Z(b')= \omega^{\mathrm{tr}(a'\cdot b-b\cdot a')}X(a')Z(b')X(a)Z(b).
$$
\end{lemma}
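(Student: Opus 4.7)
The plan is to bring both products to a common canonical form $X(a+a')Z(b+b')$ (up to a scalar) by repeated use of Lemma~\ref{XZswitch}, and then read off the commutator as the ratio of the two scalar phases.

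First I would rearrange Lemma~\ref{XZswitch} into the convenient swap rule
$$
Z(b)X(a') = \omega^{\mathrm{tr}(a'\cdot b)}\, X(a')Z(b),
$$
obtained by multiplying both sides of Lemma~\ref{XZswitch} on the left by $Z(b)$ and on the right by $Z(b)^{-1}$, using that $Z(b)$ is unitary. Applying this swap to the middle of the left-hand side I get
$$
X(a)Z(b)X(a')Z(b') \;=\; \omega^{\mathrm{tr}(a'\cdot b)}\, X(a)X(a')\,Z(b)Z(b').
$$
Because $a\mapsto X(a)$ and $b\mapsto Z(b)$ are additive group homomorphisms from $({\mathbb F}_q^n,+)$ into $\mathcal{P}_n$ (immediate from $X(a)\ket{x}=\ket{x+a}$ and from the diagonal definition $Z(b)\ket{x}=\omega^{\mathrm{tr}(b\cdot x)}\ket{x}$), the $X$'s and $Z$'s collapse to give
$$
X(a)Z(b)X(a')Z(b') \;=\; \omega^{\mathrm{tr}(a'\cdot b)}\, X(a+a')Z(b+b').
$$

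Next I would perform the mirror-image calculation on the right-hand side: swapping $Z(b')$ past $X(a)$ using the same rule yields
$$
X(a')Z(b')X(a)Z(b) \;=\; \omega^{\mathrm{tr}(a\cdot b')}\, X(a+a')Z(b+b').
$$
Dividing the two normal forms eliminates the common $X(a+a')Z(b+b')$ and produces the claimed commutation relation, with exponent $\mathrm{tr}(a'\cdot b - a\cdot b')$, which is the symplectic bilinear form one expects in analogy with Lemma~\ref{taucommutes}.

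There is no real obstacle here beyond bookkeeping; the only care needed is to keep track of which pair of arguments contributes to the phase at each swap. The asymmetry between the pair $(a',b)$ appearing in the first normalization and the pair $(a,b')$ appearing in the second is precisely the source of the non-trivial phase, and it is also why, restricted to the case $a=a'$, $b=b'$, the exponent correctly vanishes. One additional sanity check worth mentioning is consistency with Lemma~\ref{rthpower}: setting $(a',b')=(a,b)$ reproduces the expected abelian behaviour, while setting one of $a,a',b,b'$ to zero recovers Lemma~\ref{XZswitch}.
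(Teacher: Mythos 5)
Your proof is correct and takes essentially the same route as the paper, whose one-line argument (the $X$'s commute among themselves, the $Z$'s commute among themselves, then apply Lemma~\ref{XZswitch}) is exactly what your normal-form computation spells out in detail. Note only that your exponent $\mathrm{tr}(a'\cdot b - a\cdot b')$ is the evidently intended one; the statement's $\mathrm{tr}(a'\cdot b-b\cdot a')$ is a typo, since that expression is identically zero.
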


\begin{proof}
$X(a)$ and $X(a')$ commute, likewise $Z(b)$ and $Z(b')$, so the lemma follows from Lemma~\ref{XZswitch}.
\end{proof}

\subsection{Error detection and correction}

As in the case of qubit codes it suffices to consider errors from the group $\mathcal{P}_n$ of Pauli-errors which are unitary operators of the form
$$
E=\sigma_1 \otimes \cdots \otimes \sigma_n
$$
where $\sigma_i =X(a)Z(b)$, for some $a,b \in {\mathbb F}_q$.

Let $Q$ be a quantum error correcting code of ${(\mathbb C^q)}^{\otimes n}$, i.e. a subspace of ${(\mathbb C^q)}^{\otimes n}$.

Then again, as in the case of qubit codes, $Q$ detects an error $E \in \mathcal{P}$ if for all $\ket \phi, \ket \psi \in Q$ with $\bra{\phi}\ket{\psi}=0$, we have that
$$
\bra \phi  E \ket \psi =0\,,
$$
and
$$
\bra \phi  E \ket \phi =c_E\,,
$$
for some constant $c_E$ which depends only on $E$.

A quantum code $Q$ has minimum distance $d$ if one can detect Pauli-errors with up to $d-1$ non-identity matrices and correct Pauli-errors with up to $\lfloor\frac{d-1}{2}\rfloor$ non-identity matrices.

We say that a quantum code of $({\mathbb C}^{q})^{\otimes n}$ of dimension $K$ and minimum distance $d$ is a $(\!(n,K,d)\!)_q$ code. 
If the code has dimension $K=q^k$ then we say that the code is a $[\![n,K,d]\!]_q$ code.
Note that some authors reserve the latter notation $[\![n,K,d]\!]_q$ for stabilizer codes only.

\subsection{Stabilizer codes}
 
A {\em stabilizer code} is the intersection of the eigenspaces with eigenvalue one of the elements of an abelian subgroup $S$ of $\mathcal{P}_n$. As before, we denote the code by $Q(S)$. We insist that $\lambda \one \not\in S$ whenever $\lambda \neq 1$, since otherwise $Q(S)$ is trivial.

As in the qubit case, a stabilizer code $Q(S)$ with stabilizer $S$ can detect all Pauli-errors that are scalar multiples of elements in $S$ or that do not commute with some element of $S$. We denote by $\mathrm{Centraliser}(S)$, the elements of ${\mathcal P}_n$ that commute with all elements of $S$. A non-detectable Pauli-error must be in $\mathrm{Centraliser}(S)$. 

Commuting elements are characterised as follows. 

By Lemma~\ref{MNcommuteDit}, two elements $M=\omega^c X(a)Z(b)$ and $N = \omega^{c'}X(a')Z(b')$ satisfy
$$
MN= \omega^{\mathrm{tr}(b \cdot a'-b' \cdot a)}MN.
$$
Therefore, $M$ and $N$ commute if and only if the trace symplectic form 
\begin{equation} \label{sform}
 \mathrm{tr}(b \cdot a'-b'\cdot a)
\end{equation}
is zero. 

As in the case for qubit codes, we introduce the map $\tau$ which maps elements of $\mathcal{P}_n$ to ${\mathbb F}_q^{2n}$ by
$$
\tau(X(a)Z(b))=(a | b).
$$
For elements $u,w \in {\mathbb F}_q^{2n}$, the trace symplectic form is
\begin{equation} \label{sform2}
(u,w)_a=\sum_{j=1}^{n}  \mathrm{tr} (u_j w_{j+n}-w_ju_{j+n}).
\end{equation}
Then with $u=(a|b)$ and $w=(a'|b')$, this is the trace symplectic form (\ref{sform}).

\subsection{Stabiliser codes as additive codes over ${\mathbb F}_q$}

Let $\tau$ be the map that maps $c X(a)Z(b)$ to $(a|b) \in {\mathbb F}_q^{2n}$.

The group $S$ is mapped to an additive code $C=\tau(S)$. The symplectic weight of  $(a|b) \in {\mathbb F}_q^{2n}$ is the number of $i \in \{1,\ldots,n\}$ such that $(a_i,b_i) \neq (0,0)$. Thus, an element $c X(a)Z(b)$ of weight $w$ is mapped to a vector of symplectic weight $w$.

The elements of $\mathrm{Centraliser}(S)$ are mapped to the dual code of $C$, namely
$$
C^{\perp_a} = \{ w\in {\mathbb F}_q^{2n} \ | \ (u,w)_a=0,\ \mathrm{for} \ \mathrm{all} \ u\in C\}\,.
$$
Here the dual $\perp_a$  is taken with respect to the trace symplectic form (\ref{sform2}).

We have the following important theorem.

\begin{theorem}  \label{quDitstabthm}
An $(\!(n, K, d)\!)_q$ stabilizer code exists if and only if there exists an additive code $C \leqslant {\mathbb F}_q^{2n}$ of size $|C| = q^n/K$ such that $C \leqslant C^{\perp_a}$. If $K \neq 1$ then $d$ is the minimum symplectic weight of an element of $C^{\perp_a} \setminus C$, otherwise $d$ is the minimum symplectic weight of an element of $C^{\perp_a}=C$.
\end{theorem}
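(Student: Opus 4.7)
The plan is to mirror the qubit proof of Theorem~\ref{stabtheorem}, transporting the group-theoretic structure of $S \subset \mathcal{P}_n$ across the map $\tau$ to the additive code $C = \tau(S) \subset {\mathbb F}_q^{2n}$, and then invoking the Knill--Laflamme criterion (Theorem~\ref{KLtheorem}) to obtain the minimum distance statement. First, I would verify the equivalence of the abelian/dual-containment conditions: by Lemma~\ref{MNcommuteDit}, two elements $M, N \in \mathcal{P}_n$ commute if and only if $(\tau(M), \tau(N))_a = 0$, so $S$ is abelian if and only if $C \leqslant C^{\perp_a}$. Because $\ker \tau$ consists exactly of the scalar phases, and the hypothesis $\lambda \one \notin S$ for $\lambda \neq 1$ forces $\tau$ to be injective on $S$, we get $|C| = |S|$.

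Next, I would establish the size $K = q^n/|S|$ by adapting the argument of Theorem~\ref{qubitdim}. Define $P = \frac{1}{|S|} \sum_{M \in S} M$ and verify $P^2 = P$, $P^\dagger = P$, and $\mathrm{im}(P) = Q(S)$ exactly as in Lemma~\ref{sumthemall}; then $\dim Q(S) = \mathrm{tr}(P)$. The key computation is that $\mathrm{tr}(X(a)Z(b)) = 0$ whenever $(a|b) \neq (0|0)$: for $a \neq 0$ the matrix is a permutation with no fixed points, while for $a = 0$ and $b \neq 0$ the diagonal sum $\sum_{x \in {\mathbb F}_q} \omega^{\mathrm{tr}(xb)}$ vanishes by character orthogonality of $({\mathbb F}_q, +)$. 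Hence $\mathrm{tr}(P) = q^n/|S|$, which equals $K$ precisely when $|C| = |S| = q^n/K$.

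For the minimum distance, I would prove the qudit analog of Lemma~\ref{undetectstab}: a Pauli error $E$ is undetectable for $Q(S)$ if and only if $E \in \mathrm{Centraliser}(S) \setminus S$. In the forward direction, if $E \notin \mathrm{Centraliser}(S)$, pick $M \in S$ with $ME = \omega^c EM$ for some $c \not\equiv 0 \pmod p$; then for $\ket\psi, \ket\phi \in Q(S)$,
$$
\bra\psi E \ket\phi = \bra{M\psi} E \ket{M\phi} = \bra\psi M^\dagger E M \ket\phi = \omega^{-c} \bra\psi E \ket\phi,
$$
forcing $\bra\psi E \ket\phi = 0$ and thus detectability via Theorem~\ref{KLtheorem}. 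The converse, that $E \in \mathrm{Centraliser}(S) \setminus S$ is undetectable, mimics the qubit proof: $E$ preserves $Q(S)$ and the Knill--Laflamme constant-expectation requirement forces $E$ to act as a scalar on $Q(S)$, contradicting $E \notin S$. Once this is in place, Lemma~\ref{MNcommuteDit} again gives $\tau(\mathrm{Centraliser}(S)) = C^{\perp_a}$, and since the Pauli weight of $M$ equals the symplectic weight of $\tau(M)$, the minimum distance is as claimed; the $K = 1$ case is handled separately by replacing $C^{\perp_a} \setminus C$ with the nonzero elements of $C = C^{\perp_a}$, exactly as in Theorem~\ref{stabtheorem}.

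The main obstacle will be the converse direction in the $q$ even case: given an additive $C \leqslant C^{\perp_a}$, one must lift a basis of $C$ to pairwise commuting elements of $\mathcal{P}_n$ whose generated subgroup avoids every nontrivial scalar. For $q$ odd, Lemma~\ref{rthpower} gives $(X(a)Z(b))^p = \one$ because $\binom{p}{2} \equiv 0 \pmod p$, so the obvious lift works and commutativity passes from $(\cdot, \cdot)_a = 0$ directly to the group. For $q$ even, $(X(a)Z(b))^2$ can equal $-\one$, and one must multiply by an appropriate factor of $i$ to obtain an involution. This is exactly why the paper enlarges $\mathcal{P}_n$ in the $q$ even case (and the reason for the footnote beside Lemma~\ref{rthpower}): the enlarged Pauli group ensures that every $(a|b) \in {\mathbb F}_q^{2n}$ lifts to an order-$2$ element, so a basis of $C$ generates the required abelian subgroup of $\mathcal{P}_n$ free of nontrivial scalars.
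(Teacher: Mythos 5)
Your proposal is correct and follows essentially the same route as the paper: the dimension via the trace of the projector $P=\frac{1}{|S|}\sum_{M\in S}M$, the distance via the characterisation of undetectable errors as $\mathrm{Centraliser}(S)\setminus S$ transported through $\tau$, and the converse by lifting a basis of $C$ back to commuting Pauli operators. The only difference is one of thoroughness, in the paper's favourably loose direction: where the paper writes ``as in the qubit case'' and ``the backwards implication is similar,'' you supply the trace computation for $X(a)Z(b)$, the quDit analogue of Lemma~\ref{undetectstab}, and the order-$p$ (resp.\ order-$2$ via the factor $i$) lifting argument that justifies $\lambda\one\notin S$ in the converse.
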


\begin{proof}

Let $S$ be an abelian subgroup of $\mathcal P_n$ not containing non-trivial multiples of the identity.
Let $Q(S)$ be the corresponding $(\!(n, K, d)\!)_q$ stabilizer code and let
$$
P=\frac{1}{|S|} \sum_{M \in S} M.
$$
Then, as in Lemma~\ref{sumthemall}, $P$ is the orthogonal projection onto $Q(S)$. For any element $M=X(a)Z(b)$ we have that $M^\dag M=\one$, so $M \in S$ if and only if $M^\dag \in S$. Hence, $P^\dag=P$.

Thus, since $P$ is Hermitian and $P^2=P$, the dimension of its image $Q(S)$ is equal to the trace of $P$.
Since $\tr(M)=0$ for all $M \in \mathcal{P}_n$, $M \neq \one$ and $\mathrm{tr}(\one)=q^n$,
one has $\tr(P) = q^n/|S|$ and so $|S|=q^n/K$, since $\dim Q(S)=K$.

 We note that $C=\tau(S)$ is an additive code since $S$ is an abelian subgroup and has size $|S|=q^n/K$. 
Since $\tau(\mathrm{Centraliser}(S))=C^{\perp_a}$, we have $C \leqslant C^{\perp_a}$. 

For $K \neq 1$, the minimum symplectic weight of any element of $C^{\perp_a} \setminus C$ is $d$, 
 since the minimum distance of $Q(S)$ is the minimum weight of the Pauli operators in 
 $\mathrm{Centraliser}(S) \setminus S$. As in the qubit case, if $K=1$ then we define the minimum distance of $Q(S)$ to be the minimum weight of the Pauli operators in 
 $\mathrm{Centraliser}(S) =S$, which is equal to the minimum symplectic weight of any element of $C^{\perp_a} =C$

The backwards implication is similar. Let $S=\tau^{-1}(C)$ and define the stabilizer code to be $Q(S)$. Then the dimension follows as above. If $K \neq 1$ then the minimum distance of $Q(S)$ corresponds as above to the minimum symplectic weight of an element of $C^{\perp_a} \setminus C$, since $\mathrm{Centraliser}(S)$ is equal to $\tau^{-1}(C^{\perp_a})$
up to a scalar factor. If $K=1$ then the minimum distance of $Q(S)$ corresponds to the minimum non-zero symplectic weight of the elements of $C^{\perp_a} = C$.
\end{proof}

\subsection{Constructions} \label{quDitconstructions}

The following theorem is known as the Calderbank-Shor-Steane construction. 
The $\perp$ refers to the standard inner product on ${\mathbb F}_q^n$ given by
$$
u\cdot v=u_1v_1+\cdots u_nv_n.
$$

\begin{theorem} \label{CSS}
Suppose there are linear codes $C_1$ and $C_2$ with parameters $[n,k_1,d_1]_q$ and $[n,k_2,d_2]_q$, 
with the property that $C_1^{\perp} \leqslant C_2$. 
Then there is a $[\![n,k_1+k_2-n,d]\!]_q$ code, where $d$ is the minimum weight of the elements in 
$(C_1 \setminus C_2^{\perp})\cup (C_2 \setminus C_1^{\perp})$ if $k_1+k_2\neq n$ and $d$ is the minimum non-zero weight of the elements in 
$C_1 \cup C_2$ if $k_1+k_2=n$.
\end{theorem}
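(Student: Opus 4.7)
The plan is to apply Theorem~\ref{quDitstabthm} to the additive code
$$
C=\{(a\mid b)\in {\mathbb F}_q^{2n} \,:\, a\in C_1^\perp,\ b\in C_2^\perp\}\,.
$$
This $C$ is ${\mathbb F}_q$-linear (hence additive) of dimension $2n-k_1-k_2$, so $|C|=q^{2n-k_1-k_2}$. To verify $C\leq C^{\perp_a}$, take $(a\mid b),(a'\mid b')\in C$; the trace-symplectic form evaluates to $\mathrm{tr}(a\cdot b'-a'\cdot b)$. The hypothesis $C_1^\perp\leq C_2$ (equivalently $C_2^\perp\leq C_1$) places $a,a'\in C_2$ and $b,b'\in C_1$, so $a\cdot b'$ and $a'\cdot b$ already vanish in ${\mathbb F}_q$. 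Hence Theorem~\ref{quDitstabthm} delivers a stabilizer code of dimension $K=q^n/|C|=q^{k_1+k_2-n}$, so $k=k_1+k_2-n$ as required.

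The next task is to compute $C^{\perp_a}$, which I claim equals $C_2\times C_1=\{(u\mid v):u\in C_2,\,v\in C_1\}$. The containment $C_2\times C_1\subseteq C^{\perp_a}$ is immediate from the same calculation as above. For the reverse, I would use the fact that for any ${\mathbb F}_q$-linear $D\subseteq {\mathbb F}_q^n$, the trace-dual $\{u:\mathrm{tr}(u\cdot v)=0\ \forall v\in D\}$ coincides with the Euclidean dual $D^\perp$: if $u\cdot v\neq 0$ for some $v\in D$, then ${\mathbb F}_q$-linearity of $D$ lets us replace $v$ by $\lambda v\in D$, and surjectivity of $\mathrm{tr}:{\mathbb F}_q\to {\mathbb F}_p$ lets us choose $\lambda$ with $\mathrm{tr}(\lambda\,u\cdot v)\neq 0$. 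Applying this identification in each of the two halves of ${\mathbb F}_q^{2n}$ gives $C^{\perp_a}=(C_2^\perp)^\perp\times (C_1^\perp)^\perp=C_2\times C_1$.

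The minimum distance is now a bookkeeping argument. Any $(u\mid v)\in C^{\perp_a}\setminus C$ satisfies $u\in C_2$, $v\in C_1$, and at least one of $u\notin C_1^\perp$ or $v\notin C_2^\perp$. If $u\notin C_1^\perp$ then $(u\mid 0)\in C^{\perp_a}\setminus C$ has symplectic weight $\mathrm{wt}(u)\leq |\mathrm{supp}(u)\cup\mathrm{supp}(v)|=\mathrm{wt}_{\mathrm{symp}}(u\mid v)$; the case $v\notin C_2^\perp$ is symmetric. So the minimum is attained by some $(u\mid 0)$ with $u\in C_2\setminus C_1^\perp$ or $(0\mid v)$ with $v\in C_1\setminus C_2^\perp$, yielding the stated $d$. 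In the boundary case $k_1+k_2=n$, the inclusion $C_1^\perp\leq C_2$ is an equality of dimensions, forcing $C_1^\perp=C_2$ and $C_2^\perp=C_1$, hence $C=C^{\perp_a}$; the same weight comparison applied to nonzero elements of $C$ gives $d=\min(d_1,d_2)$, which is exactly the minimum non-zero weight of elements in $C_1\cup C_2$.

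The only step I regard as delicate is the trace-vs-Euclidean dual identification needed for $C^{\perp_a}=C_2\times C_1$; this is precisely where ${\mathbb F}_q$-linearity (not merely additivity) of $C_1$ and $C_2$ is used. The remaining arguments are linear-algebraic accounting that follow the qubit case almost verbatim.
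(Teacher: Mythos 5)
Your proposal is correct and follows essentially the same route as the paper: the same choice $C=C_1^{\perp}\times C_2^{\perp}$, the same verification that $C\leqslant C^{\perp_a}$, and the same identification $C^{\perp_a}=C_2\times C_1$ before invoking Theorem~\ref{quDitstabthm}. The only (harmless) deviation is that you establish $C^{\perp_a}=C_2\times C_1$ by showing the trace-dual of an ${\mathbb F}_q$-linear code coincides with its Euclidean dual, whereas the paper gets the reverse containment by a dimension count; you also spell out the weight bookkeeping that the paper leaves implicit, which is a welcome addition.
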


\begin{proof}
Let $C=C_1^{\perp} \times C_2^{\perp} \leqslant {\mathbb F}_q^{2n}$. Then $C$ is a linear code over ${\mathbb F}_q$ and for all $v=(v_1|v_2)$ and $w=(w_1|w_2)$ in $C$,
$$
(v,w)_a=\mathrm{tr}(v_1\cdot w_2-v_2\cdot w_1)=\mathrm{tr}(0-0)=0\,.
$$
In the above the first term vanishes since $v_1 \in C_1^{\perp} \leqslant C_2$ and $w_2\in C_2^{\perp}$. 
Likewise, the second term vanishes since $v_2 \in C_2^{\perp}$ and $w_1 \in C_1^{\perp} \leqslant C_2$.

Hence, $C \leqslant C^{{\perp}_a}$ and Theorem~\ref{quDitstabthm} applies.

To determine the minimum distance first note that $C^{{\perp}_a} \geqslant C_2 \times C_1$, since for all 
$v=(v_1|v_2) \in C_1^{\perp} \times C_2^{\perp}$ and 
$w=(w_2|w_1) \in C_2 \times C_1$,
$$
(v,w)_a=\mathrm{tr}(v_1\cdot w_1-v_2\cdot w_2)=\mathrm{tr}(0-0)=0.
$$
The dimension of $C_2 \times C_1$ is $k_1+k_2$ and the dimension of  $C^{{\perp}_a}$ is $2n-(n-k_1)-(n-k_2)=k_1+k_2$, so
$$
C^{{\perp}_a} =C_2 \times C_1.
$$
Thus, by Theorem~\ref{quDitstabthm}, if  $k_1+k_2\neq n$ then the minimum distance of the stabilizer code $\tau^{-1}(C)$ is the minimum weight of the elements in $(C_1 \setminus C_2^{\perp})\cup (C_2 \setminus C_1^{\perp})$. If $k_1+k_2=n$ then the minimum distance of the stabilizer code $\tau^{-1}(C)$ is the minimum non-zero weight of the elements in $C_2 \times C_1=C_1^{\perp} \times C_2^{\perp}$, which is equal to the minimum non-zero weight of the elements in $C_1 \cup C_2=C_1^{\perp}\cup C_2^{\perp}$.
\end{proof}

\begin{example}
The ternary extended Golay code $C_1$ is a $[12,6,6]_3$ code for which $C_1=C_1^{\perp}$. Applying Theorem~\ref{CSS}, this implies there is a $[\![12,0,6]\!]_3$ quantum stabilizer code.

The code $C_1$ has a generator matrix
$$
\mathrm{G}=\left(\begin{array}{cccccccccccc}
1 & 0 & 2 & 1 & 2 & 2 & 0 & 0 & 0 & 0 & 0 & 1\\
0 & 1 & 0 & 2 & 1 & 2 & 2 & 0 & 0 & 0 & 0 & 1\\
0 & 0 & 1 & 0 & 2 & 1 & 2 & 2 & 0 & 0 & 0 & 1\\
0 & 0 & 0 & 1 & 0 & 2 & 1 & 2 & 2 & 0 & 0 & 1\\
0 & 0 & 0 & 0 & 1 & 0 & 2 & 1 & 2 & 2 & 0 & 1\\
0 & 0 & 0 & 0 & 0 & 1 & 0 & 2 & 1 & 2 & 2  & 1\\
\end{array} \right)
$$
so $C=C_1 \times C_1$ has generator matrix, a $12 \times 24$ matrix
$$
\left(\begin{array}{c|c} 0 & \mathrm{G} \\ \hline \mathrm{G} & 0 \end{array} \right).
$$
The 12 Pauli operators generating the stabilizer group $S$ are
\begin{small}
$$
\left(\begin{array}{cccccccccccc}
Z(1) & 1 & Z(2) & Z(1) & Z(2) & Z(2) & 1 & 1 & 1 & 1 & 1 & Z(1)\\
1 & Z(1) & 1 & Z(2) & Z(1) & Z(2) & Z(2) & 1 & 1 & 1 & 1  & Z(1)\\
1 & 1 & Z(1) & 1 & Z(2) & Z(1) & Z(2) & Z(2) & 1 & 1 & 1  & Z(1)\\
1 & 1 & 1 & Z(1) & 1 & Z(2) & Z(1) & Z(2) & Z(2) & 1 & 1  & Z(1)\\
1 & 1 & 1 & 1 & Z(1) & 1 & Z(2) & Z(1) & Z(2) & Z(2) & 1  & Z(1)\\
1 & 1 & 1 & 1 & 1 & Z(1) & 1 & Z(2) & Z(1) & Z(2) & Z(2) & Z(1)\\
X(1) & 1 & X(2) & X(1) & X(2) & X(2) & 1 & 1 & 1 & 1 & 1 & X(1)\\
1 & X(1) & 1 & X(2) & X(1) & X(2) & X(2) & 1 & 1 & 1 & 1 & X(1)\\
1 & 1 & X(1) & 1 & X(2) & X(1) & X(2) & X(2) & 1 & 1 & 1  & X(1)\\
1 & 1 & 1 & X(1) & 1 & X(2) & X(1) & X(2) & X(2) & 1 & 1  & X(1)\\
1 & 1 & 1 & 1 & X(1) & 1 & X(2) & X(1) & X(2) & X(2) & 1  & X(1)\\
1 & 1 & 1 & 1 & 1 & X(1) & 1 & X(2) & X(1) & X(2) & X(2)  & X(1)\\
\end{array} \right).
$$
\end{small}
\end{example}

The next construction is called the ${\mathbb F}_{q^2}$ trick (for qubit codes this is the ${\mathbb F}_4$ trick). It's not really a trick at all but it is a quick and effective way to construct quantum codes. These codes are a very special type of stabilizer code in which we impose more structure on the additive code $C$.

For any two vectors $u,v$ in ${\mathbb F}_{q^2}^n$, we define the Hermitian form
\begin{equation} \label{aform}
u \circ v= u^q \cdot v
\end{equation}
and for a ${\mathbb F}_{q^2}$-linear code $E$ we define
$$
E^{\perp_h}=\{ u \in {\mathbb F}_{q^2}^n \ | \ u \circ v=0,\ \mathrm{for} \ \mathrm{all} \ v \in E\}.
$$

\begin{theorem} \label{F4trick}
If there exists a linear $[n,n-k,d]_{q^2}$ code $D$ for which $D^{{\perp}_h} \leqslant D$ then there is a $[\![n,n-2k, \geqslant d]\!]_q$ stabilizer code.
\end{theorem}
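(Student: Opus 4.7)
The plan is to invoke Theorem~\ref{quDitstabthm}: it suffices to exhibit an additive code $C \leqslant {\mathbb F}_q^{2n}$ of size $q^{2k}$ with $C \leqslant C^{\perp_a}$ whose associated minimum symplectic weight on $C^{\perp_a}\setminus C$ is at least $d$. I would build $C$ from $D$ via a standard $\mathbb{F}_q$-linear identification $\phi: {\mathbb F}_{q^2}^n \to {\mathbb F}_q^{2n}$. Fix a basis $\{1, \beta\}$ of ${\mathbb F}_{q^2}$ over ${\mathbb F}_q$ (so $\beta^q \neq \beta$), and for $u = a + \beta b$ with $a,b \in {\mathbb F}_q^n$, set $\phi(u) = (a\,|\,b)$. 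Then define $C := \phi(D^{\perp_h})$, which is an additive subgroup of ${\mathbb F}_q^{2n}$ of size $q^{2k}$, since $\dim_{{\mathbb F}_{q^2}} D^{\perp_h} = k$.

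The key step is showing $\phi(D) \subseteq C^{\perp_a}$, from which $C \subseteq C^{\perp_a}$ follows by the hypothesis $D^{\perp_h} \leqslant D$. For $u \in D^{\perp_h}$ and $v \in D$, the defining property of $D^{\perp_h}$ gives $u^q \cdot v = 0$, and applying the Frobenius yields $v^q \cdot u = (u^q \cdot v)^q = 0$. Writing $u = a+\beta b$, $v = a'+\beta b'$, a short expansion shows
$$
u^q \cdot v - v^q \cdot u = (\beta - \beta^q)(a \cdot b' - b \cdot a').
$$
Since $\beta - \beta^q \neq 0$, this forces $a \cdot b' - b \cdot a' = 0$ in ${\mathbb F}_q$, and a fortiori $\mathrm{tr}(a \cdot b' - b \cdot a') = 0$, which is exactly $(\phi(u), \phi(v))_a = 0$. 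A dimension count ($\dim_{{\mathbb F}_q}\phi(D) = 2(n-k) = \dim_{{\mathbb F}_q} C^{\perp_a}$) then upgrades this inclusion to the equality $C^{\perp_a} = \phi(D)$.

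Finally, I would verify the distance bound. By construction, $u_i = a_i + \beta b_i$ is zero precisely when $(a_i, b_i) = (0,0)$, so the Hamming weight of $u \in {\mathbb F}_{q^2}^n$ coincides with the symplectic weight of $\phi(u) \in {\mathbb F}_q^{2n}$. Hence the minimum symplectic weight of any nonzero element of $C^{\perp_a} = \phi(D)$ equals the minimum nonzero Hamming weight in $D$, which is $d$. In particular, the minimum symplectic weight of $C^{\perp_a} \setminus C$ is at least $d$, and Theorem~\ref{quDitstabthm} then produces a stabilizer code with parameters $[\![n, n-2k, \geqslant d]\!]_q$.

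The only nontrivial computation is the identity $u^q\cdot v - v^q \cdot u = (\beta - \beta^q)(a \cdot b' - b \cdot a')$; everything else is bookkeeping with dimensions and the isomorphism $\phi$. I expect no real obstacles, since this is essentially the classical reduction from Hermitian self-orthogonality over ${\mathbb F}_{q^2}$ to symplectic self-orthogonality over ${\mathbb F}_q$.
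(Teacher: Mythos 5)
Your proposal is correct and follows essentially the same route as the paper: identify ${\mathbb F}_{q^2}^n$ with ${\mathbb F}_q^{2n}$ via a basis, set $C=\phi(D^{\perp_h})$, show that Hermitian orthogonality forces $a\cdot b'-b\cdot a'=0$ (the paper does this with the basis $\{e,e^q\}$ and the ``apply Frobenius and subtract'' trick, which is your identity $u^q\cdot v-v^q\cdot u=(\beta-\beta^q)(a\cdot b'-b\cdot a')$ in slightly different clothing), then conclude $C\leqslant C^{\perp_a}=\phi(D)$ by a dimension count and transfer Hamming weight to symplectic weight before invoking Theorem~\ref{quDitstabthm}. No gaps.
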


\begin{proof}
The code $D^{{\perp}_h}$ is a $[n,k,d']_{q^2}$ code for some $d'$.

Fix a basis $\{e,e^q\}$ for ${\mathbb F}_{q^2}$ over ${\mathbb F}_q$, where $e^{2q} \neq e^2$. 

Let $\theta$ be the map from  ${\mathbb F}_{q^2}^n$ to  ${\mathbb F}_{q}^{2n}$ defined by
$$
\theta((a_1e+b_1e^q,\ldots, a_ne+b_ne^q))=(a_1,\ldots a_n  | b_1,\ldots,b_n)
$$
Let $C=\theta(D^{{\perp}_h})$, a $2k$-dimensional linear code over ${\mathbb F}_q$ of length $2n$.

For $u \in D^{{\perp}_h}$ and $u' \in D$,
$$
0=u^q \cdot u'= \sum_{i=1}^n (a_ie+b_ie^q)^q(a_i'e+b_i'e^q).
$$
This implies
$$
0=\sum_{i=1}^n (a_i'b_ie^2+b_i'a_ie^{2q}+(a_ia_i'+b_ib_i')e^{q+1}).
$$
Applying the $x \mapsto x^q$ map, we get
$$
0=\sum_{i=1}^n (a_i'b_ie^{2q}+b_i'a_ie^{2}+(a_ia_i'+b_ib_i')e^{q+1}).
$$
Subtracting the last two equations,
$$
0=(e^{2q}-e^{2})\sum_{i=1}^n (a_ib_i'-b_ia_i').
$$
Hence,
$$
(\theta(u),\theta(u'))_a=0,
$$
and so $\theta(D) \leqslant C^{{\perp}_a}$. Since $|D|=|C^{{\perp}_a}|=q^{2(n-k)}$, we have that $\theta(D)= C^{{\perp}_a}$.

Moreover, $C=\theta(D^{{\perp}_h})$ and $D^{{\perp}_h} \leqslant D$, so $C \leqslant C^{{\perp}_a}$. The symplectic weight of an element of $\theta (u)$ is equal to the weight of $u$, so the minimum symplectic weight of $C^{{\perp}_a} \setminus C$ is the minimum weight of $D \setminus D^{{\perp}_h}$.

The theorem follows from Theorem~\ref{quDitstabthm}.
\end{proof}

We will use the construction of Theorem~\ref{F4trick} to obtain quantum MDS codes in the next section.

\begin{rprob}
If $k$ is small enough one can multiply the columns of a generator matrix for $D^{{\perp}_h}$ with non-zero scalars to obtain an equivalent code for which $D^{{\perp}_h} \leqslant D$ holds. It would be interesting to calculate the combinatorial threshold for codes when this can always be done and then deduce properties of codes which surpass this threshold. 
\end{rprob}

\subsection{The geometry of quqit codes}

In the case $q=p^h$, Theorem~\ref{quDitstabthm} implies that the existence of a $(\!(n,q^n/p^r,d)\!)_q$ stabilizer code $Q(S)$ is equivalent to the existence of an additive code $C \leqslant C^{\perp_a}$ of length~$2n$, such that $C$ is generated by $r$ vectors of ${\mathbb F}_q^{2n}$ that are linearly independent over ${\mathbb F}_p$.
Thus, the code $C$ is generated by a $r \times 2n$ matrix $\mathrm{G}(S)$ over ${\mathbb F}_p$ and its columns are vectors in ${\mathbb F}_q^r$. 
We have seen in Section~\ref{additivecodes} that when $h>1$, we should consider those columns as 
subspaces of $\mathrm{PG}(r-1,p)$ and not as points of $\mathrm{PG}(r-1,q)$.

Let $x_i$ be the $i$-th column of the matrix $\mathrm{G}(S)$ and 
let $e$ be an element of ${\mathbb F}_q$ with the property that $\{1,e,e^2,\ldots,e^{{h-1}}\}$ 
is a basis for ${\mathbb F}_q$ over ${\mathbb F}_p$. 

Then there are vector $x_{i,j} \in {\mathbb F}_p^r$ such that
$$
x_i=\sum_{j=0}^{h-1} x_{i,j}e^{j}.
$$

Let $\ell_i$ be the subspace 
\begin{equation} \label{ijdef}
\ell_i=\langle x_{i,0},\ldots,x_{i,h-1},x_{i+n,0},\ldots,x_{i+n,h-1} \rangle,
\end{equation}
as a subspace of $\mathrm{PG}(r-1,p)$.

The following lemma can be considered as a generalisation of Lemma~\ref{theyrelines}

\begin{lemma} \label{theyre2h-1spaces}
The subspace $\ell_i$ is a $(2h-1)$-dimensional subspace for all $i=1,\ldots,n$ if and only if the minimum non-zero weight of $\mathrm{Centraliser}(S)$ is at least two.
\end{lemma}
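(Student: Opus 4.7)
The strategy is to generalise Lemma~\ref{theyrelines} by showing directly that, for each fixed $i$, the $2h$ vectors $x_{i,0},\ldots,x_{i,h-1},x_{i+n,0},\ldots,x_{i+n,h-1}$ admit a non-trivial ${\mathbb F}_p$-linear relation if and only if $\mathrm{Centraliser}(S)$ contains a Pauli operator of weight one supported at coordinate~$i$. The lemma then follows by taking the conjunction over all $i$.

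First I would invoke Theorem~\ref{quDitstabthm} to identify $\tau(\mathrm{Centraliser}(S)) = C^{\perp_a}$, and use the fact that the weight of $E\in\mathcal{P}_n$ equals the symplectic weight of $\tau(E)$. A weight-one element of $\mathrm{Centraliser}(S)$ acting only at coordinate $i$ therefore corresponds to a non-zero $w=(a\,|\,b)\in C^{\perp_a}$ whose only possibly non-zero entries are $a_i=\alpha$ and $b_i=\beta$, with $(\alpha,\beta)\in{\mathbb F}_q^2\setminus\{(0,0)\}$. Since the rows of $\mathrm{G}(S)$ form an ${\mathbb F}_p$-basis of $C$, every codeword is $v=\lambda\,\mathrm{G}(S)$ for some $\lambda\in{\mathbb F}_p^r$, so that $v_i=\lambda\cdot x_i$ and $v_{i+n}=\lambda\cdot x_{i+n}$. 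The requirement $w\in C^{\perp_a}$ then reads
$$
\mathrm{tr}\bigl(\beta(\lambda\cdot x_i)-\alpha(\lambda\cdot x_{i+n})\bigr)=0 \quad \text{for every } \lambda\in{\mathbb F}_p^r.
$$

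To extract the ${\mathbb F}_p$-linear content of this condition, I would fix a basis $\{f_0,\ldots,f_{h-1}\}$ of ${\mathbb F}_q$ trace-dual to $\{1,e,\ldots,e^{h-1}\}$, that is, $\mathrm{tr}(f_j e^{j'})=\delta_{jj'}$. Expanding $\alpha=\sum_j\tilde\alpha_j f_j$, $\beta=\sum_j\tilde\beta_j f_j$ with $\tilde\alpha_j,\tilde\beta_j\in{\mathbb F}_p$, and substituting $x_i=\sum_{j'}x_{i,j'}e^{j'}$, the trace pairing collapses to
$$
\lambda\cdot\Bigl(\sum_{j}\tilde\beta_j\, x_{i,j}\;-\;\sum_{j}\tilde\alpha_j\, x_{i+n,j}\Bigr)=0 \quad \text{for every } \lambda\in{\mathbb F}_p^r,
$$
which, as $\lambda$ ranges over ${\mathbb F}_p^r$, forces the bracketed ${\mathbb F}_p^r$-vector to be zero. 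Hence a non-zero $(\alpha,\beta)$ with the above property exists precisely when the $2h$ vectors $\{x_{i,j},\,x_{i+n,j}:0\le j\le h-1\}$ are ${\mathbb F}_p$-linearly dependent, i.e.\ when $\ell_i$ has projective dimension strictly less than $2h-1$.

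The main technical subtlety is the interplay between the trace form on ${\mathbb F}_q$ and the ${\mathbb F}_p$-coordinate decomposition of $x_i,x_{i+n}$: choosing the trace-dual basis makes every mixed term collapse to a Kronecker delta, so the non-detection of a weight-one error at position $i$ is seen directly as an ${\mathbb F}_p$-linear dependence among the $x_{i,j}$ and $x_{i+n,j}$. With an arbitrary ${\mathbb F}_p$-basis of ${\mathbb F}_q$ one would instead be left carrying the full Gram matrix $\mathrm{tr}(e^j e^{j'})$, and the linear-algebraic meaning of the obstruction would be obscured.
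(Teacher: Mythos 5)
Your argument is correct. It rests on the same underlying fact as the paper's proof --- the non-degeneracy of the trace-symplectic pairing at a single coordinate translates ``weight-one element of $\mathrm{Centraliser}(S)$ supported at position $i$'' into a linear obstruction on the data defining $\ell_i$ --- but the bookkeeping is genuinely different. The paper works in the \emph{local} space: it views the $i$-th components $(a',b')$ of the elements of $S$ as points of $\mathrm{PG}(2h-1,p)$ and observes that a commuting weight-one error $X(a)Z(b)$ forces all of them into the hyperplane $\ker \mathrm{tr}(bX-aY)$, so the local span (and hence, implicitly, $\ell_i$) drops dimension; this is quick but silently identifies the subspace $\ell_i\leqslant \mathrm{PG}(r-1,p)$ with its image in $\mathrm{PG}(2h-1,p)$ via the equality of row and column rank of the $r\times 2h$ matrix with columns $x_{i,j},x_{i+n,j}$. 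You instead stay in the ambient space ${\mathbb F}_p^r$, characterise weight-one elements of $\tau(\mathrm{Centraliser}(S))=C^{\perp_a}$ directly, and use a trace-dual basis to convert the condition $\mathrm{tr}\bigl(\beta(\lambda\cdot x_i)-\alpha(\lambda\cdot x_{i+n})\bigr)=0$ for all $\lambda$ into an explicit ${\mathbb F}_p$-linear relation $\sum_j\tilde\beta_j x_{i,j}-\sum_j\tilde\alpha_j x_{i+n,j}=0$, with $(\alpha,\beta)\neq(0,0)$ exactly matching non-triviality of the relation. What your route buys is that the equivalence with the stated definition of $\ell_i$ (a span of $2h$ vectors in ${\mathbb F}_p^r$) is immediate and both directions come out of a single biconditional computation; what the paper's route buys is brevity and a geometric picture (hyperplanes of the local $\mathrm{PG}(2h-1,p)$) that foreshadows the rest of Section 5.6. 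One small point to make explicit in your write-up: you should note (as you implicitly do) that the only weight-zero elements of $\mathrm{Centraliser}(S)$ are scalar multiples of the identity, so that ``no weight-one element'' is indeed equivalent to ``minimum non-zero weight at least two''.
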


\begin{proof}
Suppose that $\ell_i$ is a $(2h-1)$-dimensional subspace for all $i=1,\ldots,n$ and that $E\in \mathrm{Centraliser}(S)$ has weight one. Suppose that $E$ has a $X(a)Z(b) \neq X(0)Z(0)$ in its $i$-th position, Consider any $M \in S$ and suppose that in the $i$-th coordinate $M$ has the Pauli matrix $X(a')(Z(b')$. Since $M$ and $E$ commute, 
$$
\mathrm{tr}(a'b-b'a)=0.
$$
Thus, $(a',b')$ is in the kernel of the linear (over ${\mathbb F}_p$) form
$$
\mathrm{tr}(bX-aY).
$$
The kernel of a linear form is a hyperplane of $\mathrm{PG}(2h-1,p)$, so $\ell_i$ has dimension at most $2h-2$, a contradiction.

Suppose that the minimum non-zero weight of $\mathrm{Centraliser}(S)$ is at least two and that $\ell_i$ is not a $(2h-1)$-dimensional subspace for some $i=1,\ldots,n$. Since $\ell_i$ does not span the whole of $\mathrm{PG}(2h-1,p)$, there is an element $(a,b) \in {\mathbb F}_q^2$ such that 
$$
\mathrm{tr}(a'b-b'a)=0,
$$
for all $X(a')Z(b')$ occurring in the $i$-th position of some $M \in S$. This implies that the Pauli operator of weight one $E$ with a $X(a)Z(b)$ commutes with all $M\in S$, contradicting the fact that the minimum non-zero weight of $\mathrm{Centraliser}(S)$ is at least two.
\end{proof}

Thus, by Lemma~\ref{theyre2h-1spaces}, the geometry of the stabilizer code $Q(S)$ for which the minimum non-zero weight of $\mathrm{Centraliser}(S)$ is at least two, is given by a set $\mathcal{X}$ of $(2h-1)$-dimensional subspaces of $\mathrm{PG}(r-1,p)$ of size $n$. The following lemma allows us to deduce the minimum distance of $Q(S)$, at least in the case that $Q(S)$ is pure.

\begin{lemma} \label{wdependent}
There are $w$ dependent points incident with distinct subspaces of $\mathcal{X}$ if and only if there is an element of $\mathrm{Centraliser}(S)$ of weight $w$.
\end{lemma}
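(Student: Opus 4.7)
My plan is to set up, for each $i \in \{1,\ldots,n\}$, an explicit $\mathbb{F}_p$-linear map
$$Q_i\colon \mathbb{F}_q^2 \to \ell_i, \qquad Q_i(u,w) := \sum_{b=0}^{h-1}\bigl(\mathrm{tr}(we^b)\,x_{i,b} \,-\, \mathrm{tr}(ue^b)\,x_{i+n,b}\bigr),$$
which translates the trace-symplectic orthogonality condition on $v \in \mathbb{F}_q^{2n}$ into a geometric dependency statement for the points $Q_i(v_i,v_{i+n}) \in \ell_i$. By construction, $Q_i(u,w)$ is an $\mathbb{F}_p$-linear combination of the spanning vectors of $\ell_i$ and therefore lies in $\ell_i$.

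The core computation is to verify that for each row $G_k$ of $\mathrm{G}(S)$,
$$(G_k,v)_a \,=\, \Bigl(\,\sum_{i=1}^n Q_i(v_i,v_{i+n})\Bigr)^{(k)}\,,$$
which follows by expanding $G_{k,i} = \sum_b x_{i,b}^{(k)} e^b$ and using $\mathbb{F}_p$-linearity of the trace (the coordinates $x_{i,b}^{(k)}$ lie in $\mathbb{F}_p$ and can be pulled out of $\mathrm{tr}$). Consequently, $v \in C^{\perp_a}$ if and only if $\sum_i Q_i(v_i,v_{i+n}) = 0$ in $\mathbb{F}_p^r$. Next, invoking the running hypothesis (Lemma~\ref{theyre2h-1spaces}) that each $\ell_i$ has dimension $2h-1$, the $2h$ spanning vectors $\{x_{i,b}, x_{i+n,b}\}_{b=0}^{h-1}$ are $\mathbb{F}_p$-independent, so $Q_i(u,w) = 0$ forces $\mathrm{tr}(we^b) = \mathrm{tr}(ue^b) = 0$ for every $b$. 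Since $\{e^b\}$ is an $\mathbb{F}_p$-basis of $\mathbb{F}_q$ and the trace pairing on $\mathbb{F}_q$ is non-degenerate, this forces $(u,w)=(0,0)$. Hence $Q_i$ is injective, and as source and target both have $\mathbb{F}_p$-dimension $2h$, it is an $\mathbb{F}_p$-isomorphism $\mathbb{F}_q^2 \xrightarrow{\sim} \ell_i$.

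Both directions of the lemma now drop out. For $(\Leftarrow)$: a weight-$w$ element of $\mathrm{Centraliser}(S)$ corresponds via $\tau$ to a $v \in C^{\perp_a}$ of symplectic weight $w$; by injectivity of the $Q_i$'s, the nonzero values among $Q_i(v_i,v_{i+n})$ occur at exactly the $w$ positions in the support of $v$, lie on distinct subspaces $\ell_i$, and sum to zero --- giving $w$ dependent points on distinct subspaces. For $(\Rightarrow)$: given $w$ points $y_1,\ldots,y_w$ on distinct $\ell_{i_1},\ldots,\ell_{i_w}$ with minimal relation $\sum_k y_k = 0$, surjectivity of each $Q_{i_k}$ produces pairs $(v_{i_k},v_{i_k+n}) \in \mathbb{F}_q^2$ with $Q_{i_k}(v_{i_k},v_{i_k+n}) = y_k$; setting all other coordinates of $v$ to zero yields $v \in C^{\perp_a}$ of symplectic weight exactly $w$, which is $\tau$ of the desired weight-$w$ element of $\mathrm{Centraliser}(S)$.

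The main obstacle is pinning down the right map $Q_i$: the naive parameterization $(u,w) \mapsto w x_i - u x_{i+n} \in \mathbb{F}_q^r$ does not land in $\mathbb{F}_p^r$ and obscures the dependency. Routing through the trace pairing $u \mapsto (\mathrm{tr}(u e^0),\ldots,\mathrm{tr}(u e^{h-1}))$, the natural $\mathbb{F}_p$-linear isomorphism $\mathbb{F}_q \cong \mathbb{F}_p^h$, is what makes everything compatible; once this is in hand, the argument is a direct quDit generalization of the forward implication in the proof of Theorem~\ref{thm:stab_geo}.
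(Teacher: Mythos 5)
Your proof is correct, and its core computation is the same as the paper's: both expand the trace--symplectic form using $x_i=\sum_b x_{i,b}e^b$ and the $\F_p$-linearity of $\tr$, so that your $Q_i(v_i,v_{i+n})$ is exactly the summand $\sum_{b}\bigl(x_{i,b}\tr(v_{i+n}e^b)-x_{i+n,b}\tr(v_ie^b)\bigr)$ appearing in the paper's proof of the backward implication. Where you genuinely diverge is the forward direction: the paper explicitly inverts the correspondence by expressing each $x_{i,j}$ in terms of the Frobenius conjugates $x_i^{p^r}$ (a dual-basis computation whose bookkeeping in the printed version is somewhat garbled), whereas you observe that $Q_i$ is injective by Lemma~\ref{theyre2h-1spaces} plus non-degeneracy of the trace form, hence an $\F_p$-isomorphism $\F_q^2\to\langle\ell_i\rangle$ by dimension count, and simply pull the dependent points back through $Q_i^{-1}$. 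This buys you two things the paper leaves implicit: that the summands $Q_i(v_i,v_{i+n})$ are nonzero for $i$ in the symplectic support (so one really gets $w$ points, not fewer), and that the reconstructed $v$ has symplectic weight exactly $w$ (via your reduction to a minimal relation). Both arguments rest on the standing hypothesis that each $\ell_i$ is $(2h-1)$-dimensional, which you correctly flag.
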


\begin{proof}
Suppose that there is an element in $\mathrm{Centraliser}(S)$ of weight $w$. Then the image under $\tau$ of this element is a vector $v \in C^{\perp_a}$ with symplectic weight $w$. Let $D$ be the support of $v$ restricted to the first $n$ coordinates. As before, let $x_i$ be the $i$-th column of the matrix $\mathrm{G}(S)$ and define $x_{ij}$ as in (\ref{ijdef}). Since $v \in C^{\perp_a}$,
$$
\sum_{i\in D} \mathrm{tr}(v_{i+n}x_i-x_{i+n}v_i)=0.
$$
This implies
$$
 \sum_{i\in D} \sum_{j=0}^{h-1} (x_{ij} \mathrm{tr}(v_{i+n}e^j)-x_{i+n} \mathrm{tr}(v_{i}e^j))=0.
$$
The summand is a point of the subspace $\ell_i$ and there are $|D|=w$ such points. This proves the backwards implication.

Suppose there are $w$ dependent points incident with distinct subspaces of $\mathcal{X}$. Then there is a subset $D \subseteq \{1,\ldots,n\}$ of size $w$ and $\lambda_{i,j},\lambda_{i+n,j} \in {\mathbb F}_p$, such that
$$
\sum_{i\in D}  \sum_{j=0}^{h-1} ( \lambda_{i,j}x_{i,j}-\lambda_{i+n,j}x_{i+n,j})=0.
$$
Recall that
$$
x_i=\sum_{j=0}^{h-1} x_{i,j}e^{j}.
$$
Since $\ell_i$ is a $(2h-1)$-dimensional subspace, the points $x_j,x_j^p,\ldots,x_j^{p^{h-1}}$ are $h$ linearly independent points, which implies there are $\mu_{i,r} \in {\mathbb F}_q$ such that
$$
x_{i,j}=\sum_{r=0}^{h-1} \mu_{i,r} x_i^{p^r}.
$$
Since $x_{i,j} \in {\mathbb F}_p^{r}$, we have that $ \mu_{i,r}=\mu_i^{p^r}$, for some $\mu_i$. Substituting in the above gives,
$$
\sum_{i\in D}  \sum_{j=0}^{h-1} \sum_{r=0}^{h-1}  ( \lambda_{i,j}(\mu_{i}x_i)^{p^r}-\lambda_{i+n,j}(\mu_{i+n} x_{i+n})^{p^r})=0.
$$
Defining
$$
v_{i}=\sum_{j=0}^{h-1}  \lambda_{i,j}\mu_{i}
$$
this equation becomes
$$
\sum_{i\in D}  \mathrm{tr} (v_{i+n} x_i-v_{i}x_{i+n})=0.
$$
\end{proof}

The property that defines $\mathcal{X}$ as a quantum set of lines for $p=2$ does not carry over to the case $p \geqslant 3$. This is because we can scale any column of $\mathrm{G}$ by an element of ${\mathbb F}_q \setminus \{0,1\}$ and not alter the set of lines $\mathcal{X}$. This will alter the value of $(u,v)_a$, so the geometric interpretation of $C \leqslant C^{\perp_a}$ will not be so clean as in the qubit case. Moreover, it is difficult to deduce the pureness of the code directly from the geometry. To see this, suppose that $v \in C^{\perp_a}$ has symplectic support $D$ and for simplicity sake assume that $q$ is prime. Then 
$$
\sum_{i \in D} (v_{i+n}x_i-v_ix_{i+n})=0.
$$
Now, $v \in C$ if and only if there is an $a \in {\mathbb F}_p^r$ such that $v_i=a\cdot x_i$. This implies that the lines not incident with the dependent points are once again contained in a hyperplane, but we cannot deduce that the points of the dependencies are contained in the hyperplane $a \cdot X=0$. Indeed, the fact that
$$
a \cdot (v_{i+n}x_i-v_ix_{i+n})=0,
$$
implies that $(v_i,v_{i+n})=\lambda_i(x_i,x_{i+n})$ for some non-zero scalar $\lambda_i \in {\mathbb F}_q$. Since this $\lambda_i$ depends on $i$, we cannot deduce that $v_i=a\cdot x_i$ for all $i=1,\ldots,2n$.

However, this also means that when $p\geqslant 3$ we have some flexibility in choosing a basis for $\ell_i$ and this choice will affect whether $C \leqslant C^{\perp_a}$. Consider the set of $n$ $(2h-1)$-dimensional subspaces of $\mathrm{PG}(4n-1,p)$ associated with a pure $[\![n,n-4,3]\!]_q$ stabilizer code. By Lemma~\ref{wdependent}, these subspaces are pairwise skew. In geometrical language this is called a {\em partial spread}. To construct such a code, according to Theorem~\ref{F4trick}, it suffices to construct a $[n,n-2,3]_{q^2}$ linear code $D$ for which $D^{\perp_h} \leqslant D$. Such a code is has a generator  matrix
$$
\left(\begin{array}{cccc} x_1 & x_2 & \ldots & x_n \\ y_1 & y_2 & \ldots & y_n \\
\end{array} \right),
$$
where $x_iy_j \neq x_jy_i$ and 
\begin{equation} \label{allzero}
\sum_{i=1}^n x_i^{q+1}=\sum_{i=1}^n y_i^{q+1}=\sum_{i=1}^n x_i^{q}y_i=0.
\end{equation}
For any $n \leqslant q^2+1$ such a matrix can be found by scaling the first three columns so that the equation in (\ref{allzero}) are satisfied.

\begin{rprob}
The Glynn et al \cite{GGMG} manuscript developed the geometry of qubit stabilizer codes, introducing the concept of a quantum set of lines. This led them to prove Theorem~\ref{3GMakx}, which gives a beautiful geometric classification of qubit stabilizer codes. Here, we have generalised the concept of quantum set of lines to non-qubit stabilizer codes. Although we have seen that the existence of non-identity non-zero scalars means we cannot hope for such a clean geometric classification, one can certainly expect some geometric classification for larger $q$.
\end{rprob}

\section{Quantum MDS codes} \label{sectionMDS}

\subsection{Stabiliser MDS codes}

Let $C$ be a code of length $n$ and minimum distance $d$ over an alphabet of size $q$.
If we consider any $n-(d-1)$ coordinates then any two codewords must be different on these coordinates (if not the distance between them is at most $d-1$), so there are at most $q^{n-d+1}$ codewords in the code. This is the {\em Singleton bound}
$$
|C| \leqslant q^{n-d+1}.
$$
A code which attains the Singleton bound is called a {\it maximum distance separable code} or simply an MDS code. 

Recall that if $C$ is an additive code over ${\mathbb F}_q$, where $q=p^h$ for some prime $p$, then $C$ is linear over ${\mathbb F}_p$ and so necessarily $|C|=p^r$ for some $r$, see Section~\ref{additivecodes}. Thus, if $C$ is also an MDS code then $h$ divides $r$ and $|C|=q^k$, where $k=n-d+1$.

Theorem~\ref{quDitstabthm} states that an $[\![n, k, d]\!]_q$ stabilizer code exists if and only if there exists an additive code $C \leqslant {\mathbb F}_q^{2n}$ of size $|C| = q^{n-k}$ such that $C \leqslant C^{\perp_a}$ and the minimum symplectic weight of an element of $C^{\perp_a} \setminus C$ is $d$. Considering $C^{\perp_a}$ as a code over the alphabet ${\mathbb F}_q \times {\mathbb F}_q$, then $C^{\perp_a}$ has minimum weight $d$, so
$$
|C^{\perp_a}| \leqslant q^{2n-2d+2}.
$$
Since $|C| = q^{n-k}$ we have that $|C^{\perp_a}|=q^{n+k}$, which implies that for a $[\![n, k, d]\!]_q$ stabilizer code to exist, we must have the condition
$$
k \leqslant n-2(d-1).
$$
Compare this with the Singleton bound above
$$
k \leqslant n-(d-1),
$$
for codes of size $q^k$.

What is perhaps surprising is that this bound holds for all $[\![n,k,d]\!]_q$ quantum codes. The {\em quantum Singleton bound} states that
$$
n \geqslant k + 2(d-1)\,.
$$
Consequently, codes reaching equality are called {\em quantum maximum distance separable codes} or QMDS codes for short. We will prove this bound in Section~\ref{quantumsingleton}.

\subsection{Reed-Solomon codes}

The classical example of an MDS code is the following linear code over ${\mathbb F}_q$. Denote by $\{a_1,\ldots,a_{q}\}$ the elements of ${\mathbb F}_{q}$. The {\em Reed-Solomon code} is
$$
C=\{(f(a_1),\ldots,f(a_{q}),f_{k-1}) \ | \ f \in {\mathbb F}_{q}[X],\ \deg f \leqslant k-1\},
$$
where $f_{k-1}$ denotes the coefficient of $X^{k-1}$ in $f(X)$. If $k\leqslant q$ then each polynomial $f$ defines a different codeword, so the dimension of $C$ is $k$. A non-zero codeword has weight at least $n-k+1$, since a polynomial of degree at most $k-1$ has at most $k-1$ zeros. Lemma~\ref{minweight} then implies that the minimum distance $d=n-k+1$ and so the code is MDS.

We can use Theorem~\ref{F4trick} to construct quantum stabilizer codes from Reed-Solomon codes over ${\mathbb F}_{q^2}$, but only if we can scale the coordinates of $C$ so that $C \leqslant C^{\perp_h}$. Then $D=C^{\perp}_h$ is a $[n,n-k,k+1]_{q^2}$ linear MDS code with the property that $D^{\perp}_h \leqslant D$. Observe that replacing the $i$-th coordinate $f(a_i)$ by $\lambda_i f(a_i)$ does not alter the parameters of the code. Such a code is then called a {\em generalised Reed-Solomon code}. This can only be done for $k \leqslant q$, in which case we obtain a $[\![q^2+1,q^2+1-2k,k+1]\!]_q$ stabilizer code. For case $k=q$, one can check that the Reed-Solomon code
$$
\{(f(a_1),\ldots,f(a_{q^2}),f_{q-1}) \ | \ f \in {\mathbb F}_{q^2}[X],\ \deg f \leqslant q-1\},
$$
is contained in its Hermitian dual, so there is no need to scale in this case.

\subsection{Quantum Singleton bound} \label{quantumsingleton}

To prove the quantum Singleton bound we will need some technical tools.

\noindent 1. {\em Bloch decomposition.}
Let $\{e_i\}$ be a basis for the space of complex $D\times D$ matrices such that \(\tr(e_i^\dag e_j) = D \delta_{ij}\). 
For qubits, take for example the Pauli matrices.
Every one-quDit density matrix can then be expanded as
$$
\rho = \frac{1}{D} \sum_{i} \tr(e_i^\dag \rho) e_i,
$$
where we recall that the trace of a matrix is given by the sum of its diagonal elements, 
$\tr(M) = \sum_i m_{ii}$ for any square matrix $M = (m_{ij})$.

Consider now an $n$-partite system in the space $(\mathbb C^D)^{\otimes n}$. 
Denote by $\{E_\alpha\}$, with a multi-index $\alpha = (\alpha_1,\dots, \alpha_n)$, the matrix basis formed by tensor-products of the $e_i$'s
$$
E_{\alpha} = e_{\alpha_1} \ot \dots \ot e_{\alpha_n}.
$$
For tensor products, such as say $E \ot F$, one has $\tr(E \ot F) = \tr(E) \cdot \tr(F)$. In other words, the trace of a tensor product factorizes.
Consequently
$
\tr(E_\alpha^\dag E_\beta) = D^n \delta_{\alpha \beta},
$
and the matrix basis formed by $\{E_\alpha\}$ is orthogonal.

Denote by $\wt(E_\alpha)$ the number of non-identity terms in the tensor-decomposition,
and by $\supp(E_\alpha)$ the collection of sites where the non-identity terms act on.
Naturally, $\wt(E_\alpha) = |\supp(E_\alpha)|$.

We can expand an $n$-partite state as 
$$
\rho = \frac{1}{D^n} \sum_{E} \tr(E^\dag \rho) E\,.
$$
As above, we from now on omit the index $\alpha$ for readability. 
This is the Bloch decomposition of $\rho$.

\noindent 2. {\em Partial trace.}
Consider the linear function $\tr_j$ which maps
$$
\tr_j : e_{\alpha_1} \ot \dots \ot e_{\alpha_n} \quad \mapsto \quad \tr(e_{\alpha_j}) \cdot e_{\alpha_1} \ot \dots \ot e_{\alpha_{j-1}} \ot e_{\alpha_{j+1}} \ot \dots \ot e_{\alpha_n}\,.
$$
The function $\tr_j$ is called the {\em partial trace} and its action can be understood as that of removing 
the $j$-th tensor component. 

The partial trace does not depend on the basis. Its coordinate-free definition is the following:
Let $V$ and $W$ be two vector spaces and 
denote by $I_W$ the identity matrix on $W$.
The partial trace $tr_W$ is the unique operator, 
which for all $M$ acting on $V \otimes W$ and $N$ acting on $V$ satisfies
$$ 
\tr( M \cdot (N \otimes I_W)) = \tr( \tr_W(M) \cdot N)\,.
$$
Considering the Hilbert-Schmidt inner product $\langle M,N \rangle = \tr(M^\dagger N)$,
the partial trace can be seen as the adjoint to the map $V \to V \otimes I_W$.
Note that partial traces over different subsystems commute, $\tr_j \tr_i = \tr_i \tr_j$ and one has that
$$
\tr(M_1 \ot M_2 \ot \dots \ot M_n) = \tr(M_1) \tr(M_2) \cdots \tr(M_n)\,.
$$

\noindent 3. {\em Purification.} A density matrix $\rho$ on $\HH_A$ can always be diagonalized as
$$
\rho = \sum_{i=1}^{\dim(\HH_A)} \lambda_i \dyad{\lambda_i}_A,
$$
where $\{\ket{\lambda_i}_A\}$ is its set of eigenvectors and $\{\lambda_i\}$ is its set of corresponding eigenvalues.

The density matrix $\rho$ acting on some Hilbert space $\HH_A$ can always be 
represented as the reduction or marginal of 
a pure state on $\HH_A \ot \HH_B$ with $\dim(\HH_B) \geq \dim(\HH_A)$. 
This works as follows: choose an orthonormal basis $\{\ket{\lambda_i}^B\}$
for an arbitrary $\dim(\HH_A)$-dimensional subspace of $\HH_B$. We then write
$$
\ket{\phi} = \sum_{i=1}^{\dim(\HH_A)} \sqrt{\lambda_i} \ket{\lambda_i}_A \ot \ket{\lambda_i}^B\,.
$$
It can be checked that $\tr_B(\dyad{\phi}) = \rho$ and the state $\ket{\phi}$ is known as a {\em purification} of $\rho$.

\noindent 4. {\em Von Neumann entropy.}
Consider a classical probability distribution represented by a set of probabilities $p_i\geq 0$ with $\sum_i p_i = 1$. 
Its {\em Shannon entropy} is 
$$
S(p) = -\sum_i p_i \log(p_i)\,.
$$

We can introduce a similar quantity for quantum states. Given a density matrix $\rho$,
its {von Neumann entropy} is defined as
$$S(\rho) = - \tr \rho \log(\rho)\,.
$$
Such matrix functions of hermitian operators can be evaluated on their eigenvalues $\{\lambda_i\}$.
Then the von Neumann entropy evaluates as
$$S(\rho) = - \sum_i \lambda_i \log(\lambda_i)\,.
$$

Let us now write $S_A = S(\tr_B[\rho_{AB}])$ and so on. 
For a state $\rho$ on $\HH_A$ with purification $\ket{\phi} \in \HH_A \ot \HH_B$, we have
that 
$S_A = S_B$.

The von Neumann entropy satisfies {\em subadditivity} and {\em strong subadditivity},
\begin{align}
S_{AB} &\leq S_A + S_B\,, \nonumber\\
S_{ABC} + S_B & \leq S_{AB} + S_{BC}\,. \nonumber
\end{align}

We are now in position to prove the Quantum Singleton bound.

\begin{theorem}[Quantum Singleton bound]
Any $[\![n,k,d]\!]_q$ code with $k\geq 1$ satisfies
$$
n \geq k + 2(d-1)\,.
$$ 
\end{theorem}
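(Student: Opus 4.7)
The plan is to prove the bound via an entropic argument using the tools just assembled (Bloch decomposition, partial trace, purification, and the subadditivity properties of the von Neumann entropy). First I would introduce a reference system $A$ of dimension $q^k$ and form the maximally entangled purification
$$
\ket{\Phi}_{AB}=\frac{1}{\sqrt{q^k}}\sum_{i=1}^{q^k}\ket{i}_A\otimes\ket{\phi_i}_B,
$$
where $\{\ket{\phi_i}\}$ is an orthonormal basis for the code $Q\subseteq(\C^q)^{\ot n}$ and $B$ plays the role of the $n$-qudit code carrier. This gives $\rho_A=I_A/q^k$, so $S(A)=k\log q$, while $\ket{\Phi}_{AB}$ being pure yields $S(X)=S(\bar X)$ for every bipartition.

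Next I would split the $n$ code sites as $B=B_1B_2B_3$ with $|B_1|=|B_2|=d-1$ and $|B_3|=n-2(d-1)$ (the case $n<2(d-1)$ makes the bound vacuous). The central step is to use the Knill--Laflamme conditions (Theorem~\ref{KLtheorem}) to show $\rho_{AB_1}$ is a product state. Indeed, since $|B_1|=d-1$, every operator of the form $X\otimes I_{B_2B_3}$ with $X$ supported on $B_1$ is a linear combination of correctable Pauli errors, so Theorem~\ref{KLtheorem} forces
$$
\bra{\phi_i}(X\otimes I_{B_2B_3})\ket{\phi_j}=c_X\,\delta_{ij}.
$$
Expanding $X$ over a basis of operators on $B_1$ then gives $\tr_{B_2B_3}\ket{\phi_i}\bra{\phi_j}=\delta_{ij}\,\sigma_0$ for a single density matrix $\sigma_0$ on $B_1$, from which one immediately obtains $\rho_{AB_1}=\rho_A\otimes\sigma_0$. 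The same argument applied to $B_2$ gives $\rho_{AB_2}=\rho_A\otimes\sigma_0'$, so that
$$
S(AB_1)=S(A)+S(B_1),\qquad S(AB_2)=S(A)+S(B_2).
$$

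Now I would combine these with the duality from purity and subadditivity. Using $S(AB_1)=S(B_2B_3)$ and subadditivity $S(B_2B_3)\leq S(B_2)+S(B_3)$ yields
$$
S(A)+S(B_1)\leq S(B_2)+S(B_3),
$$
while the analogous manipulation on $B_2$ yields $S(A)+S(B_2)\leq S(B_1)+S(B_3)$. Adding the two inequalities, the $S(B_1)$ and $S(B_2)$ terms cancel and I obtain $2S(A)\leq 2S(B_3)$, i.e. $S(A)\leq S(B_3)$. Since $S(A)=k\log q$ and $S(B_3)\leq|B_3|\log q=(n-2(d-1))\log q$, this gives $k\leq n-2(d-1)$, which is the claim.

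The main obstacle in this plan is the product-state step: verifying that the Knill--Laflamme conditions really force $\rho_{AB_1}=\rho_A\otimes\sigma_0$ (and similarly for $B_2$), since everything else is routine manipulation with $S(\cdot)$. Note that only plain subadditivity is needed; strong subadditivity, although mentioned, turns out to be unnecessary for this particular bound.
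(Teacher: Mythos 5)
Your proposal is correct and follows essentially the same route as the paper: purify the code space against a reference system, split the $n$ carriers into two blocks of size $d-1$ plus a remainder, use the fact that a $(d-1)$-sized subsystem is uncorrelated with the reference, and combine purity with plain subadditivity to get $S(\text{ref})\leq S(\text{remainder})$. The only difference is cosmetic: you derive the product-state step $\rho_{AB_1}=\rho_A\otimes\sigma_0$ explicitly from the Knill--Laflamme conditions, whereas the paper simply cites this equivalence from the literature.
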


\begin{proof}
The distance must be bounded by $2(d-1)<n$, as otherwise $n-(d-1) < (d-1)$ and we could recover the encoded state
from two disjoint subsystems, violating the no-cloning theorem.

Let \(\Pi_\QQ = \sum_{i=1}^{q^k} \dyad{v_i}\) be the projector onto the code space.
A purification with a reference system~\(R\) leads to
 \begin{equation}
  \ket{\psi_{QR}} = \frac{1}{\sqrt{q^k}}\sum_{i=1}^{q^k} \ket{v_i}\ot \ket{i_R}, \nonumber
 \end{equation} 
 where \(\ket{i_R}\) is any orthonormal basis for \(R\). 
 Let us partition the code into the three subsystems \(A,B,C\), such
 that \(|A|=|B|=d-1\) and $|C|=n-2(d-1)$.  Then \(S_R = \log(q^k)\).  
 As the code has distance~\(d\), any
 subsystem of size strictly smaller than $d$ cannot reveal anything
 about the reference system $R$: indeed the condition of \(\varrho_{RA} =
 \varrho_R \ot \varrho_A\) is known to be a necessary and sufficient condition
 for the subsystem $A$ to be correctable~\cite{NielsenChuang2000};
 this is also equivalent to
 \(S_{RA} = S_R + S_A\).  With the subadditivity of the von Neumann
 entropy this leads to
 \begin{align}
  S_R + S_A &= S_{RA} = S_{BC} \leq S_B + S_C\,,  \nonumber\\
  S_R + S_B &= S_{RB} = S_{AC} \leq S_A + S_C\,,\nonumber
 \end{align}
where we used that the entropies of complementary subsystems are equal for a pure state.
 The combination of the above two inequalities yields
 $$
 \log q^k= S_R \leq S_C\leq \log \dim (\HH_C) = \log q^{n-2(d-1)}.
 $$
\end{proof}

Similar to classical MDS codes, quantum MDS are, in a certain sense, extremal.
We have the following interesting properties:
\begin{enumerate}
 \item[(a)] If a $[\![n,n-2d+2,d]\!]$ quantum MDS code exists, then so do all $[\![n-s, n-2d+2+s, d-s]\!]$ codes for all $0\leq s \leq d$.
 \item[(b)] For every subset $S\subset \{1,\dots, n\}$ with $|S| \leq \frac{n+k}{2}$, we have that $\tr_{S^c}(P) \propto \one$, where $P$ is the orthogonal projection onto the quantum MDS code.
\end{enumerate}
Let us discuss these properties: a) states that QMDS codes form families of codes where $n+k$ is constant.
Within each family, only the member with the highest distance has to be determined,
as its descendants can be obtained by a partial trace: tracing out over a single particle,
one has $n\mapsto n-1$, $k \mapsto k+1$, $d\mapsto d-1$. This works because QMDS codes are {\em pure codes}, that is, 
all their $(d-1)$-party marginals are maximally mixed.
For general quantum codes, this method of making new codes from old is not necessarily possible.

\begin{figure}
\centering
 \includegraphics[width = 0.5\textwidth]{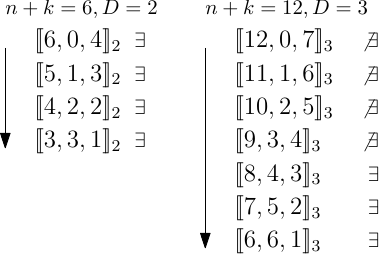}
 \caption{Two families of quantum MDS codes. Once the topmost existing parent code is known, 
 (here: $[\![6,0,4]\!]_2$ and $[\![8,4,3]\!]_3$), 
 its descendants can be obtained by partial traces.}
\end{figure}

Property (b) states that for all pure states $\ket{v}$ in the code, 
the marginals of size less than $d$ are maximally mixed. 
This implies that every vector in the code space shows maximal bipartite entanglement 
across any bipartition of $d-1$ vs. $n-d+1$ parties. Thus QMDS codes form subspaces that 
show high bipartite entanglent.
We relate this to similar property of classical MDS codes:
the parity check matrix $H$ of a classical $[n,k,d]$ code has the property that every set of $n-k$ columns are linearly independent.

A necessary condition for QMDS to exist is the following bound.
\begin{prop}[\cite{Huber2019}]
If there is a quantum MDS code with parameters $[\![n, n- 2d+2, d]\!]_q$ then
$$
 n \leqslant q^2 + d - 2\,.
$$
\end{prop}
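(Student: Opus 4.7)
The plan is to reduce to the boundary case $d=2$ and then extract the bound from the very restrictive Bloch structure forced on the projector. Applying property (a) with $s=1$ a total of $d-2$ times to a hypothetical $[\![n,n-2d+2,d]\!]_q$ QMDS code produces a $[\![N,N-2,2]\!]_q$ QMDS code with $N=n-d+2$. So it suffices to show that any $[\![N,N-2,2]\!]_q$ QMDS code satisfies $N\leq q^2$; the bound $n\leq q^2+d-2$ then follows.

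For this reduced problem I would use property (b). In the case $k=N-2$, one has $(n+k)/2=N-1$, so for every subset $S\subseteq\{1,\dots,N\}$ with $|S|\leq N-1$ the partial trace $\tr_{S^c}(P)$ is proportional to the identity on $S$. Expanding $P=\sum_E c_E E$ in an orthonormal Pauli tensor basis (so that $c_E=\tr(E^\dagger P)/q^N$), I would argue that $c_E=0$ for every non-identity Pauli of weight $\leq N-1$: indeed, if $T=\supp(E)$ with $|T|\leq N-1$, then
$$\tr(E^\dagger P)=\tr\bigl(\tr_{T^c}(P)\cdot (E^\dagger|_T)\bigr)=c\cdot\tr(E^\dagger|_T),$$
and $\tr(E^\dagger|_T)$ factors as a product of single-qudit traces, each of which vanishes because every tensor factor of $E^\dagger|_T$ is a non-identity Pauli. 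Combined with $c_I=\tr(P)/q^N=q^{-2}$, this gives the tight form
$$P=q^{-2}\one+R,\qquad R=\sum_{\wt(E)=N}c_E E.$$

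From here I would compute the Shor--Laflamme enumerators: the restricted expansion forces $A_0=1$, $A_j=0$ for $1\leq j\leq N-1$, and $A_N=q^2-1$, whence the quantum MacWilliams identity determines the duals $B_j$ in closed form. The bound $N\leq q^2$ is then extracted by combining the idempotency relation $R^2=(1-2q^{-2})R+q^{-2}(1-q^{-2})\one$ (equivalent to $P^2=P$) with the non-negativity of the Rains shadow enumerator: one shows that the resulting shadow inequality, evaluated on the explicit QMDS enumerators, is saturated at $N=q^2$ and violated for $N>q^2$. This is the strategy carried out in Huber~\cite{Huber2019}.

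The main obstacle is the final non-negativity step. The restricted Bloch expansion is structurally very tight but does not by itself yield a numerical upper bound; the spectral constraints on $R$ (eigenvalues $(q^2-1)/q^2$ and $-q^{-2}$ with multiplicities $q^{N-2}$ and $q^{N-2}(q^2-1)$) are naively consistent for every $N$, and a crude dimension count of weight-$N$ Paulis versus idempotency equations is not sharp. The bound $N\leq q^2$ emerges only after invoking the shadow/Ashikhmin--Litsyn inequalities, which exploit the positivity of a carefully chosen combination of the enumerators $B_j$ and become infeasible exactly past the threshold.
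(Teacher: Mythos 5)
The paper itself gives no proof of this proposition: it is stated as a citation to Huber and Grassl \cite{Huber2019}, so there is no in-paper argument to compare against. Measured against the cited source, your outline reconstructs its strategy faithfully: the reduction via property (a) with $s=d-2$ to a putative $[\![N,N-2,2]\!]_q$ code with $N=n-d+2$ is correct (the intermediate codes stay QMDS since $n$ and $d$ drop together while $k$ rises, preserving $n=k+2(d-1)$), and your use of property (b) with $(n+k)/2=N-1$ to kill every Bloch coefficient of weight $1,\dots,N-1$ is also correct; this pins down the entire Shor--Laflamme enumerator up to the single coefficient $A_N$, which $KB_1=A_1=0$ then fixes.

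The genuine gap is the final step, and you have correctly identified it yourself: everything up to that point produces no bound on $N$ whatsoever. Worse, the obstacle is sharper than you suggest. If one computes the $B_j$ from the MacWilliams identity with $A_N$ fixed as above, one finds $B_j\propto\binom{N}{j}\bigl[(q^2-1)^j+(q^2-1)(-1)^j\bigr]\geqslant 0$ for \emph{every} $N$, so the listed positivity conditions $KB_j\geqslant A_j\geqslant 0$ are satisfied for all lengths and cannot yield $N\leqslant q^2$. Moreover the qubit shadow transform $S(x,y)=A\bigl(\tfrac{x+3y}{2},\tfrac{y-x}{2}\bigr)$ does not naively generalize to $q>2$ (a direct substitution produces spurious negativity already for odd $N$ at small lengths where codes are known to exist); what is actually needed is Rains' family of generalized shadow inequalities, $\sum_{T}(-1)^{|T\cap M|}\tr\bigl(\tr_{T^c}(P)\,\tr_{T^c}(P)\bigr)\geqslant 0$, together with a specific choice of $M$ and a nontrivial evaluation on the determined enumerator. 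That evaluation is the entire content of the cited result, and your proposal asserts its outcome rather than carrying it out. So the write-up is a correct roadmap with an accurate diagnosis of where the difficulty lies, but it is not yet a proof.
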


This should be compared to the ``trivial'' upper bound for MDS codes. If there is a $(n,q^k,n-k+1)_q$ MDS code then
$$
n \leqslant q+k-1.
$$

The MDS conjecture states that if $4 \leqslant k \leqslant q$ and there is a $(n,q^k,n-k+1)_q$ MDS code then 
$$
n \leqslant q+1.
$$
This is known to hold for linear codes if $q$ is a prime, see \cite{Ball2012}.

For quantum MDS codes, the MDS conjecture states that if $5 \leqslant d \leqslant q^2-1$ and there is a linear $[\![n,n-2d+2,d]\!]_q$ MDS code then 
$$
n \leqslant q^2+1.
$$

Ketkar \cite[Corollary 65]{KKKS2006} claims that if the classical MDS conjecture holds for linear codes then quantum MDS conjecture holds for stabilizer codes. This is not the case. By Theorem~\ref{quDitstabthm} the existence of a stabilizer code is equivalent to the existence of an additive code, so \cite[Corollary 65]{KKKS2006} should state that the quantum MDS conjecture holds for stabilizer codes if the MDS conjecture holds for additive codes.

\begin{rprob}
Prove the MDS conjecture for linear codes with $q$ non-prime. 
\end{rprob}

\begin{rprob}
Prove the MDS conjecture for additive codes over ${\mathbb F}_q$, starting with $q={p^2}$ for some prime $p$.
\end{rprob}

\begin{rprob}
Find all inequalities that relate the von Neumann
entropies of the marginals of multipartite systems.
\end{rprob}

\begin{rprob}
Show that all QMDS codes are either stabilizer codes or the direct sum of stabilizer codes.
\end{rprob}

\section{Weight enumerators}

\subsection{MacWilliams identity for linear codes}

Let $C$ be an $[n,k,d]_q$ code and define $A_i$ to be the number of codewords of $C$ of weight $i$, i.e. the number of codewords of $C$ which have $i$ non-zero coordinates. Since the zero codeword is in $C$, $A_0=1$ and since the minimum distance is $d$, $A_i=0$ for all $i=1,\ldots,d-1$. Let $B_i$ denote the number of codewords of $C^{\perp}$ of weight $i$. The MacWilliam's identities relate the polynomials
$$
A(x,y)=\sum_{i=1}^n A_i x^{n-i}y^i
$$
and
$$
B(x,y)=\sum_{i=1}^n B_i x^{n-i}y^i.
$$
Specifically, we have that
$$
|C|B(x,y)=A(y+(q-1)x,y-x)
$$
and dually,
$$
|C^{\perp}|A(x,y)=B(y+(q-1)x,y-x).
$$
Let $\mathrm{G}$ be a $k\times n$ generator matrix for $C$ and let $\mathcal{X}$ be the set or multi-set of columns of $\mathrm{G}$, viewed as points of $\mathrm{PG}(k-1,q)$. In Section~\ref{geomlinear}, we saw that a non-zero codeword $u=a\mathrm{G}$ corresponds to a hyperplane $\pi_a$ of $\mathrm{PG}(k-1,q)$ and that $\pi_a=\pi_{\lambda a}$ for any $\lambda \in {\mathbb F}_q$. The number of points of $\mathcal{X}$ incident with the hyperplane $\pi_a$ is $n$ minus the weight of the codeword $u$. Thus, for $i \neq 0$, there are $A_i/(q-1)$ hyperplanes which are incident with $n-i$ points of $\mathcal{X}$.

\subsection{MacWilliams identity for quantum codes}

As for classical codes, weight enumerators can be defined for quantum codes, which again are useful to deduce the error-correcting properties of codes and to obtain bounds on their existence.

Let $Q$ be a quantum code and let $P$ be the orthogonal projection onto $Q$.
The weights of the primary and secondary {\em Shor-Laflamme enumerators} are
\begin{align}
 A_j &= \sum_{\wt(E) = j} \tr (E P ) \tr ( E^\dag P ) , \nonumber\\
 B_j &= \sum_{\wt(E) = j} \tr (E P E^\dag P) \nonumber ,
\end{align}
where the sum is over Pauli operators $E$ of weight $j$ and phase $1$.

The enumerator polynomials are given by
\begin{align}
 A(x,y) &= \sum_{j=0}^n A_j x^{n-j}y^j\,, &
 B(x,y) &= \sum_{j=0}^n B_j x^{n-j}y^j \,. \nonumber
\end{align}

\begin{lemma}
For a stabilizer code, $A_j$ is $q^{2n}/|S|^2$ times the number of elements in the stabilizer subgroup $S$ that have weight $j$. Similarly, $B_j$ is $q^{n}/|S|$ times the number of elements in the normaliser of $S$ of weight $j$.
\end{lemma}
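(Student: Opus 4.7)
The plan is to start from the explicit formula $P = \frac{1}{|S|}\sum_{M\in S} M$ provided by Lemma~\ref{sumthemall}, substitute it into the two enumerator definitions, and reduce everything to two basic facts about the Pauli group: (i) $\tr(E)=0$ for every $E\in\mathcal{P}_n$ that is not a scalar multiple of $\one$, with $\tr(\one)=q^n$; and (ii) any two Pauli operators either commute or satisfy $EM = c(E,M) ME$ for a root of unity $c(E,M)$ (Lemma~\ref{MNcommuteDit}).

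For the $A_j$ identity, expand
$$
\tr(EP) \;=\; \frac{1}{|S|}\sum_{M\in S}\tr(EM)\,.
$$
Each summand vanishes unless $EM$ is a scalar multiple of $\one$, i.e.\ unless $\tau(E)=-\tau(M)\in \tau(S)$. Using the hypothesis that $S$ contains no non-trivial scalar multiple of $\one$, the restriction $\tau\colon S\to\tau(S)$ is a bijection, so there is at most one such $M$, and for that $M$ one has $EM=\lambda\one$ with $|\lambda|=1$. Hence $|\tr(EP)|^2 = q^{2n}/|S|^2$ if $\tau(E)\in\tau(S)$ and is zero otherwise. Since the sum in the definition of $A_j$ runs over $E$ of phase $1$, and each weight-$j$ element of $S$ corresponds (via $\tau$) to a unique such $E$ of weight $j$ with $\tau(E)\in\tau(S)$, the identity $A_j = (q^{2n}/|S|^2)\,s_j$ follows, where $s_j$ counts the elements of $S$ of weight $j$.

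For $B_j$, write
$$
\tr(EPE^\dag P) \;=\; \frac{1}{|S|^2}\sum_{M,N\in S}\tr(EME^\dag N)\,.
$$
Since $E$ and $M$ are Paulis, $EME^\dag = c(E,M)\,M$ where $c(E,M)$ is a root of unity, and a direct check from $EM_1M_2 = c(E,M_1)M_1 E M_2 = c(E,M_1)c(E,M_2) M_1 M_2 E$ shows that $M\mapsto c(E,M)$ is a character of $S$. Using again the trace-vanishing principle, $\tr(MN)=q^n$ precisely when $N=M^{-1}$ and is zero otherwise, so the double sum collapses to
$$
\tr(EPE^\dag P) \;=\; \frac{q^n}{|S|^2}\sum_{M\in S} c(E,M).
$$
By orthogonality of characters this inner sum equals $|S|$ when the character is trivial, that is when $E$ commutes with every $M\in S$ (equivalently $E\in \mathrm{Centraliser}(S)$), and equals $0$ otherwise. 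Therefore $\tr(EPE^\dag P)=q^n/|S|$ or $0$ according as $E$ lies in the centraliser or not, and summing over $E$ of weight $j$ and phase $1$ produces the claimed expression for $B_j$.

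The calculation itself is short; the subtleties I expect to have to manage carefully are the phase bookkeeping so that the phase-$1$ Pauli representatives counted in the enumerator definitions correspond bijectively to what the lemma calls ``elements in the stabilizer/normaliser of weight $j$'' (in particular handling the $q$-even case where $\mathcal{P}_1$ carries the extra $\pm i$ phases, cf.\ Lemma~\ref{rthpower}), and verifying that $c(E,\cdot)$ is indeed a homomorphism $S\to\mathbb{C}^\times$ from Lemma~\ref{MNcommuteDit} so that character orthogonality may be invoked.
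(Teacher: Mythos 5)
Your proof is correct. For the $A_j$ identity you take the same route as the paper: substitute $P=\frac{1}{|S|}\sum_{M\in S}M$ from Lemma~\ref{sumthemall} and apply the trace-vanishing property of non-identity Pauli operators; your version is in fact slightly more careful than the paper's, which simply asserts the dichotomy ``$E\in S$ versus $E\notin S$'' without spelling out the phase bookkeeping that you handle via the bijection $S\to\tau(S)$. For the $B_j$ identity the paper gives no proof at all (``we leave the result for $B_j$ as an exercise''), so your argument --- collapsing the double sum $\sum_{M,N}\tr(EME^\dag N)$ via $\tr(MN)=q^n\delta_{N,M^{-1}}$ and then recognising $M\mapsto c(E,M)$ as a character of $S$ whose sum vanishes unless $E$ centralises $S$ --- is a genuine completion of the exercise, and it is the natural one; the identification of the normaliser with the centraliser of $S$ in $\mathcal{P}_n$ (up to phases) that you implicitly use is standard and consistent with the paper's usage elsewhere.
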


\begin{proof}
By Lemma~\ref{sumthemall}, 
$$
P=\frac{1}{|S|} \sum_{M \in S} M.
$$
The map $\tr$ is linear and $\tr(M)=0$ unless $M=\one$ and $\tr(\one)=q^n$.

Hence, if $E \not\in S$,
$$
 \tr (E P ) \tr ( E^\dag P )=0
 $$
and if $E\in S$ then 
$$
 \tr (E P ) \tr ( E^\dag P )=q^{2n}/|S|^2.
 $$
 Thus, $A_j$ is $q^{2n}/|S|^2$ times the number of elements in the stabilizer subgroup $S$ that have weight $j$.
 
 We leave the result for $B_j$ as an exercise.

\end{proof}

The geometrical interpretation of $A_j$ for stabilizer codes is as follows. Suppose that $\mathcal{X}$ is a quantum set of lines in $\mathrm{PG}(n-k-1,q)$. Then $A_j$ is $(q-1)$ times number of hyperplanes containing $n-j$ lines of $\mathcal{X}$. 

The {\em quantum MacWilliams identity} states that
$$
q^n  B(x,y) = A (x + (q^2-1)y, x-y ),
$$
and respectively that
$$
q^n A(x,y) =  B (x + (q^2-1)y, x-y).
$$

Before proving the quantum MacWilliams identity, consider the following example.
\begin{example} (self-dual hexacode)
Consider the $[6,3,4]_4$ code $D$ generated by the matrix
$$
\left(\begin{array}{cccccc} 
1 & 0 & 0 & 1 & 1 & 1\\
0 & 1 & 0 & 1 & e & e^2 \\
0 & 0 & 1 & 1 & e^2 & e \\
\end{array} \right),
$$
where $e^2=e+1$. One can prove that the minimum distance is $4$ by checking that all $3 \times 3$ submatrices are non-singular. By verifying that the hermitian inner product (\ref{aform}) between any two rows is zero, one quickly concludes that $D=D^{\perp_h}$. Theorem~\ref{F4trick} implies that we can construct a $[\![6,0,4]\!]_2$ stabilizer code $Q(S)$ from $D$. By writing out the entries in the matrix over ${\mathbb F}_2$ and considering the ${\mathbb F}_2$ span we obtain the matrix $\mathrm{G}(S)$ for this quantum code.

 Consider the $[\![6,0,4]\!]_2$ code that can be constructed from the code $D$. 
 The code $\tau(S)$ is spanned by the generator matrix
$$
\mathrm{G}(S)= \left(\begin{array}{cccccc|cccccc}  
    1 & 0 & 0 & 1 & 1 & 1    &  0 & 0 & 0 & 0 & 0 & 0   \\
    0 & 0 & 0 & 0 & 0 & 0    & 1 & 0 & 0 & 1 & 1 & 1  \\
    0 & 1 & 0 & 1 & 0 & 1    & 0 & 0 & 0 & 0 & 1 & 1  \\
    0 & 0 & 0 & 0 & 1 & 1    & 0 & 1 & 0 & 1 & 1 & 0  \\
    0 & 0 & 1 & 1 & 1 & 0    & 0 & 0 & 0 & 0 & 1 & 1  \\
    0 & 0 & 0 & 0 & 1 & 1    & 0 & 0 & 1 & 1 & 0 & 1  
 \end{array}\right).
$$

Thus, the stabilizer subgroup has generators
$$
 \begin{array}{rccccccc}  
M_1 & = &   X &1&1   &X &X&X\\
M_2 & = &    Z & 1 &1   &Z&Z&Z\\
M_3 & = &    1 &X &1   &X&Z&Y \\
M_4 & = &    1&Z &1    &Z&Y &X \\
M_5 & = &    1 &1&X    &X &Y&Z \\
M_6 & = &    1 &1 &Z   &Z &X &Y
 \end{array}
$$
By Lemma~\ref{wdependent}, the quantum set of six lines $\mathcal{X}$ we get from the matrix $\mathrm{G}(S)$ has the property that any three lines of $\mathcal{X}$ span the whole space $\mathrm{PG}(5,2)$. Therefore, any two span a three-dimensional subspace which is contained in three hyperplanes which contain no further line of $\mathcal{X}$. Thus, there are $45$ hyperplanes which contain exactly two lines of $\mathcal{X}$. Let $\ell$ be a line of $\mathcal{X}$. There are 15 hyperplanes containing $\ell$, so counting pairs $(\ell, \pi)$ where $\ell \in \mathcal{X}$ and $\pi$ is a hyperplane containing $\ell$, we conclude that any hyperplane containing a line of $\mathcal{X}$ contains two lines of $\mathcal{X}$. 

Thus, we work out the weight distribution. For codes with $k=0$ (that is, pure states), both weight distributions coincide; this can be checked from the definition. From before, we have that $A_j$ is the $(q-1)$ times number of hyperplanes containing $n-j$ lines of $\mathcal{X}$. Thus, we have proved that the weight distribution for the quantum hexacode is
$$
(A_0, \dots, A_6) = (1,  0,  0,  0, 45,  0, 18).
$$
The corresponding enumerator polynomials are
$$
 A(x,y) = B(x,y) = x^6 + 45 x^2 y^4 + 18 y^6\,.
$$
This polynomial is indeed invariant under the quantum MacWilliams transform, since
$$
64 B(x,y)=  (x + 3y)^6 + 45 (x + 3y)^2 (x-y)^4 + 18(x-y)^6= 64(x^6 + 45 x^2 y^4 + 18 y^6 ).
$$
\end{example}

\begin{rprob}
 For stabilizer codes, $A_j$ and $B_j$ count the number of terms in the stabilizer $S$ and its normaliser $N(S)$ respectively;
 there is no such combinatorial interpretation for general quantum codes. Although $A_j$ can interpreted as the Hilbert-Schmidt norms of 
 the $j$-body correlations that appear in the code, we would like to determine what object $B_j$ is counting for non-stabilizer codes.
 \end{rprob}

We return to the proof of the quantum MacWilliams identity.

\begin{proof}[Quantum MacWilliams identity]
We will only state a proof sketch; the rather tedious combinatorial 
details can be found in~\cite{Rains1998, Huber2017}.

Let $S$ be a collection of subsystems and denote by $\tr_S$ the partial trace the systems in $S$.
Denote by $S^c$ the complement of $S$ in $\{1, \dots, n\}$.
Consider now how the partial trace $\tr_S$ followed by a ''padding`` with the identity acts on an operator $P$.
\begin{equation}\label{eq:ptrace1}
\tr_S(P) \ot \one_S
= \tr_S \Big( \frac{1}{q^n} \sum_{E} \tr(E^\dag P) E \Big) \ot \one_S
= \frac{1}{q^{n-|S|}} \sum_{\supp(E) \subseteq S^c} \tr(E^\dag P) E\,.
\end{equation}

It can be shown (c.f. Appendix A in Ref.~   \cite{Huber2017}) that this can also be written as
\begin{equation}\label{eq:ptrace2}
\tr_S(P) \ot \one_S = \int_{\substack{U(q^n) \text{ s.t.} \\ 
\supp(U) \subseteq S}} U P U^\dag dU = \frac{1}{q^{|S|}} \sum_{\supp(E) \subseteq S} E P E^\dag\,,
\end{equation} 
where the integration is over the unitarily invariant Haar measure of unitary matrices 
that act trivially on the subsystem $S^c$.
The second equality follows from the fact that any complete 
orthonormal matrix basis $\{E_\alpha\}$ containing the identity forms a unitary $1$-design~\footnote{
$t$-designs replace the integration over some compact group by a finite sum.
A unitary t-design is a set of unitaries $U_i$, $i=1,\dots, K$ acting on $\mathbb C^q$, such that
$\int_{U(D)} P_{t,t}(U) dU= \frac{1}{K} \sum_{i=1}^K P_{t,t}(U_i)$ holds for every homogeneous polynomial $P_{t,t}$
that has degree $t$ in the matrix elements of $U$ and degree $t$ in the matrix elements of $U^*$.}.

The quantum MacWilliams identity now essentially follows from equating Eqs.~\eqref{eq:ptrace1} 
and \eqref{eq:ptrace2}, summing over all subsystems of size $|S| = m$, multiplying by $P$, 
and taking the trace. This yields terms of the form $\sum \tr(E^\dag P)\tr(E P)$ and $\sum \tr(E^\dag P E P)$,
corresponding to the two types of weights $A_j$ and $B_j$.

Proceeding in this manner, Eq.~\eqref{eq:ptrace1} gives
\begin{align}
\sum_{|S|=m} \tr(\tr_S(P) \ot \one_S \cdot P) 
&= \sum_{|S|=m} \tr\Big( q^{m-n} \sum_{\supp(E) \subseteq S^c} \tr(E^\dag P) E \cdot P \Big)\nonumber\\
&= q^{m-n} \sum_{|S|=m}          \sum_{\supp(E) \subseteq S^c} \tr(E^\dag P) \tr\Big(E P\Big) \nonumber\\
&= q^{m-n} \sum_{j=0}^{n-m} \binom{n}{n-m} \binom{n-m}{j} \binom{n}{j}^{-1} A_j \nonumber \\
&= q^{m-n} \sum_{j=0}^{n-m} \binom{n-j}{m} A_j \nonumber\,.
\end{align}

Meanwhile, Eqs.~\eqref{eq:ptrace2} gives
\begin{align}
\sum_{|S|=m} \tr(\tr_S(P) \ot \one_S \cdot P)  
&=       \sum_{|S|=m} \tr\Big( q^{-m}  \sum_{\supp(E) \subseteq S} E^\dag P E \cdot P \Big) \nonumber\\
&= q^{-m}\sum_{|S|=m}          \sum_{\supp(E) \subseteq S} \tr(E^\dag P E P ) \nonumber\\
&= q^{-m}\sum_{j=0}^m         \binom{n}{m} \binom{m}{j} \binom{n}{j}^{-1} B_j \nonumber\\
&= q^{-m}\sum_{j=0}^m         \binom{n-j}{n-m} B_j \nonumber\,.
\end{align}
Thus for every operator $P$ and $0\leq m\leq n$ one has that
$$
 q^{m-n} \sum_{j=0}^{n-m} \binom{n-j}{m} A_j = q^{-m}\sum_{j=0}^m         \binom{n-j}{n-m} B_j \,.
$$

Using generating functions, in other words the weight enumerator polynomials $A(x,y)$ and $B(x,y)$, 
and Krawtchouk polynomials, this yields the MacWilliams identity
$$
q^n B(x,y) = A(x + (q^2-1)y,x-y).
$$
This ends the proof sketch.
\end{proof}

The enumerators and their weights have a couple of interesting properties:
Let $K = \dim(\mathrm{im} P)$.
\begin{itemize}
 \item[a)] The weights $A_j$ and $B_j$ are invariant under the local choice of basis and are so-called local unitary invariants (LU-invariants). That is,
 $$
    A_j(P) = A_j(P') \quad \text{and}\quad 
    B_j(P) = B_j(P') \,,
 $$
 where $P' = (U_1 \ot \dots \ot U_n)P (U_1^\dag \ot \dots \ot U_n^\dag)$ and $U_1,\dots, U_n$ are unitary $q\times q$ matrices.
            
 \item[b)] $A_0 = \dim(P)$ and $K B_j \geq A_j \geq 0$.
 \item[c)] A projection operator $P$ with $K = \dim(\mathrm{im}(P))$  
            is a code of distance $d$, if and only if it satisfies $K B_j = A_j$ for $0\leq j < d$.
 \item[d)] One can check that for codes with $K=1$, the enumerator polynomial is invariant under the quantum MacWilliams transform, and 
            one has $B(x,y) = A(x,y)$. When such a code is of stabilizer type, 
            it corresponds to a classical self-dual code.
\end{itemize}

Some comments are in order.
The weights must be LU-invariant - the properties of the code
should not depend on the way one sets up the local coordinate system for each spin particle.
The last two properties are useful to obtain weights of hypothetical codes and to apply the 
machinery of linear programming bounds~\cite{Ashikmin1997}. That is, one sets up a system
of linear equalities and inequalities in the variables $A_0,\dots, A_n$ making use of a), b),
and the quantum MacWilliams identity.

For example, it is a longstanding open problem if a (pure) code with the parameters 
$[\![24, 0, 10 ]\!]_2$ exists. 
It is known that such code must have even weights only and using linear programming, 
one can fix the weight distribution to be
\begin{align}
    [A_{10},A_{12},A_{14},...A_{24}] =[18216,156492,1147608,3736557,6248088,4399164, \nonumber \\  1038312,32778]\,.\nonumber
\end{align}
Indeed this is also the weight distribution of hypothetical $[ 24,12,10 ]$ self-dual additive code over GF(4) (see OEIS \url{http://oeis.org/A030331}).

\begin{rprob}
 Either find a quantum code with parameters $[\![24, 0, 10 ]\!]_2$, or show that no such code can exist. 
\end{rprob}

We refer to the tables by M.~Grassl \cite{codetables} for more existence results.


\begin{thebibliography}{9}



\bibitem{Aaronson2013}
S. Aaronson,
{\it Quantum Computing since Democritus},
Cambridge University Press, 2013.

\bibitem{Ashikmin1997}
A. Ashikhmin and S. Litsyn, Upper bounds on the size of quantum codes, {\it Proceedings. 1998 IEEE International Symposium on Information Theory}, Cambridge, MA, 1998, pp. 351--371.

\bibitem{Ball2012} S. Ball, On sets of vectors of a finite vector space in which every subset of basis size is a basis, {\em J. Eur. Math. Soc.}, {\bf 14} (2012) 733--748.

\bibitem{BP2021} S. Ball and P. Puig, The geometry of non-additive stabiliser codes, \url{arXiv:2107.11281}.

\bibitem{BrunLidar2013}
T. A. Brun and D. E. Lidar, 
{\it Quantum Error Correction},
Cambridge University Press, 2013.


\bibitem{CRSS1998} A. R. Calderbank, E. M. Rains, P. W. Shor, and N. Sloane, Quantum
error correction via codes over $GF(4)$, {\it IEEE Trans. Inf. Theory}, {\bf 44} (1998) 1369--1387.


 \bibitem{GGMG}
D. G. Glynn, T. A. Gulliver, J. G. Maks and M. K. Gupta,
{\it The Geometry of Additive Quantum Codes}, unpublished manuscript. (available online at 
\url{https://www.academia.edu/17980449/})


\bibitem{Gottesman2009} 
D. Gottesman, An Introduction to Quantum Error Correction and Fault-Tolerant Quantum Computation, in {\it Quantum Information Science and Its Contributions to Mathematics, Proceedings of Symposia in Applied Mathematics} {\bf 68}, pp. 13-58 (Amer. Math. Soc., Providence, Rhode Island, 2010). (available online at \url{https://arxiv.org/abs/0904.2557}).

\bibitem{GottesmanThesis}
D. Gottesman, 
{\it Stabilizer Codes and Quantum Error Correction},
PhD Thesis (1997)
(available online at \url{https://arxiv.org/abs/quant-ph/9705052}).




\bibitem{codetables}
M.~Grassl, 
Bounds on the minimum distance of linear codes and quantum codes,
(available online at \url{http://www.codetables.de}).

\bibitem{HarocheRaimond2006}
S. Haroche and J.-M. Raimond, 
{\it Exploring the Quantum: Atoms, Cavities, and Photons},
Oxford University Press, 2006.



\bibitem{Huber2019}
F. Huber and M. Grassl, 
'Quantum Codes of Maximal Distance and Highly Entangled Subspaces, {\it Quantum}, {\bf 4} 284 (2020).


\bibitem{Huber2017}
F.~Huber, C.~Eltschka, J.~Siewert and O.~Gühne, Bounds on absolutely maximally entangled states from shadow inequalities, and the quantum MacWilliams identity,
{\it J. Phys. A: Math. Theor.} {\bf 51} 175301 (2018).



\bibitem{KKKS2006} A. Ketkar, A. Klappenecker, S. Kumar and P. K. Sarvepalli, Nonbinary stabilizer codes over finite fields, {\em IEEE Trans. Inform. Theory}, {\bf 52} (2006) 4892--4914. (available online at \url{https://arxiv.org/abs/quant-ph/0508070})


\bibitem{NielsenChuang2000}
M. Nielsen and I. Chuang,
{\it Quantum Computation and Quantum Information},
Cambridge University Press, 2000.

\bibitem{Rains1998}
E.~M.~Rains,
Quantum Weight Enumerators,
{\it IEEE Trans. Inf. Theory}, {\bf 44} (1998) 1388--1394.


\bibitem{RHSS1997}
E. M. Rains, R. H. Hardin, P. W. Shor, and N. J. A. Sloane, A nonadditive quantum code,
{\it Phys. Rev. Lett.}, {\bf 79} 953 (1997).


\bibitem{Sakurai1994} 
J.~J.~Sakurai, 
{\it Modern Quantum Mechanics},
Addison-Weyley, 1994.


\bibitem{Shor1995} P. W. Shor, Scheme for reducing decoherence in quantum memory, {\it Phys. Rev. Lett.}, {\bf 77} (1996) 793--797.


\end{thebibliography}
\end{document}